\documentclass[12pt,a4paper]{article}
\usepackage{amsfonts,amssymb,amsmath,amsthm,cite}
\usepackage{graphicx}
\usepackage{times}
\setlength{\parindent}{0in}
\textheight=22.5cm
\textwidth=16.cm
\oddsidemargin=0cm
\evensidemargin=\oddsidemargin
\topmargin=0.6cm
\topskip=0cm
\headheight=0.6cm
\headsep=0.2cm
\DeclareMathOperator{\Id}{Id}              
\newtheorem{assumption}{Assumption}[section]
\newtheorem{theorem}[assumption]{Theorem}
\newtheorem{corollary}[assumption]{Corollary}
\newtheorem{example}[assumption]{Example}
\newtheorem{lemma}[assumption]{Lemma}
\newtheorem{definition}[assumption]{Definition}
\newtheorem{prop}[assumption]{Proposition}
\newtheorem{remark}[assumption]{Remark}

\renewcommand{\a}{\alpha}                    
\renewcommand{\b}{\beta}                    
\newcommand{\B}{\mathcal{B}}              
\newcommand{\CC}{\mathcal{C}}              
\newcommand{\CA}{\mathcal{A}}
\newcommand{\CB}{\mathcal{B}}
\newcommand{\CE}{\mathcal{E}}
\newcommand{\CH}{\mathcal{H}}
\newcommand{\CP}{\mathcal{P}}
\newcommand{\CF}{\mathcal{F}}
\newcommand{\CS}{\mathcal{S}}          
\newcommand{\acts}{\triangleright}

\newcommand{\<}{\langle}       
     
\newcommand{\F}{\mathcal{F}}          



\newcommand{\cop}{\triangle}                 
\newcommand{\ts}{\otimes}                 

\newcommand{\bq}{\begin{eqnarray}}
\newcommand{\eq}{\end{eqnarray}}
\newcommand{\be}{\begin{equation}}
\newcommand{\ee}{\end{equation}}
\newcommand{\enq}{\nonumber\end{eqnarray}}

\newcommand{\ks}{*_\kappa}
\def\<#1,#2>{\langle#1\,,\,#2\rangle}      
\newcommand{\twobytwo}[4]{\begin{pmatrix}#1 & #2 \\ #3 & #4
                            \end{pmatrix}} 


\title{Star product realizations of $\kappa$-Minkowski space}
\date{}

\author{
B.Durhuus\thanks{durhuus@math.ku.dk} \\ Department of Mathematical Sciences,\\ 
University of Copenhagen,\\ Universitetsparken 5, DK-2100 Copenhagen,
Denmark
\and
A.Sitarz\thanks{sitarz@if.uj.edu.pl} 
\\ Institute of Physics,\\ Jagiellonian University,\\
Reymonta 4, 30-059 Krak\'ow, Poland
}
\begin{document}
\maketitle

\begin{abstract}
We define a family of star products and involutions associated with $\kappa$-Minkowski space. 
Applying corresponding quantization maps we show that these star products  
restricted to a certain space of Schwartz functions have isomorphic
Banach algebra completions. For two particular star products it is demonstrated 
that they can be extended  to a class of polynomially bounded smooth functions allowing 
a realization of the full Hopf algebra structure on $\kappa$-Minkowski space. Furthermore, 
we give an explicit realization of the action of the $\kappa$-Poincar\'e algebra 
as an involutive Hopf algebra on this representation of $\kappa$-Minkowski space 
and initiate a study of its properties.  
\end{abstract}

{\bf MSC--2000}: 46L65, 53D55, 16T05
\section{Introduction}\label{Intr}
The $\kappa$-deformation of Minkowski space was originally proposed in
\cite{MajRue} as a Hopf algebra whose underlying algebra is
the enveloping algebra of the Lie algebra with generators 
$x_0,\dots, x_{d-1}$ fulfilling 

\begin{align}
\label{kappa_Mink}
[x_0,\,x_i]= \tfrac{i}{\kappa} \,x_i, \quad [x_i,\,x_j]=0, \quad i,j=1,\dots,d-1,
\end{align}

where $\kappa\neq 0$ can be viewed as a deformation parameter, since
formally, in the limit $\kappa\to\infty$ one obtains the commutative 
coordinate algebra of Minkowski space. Of course, to single out this 
limit as the Minkowski space requires some additional structure involving 
the action of the Poincar{\'e} group, or rather a deformed version 
thereof, for finite $\kappa$ \cite{kappafirst}. This was, originally, how the algebra was 
conceived and we shall return to this issue in Section~\ref{sec:4}. For the moment 
we concentrate on \eqref{kappa_Mink}. 

The first object of this paper is to discuss a class of star-products
on $\mathbb R^d$ and associated quantization maps based on the harmonic
analysis on the Lie group associated with \eqref{kappa_Mink}. The
motivation originates from a similar approach to the standard Weyl
quantization map based on its relation to the Heisenberg algebra of quantum
mechanics. For the purpose of later reference let us briefly recall
the main steps in this construction. The Heisenberg algebra associated
to a particle moving on the real line is the three-dimensional Lie
algebra defined by the relation
$$
[P,Q] = iC\,,
$$
where $C$ is a central element. The real form of this algebra with
basis $iP, iQ, iC$ has a faithful
representation $\sigma$ in terms of strictly upper triangular matrices:
\be\label{sigmarep}
\sigma(i(aP + bQ + cC)) = 
\left(\begin{array}{rrr} 0 & a & c\\ 0 & 0 & b\\ 0 & 0 & 0\end{array}\right)\,.
\ee
The connected and simply connected Lie group of the algebra is by
definition the \emph{Heisenberg group}, which we denote by $\CH
eis$. It is the group of upper triangular matrices of the form
\be\label{T}
T(a,b,c) = \left(\begin{array}{ccc} 1 & a & c +\frac 12 ab\\ 0 & 1 & b~~~\\ 0 & 0 & 1~~~\end{array}\right)\,,
\ee
which is obtained by exponentiation of \eqref{sigmarep}. The group
operations, expressed in this parametrization, are seen to be
\bq
T(a,b,c) T(a',b',c') &=& T(a+a',b+b',c+c'+\frac 12(ab'-a'b)), \\
T(a,b,c)^{-1} &=& T(-a,-b,-c) \,.
\eq
 It follows that $\CH eis$ is a unimodular group with Haar
measure equal to $dadbdc$. Thus the group algebra of $\CH eis$ can be
identified with ${\rm L}^1(\mathbb R^3)$ via the parametrization \eqref{T}. 
Let $\circ$ denote the convolution product on the group algebra.

According to the Stone-von Neumann theorem \cite{StoNeu} the non-trivial irreducible
unitary representations of $\CH eis$ are labelled by the value
$\hbar\neq 0$ of the central element $C$. Fixing $\hbar$, the representation $\pi$ can
be expressed in the form
\bq\label{heisrep}
\pi(T(a,b,c)) &=& e^{\frac i2 \hbar c} U(a,b)\,, \\
(U(a,b)\psi)(x) &=& e^{\frac i2 \hbar ab}e^{ibx}\psi(x-\hbar a)\,,
\eq
for $\psi \in{\rm L}^2(\mathbb R)$.
The corresponding representation of the group algebra, also denoted by
$\pi$, is then given by 
\be\label{red}
\pi(F) = \int_{\mathbb R^3} dadbdc F(a,b,c) \pi(T(a,b,c)) =
\int_{\mathbb R^2} F^\sharp (a,b)\,U(a,b)\,,
\ee 
where 
$$
F^\sharp(a,b) = \int dc F(a,b,c)e^{-i\hbar c}\,.
$$
Clearly, $F\to F^\sharp$ maps ${\rm L}^1(\mathbb R^3)$ onto  ${\rm L}^1(\mathbb R^2)$ 
and a simple calculation yields
\be\label{twist}
(F\circ G)^\sharp(a,b) = \int_{\mathbb R^2} da' db'
F^\sharp(a,b)G^\sharp(a-a',b-b')e^{\frac i2(ab'-a'b)}\,,
\ee
where the last expression is a ``twisted'' convolution
product on $\mathbb R^2$ that we shall denote by $F^\sharp\hat\circ
G^\sharp$, and where we have set $\hbar = 1$ for the sake of simplicity. 

According to \eqref{red} we may write $\pi(f)$ instead of $\pi(F)$
when $f=F^\sharp$. With this notation the Weyl quantization map $W$ is defined as
\be\label{weyl}
W(f) = \pi(\F f)\,,
\ee
for $f\in {\rm L}^1(\mathbb R^2)\cap \F^{-1}({\rm L}^1(\mathbb R^2))$,
where  $\F$ denotes the Fourier transform on $\mathbb R^2$,
\be\label{fourier}
 (\F f)(a,b) = \frac{1}{2\pi} \int d\alpha d\beta \, f(\alpha,\beta)
 e^{-i(a\alpha+b\beta)}. 
\ee
Using  
$$
\pi(F\circ G) = \pi(F)\pi(G)\qquad\mbox{for}\; F,G\in {\rm L}^1(\mathbb R^3)\,,
$$
we obtain from \eqref{red} and \eqref{twist} that 
$$
W(f {*_0} g) = W(f) W(g)\,,
$$
where $f$ and $g$ are functions on $\mathbb R^2$ and their \emph{Weyl-product} is given by
\be\label{weylprod}
f{*_0}g(\alpha,\beta) = \F^{-1}(\F f)\hat\circ (\F g))\,,
\ee
which clearly is well defined when $f$ and $g$ are Schwartz functions.

From this definition the familiar expressions (see e.g. \cite{victor}) for
the Weyl product can easily be derived. Likewise, the Weyl operators
$W(f)$ can be seen to be integral operators for appropriate functions
$f$. In particular, it can be shown that  $W(f)$ is of Hilbert-Schmidt
type if and only if $f$ is square integrable, and in this case 
\be\label{HSweyl}
\Vert W(f)\Vert^2_2 = 2\pi \int d\alpha d\beta |f(\alpha,\beta)|^2\,,
\ee
where $\Vert\cdot\Vert$ denotes the Hilbert-Schmidt norm.
It follows that the Weyl product can be extended to square integrable
functions and $W$ can be extended to an isomorphism between the
resulting algebra and the Hilbert-Schmidt operators on 
${\rm L}^2(\mathbb R)$. 

It is worth emphasizing that the construction outlined here depends
on the chosen para\-metrization \eqref{T}. An alternative
parametrization preserving the invariant measure is, e.g., 
$$(a,b,c)\to T(a,b,c+\xi ab),$$ 
where $\xi$ is a real constant.  
In this case,  one obtains a quantization map $W_\xi$
and a star product $*_\xi$ that are related to $*_0$ by
$$
W_\xi(f) = W(\Psi_\xi f)\,,\qquad \Psi_\xi (f*_{\xi}g)) = (\Psi_\xi f)*_0(\Psi_\xi g)\,,
$$
where $\Psi_\xi$ is defined by
$$
(\Psi_\xi f)(\a,\b) = e^{i\xi \a\b}f(\a,\b)\,.
$$
It follows that $\Psi_\xi$ is an isomorphism of star-algebras of
Schwartz functions and, moreover,
since both $\F$ and multiplication by a phase factor preserve the norm
in ${\rm L}^2(\mathbb R^2)$ we have that \eqref{HSweyl} is also fulfilled with $W$
replaced by $W_\xi$. In particular, one can verify that $W_{-\frac{1}{2}}$ 
is the so-called Kohn-Nirenberg quantization map, in which case
$W_{-\frac 12}(f)$ is the pseudo-differential operator with symbol
$f$. The Weyl map is singled our among the maps $W_\xi$ by the property
$$
W(f)^* = W(\bar f)\,,
$$
where $\bar f$ is the complex conjugate of $f$.

The purpose of this paper to is to develop an approach similar to the
preceding to quantization maps associated with $\kappa$-Minkowski space
for $d=2$, which we denote by $M_\kappa$. In Section~\ref{sec:2} we introduce the $\kappa$-Minkowski
group $G$, analogous to ${\CH}eis$, and via harmonic analysis on
$G$ we define a family of products, called \emph{star products}, and involutions for a
class $\CB$ of Schwartz functions on $\mathbb R^2$. Explicit
expressions for the star products and operator kernels
are obtained which are used to show that those involutive algebras have
natural isomorphic Banach algebra completions. In Section~\ref{sec:3}
two particular star products associated to the left and right
invariant Haar measures on $G$ are discussed. It is shown that they have 
natural extensions to a certain subalgebra  $\CC$ of the multiplier 
algebra of $\CB$ consisting of smooth functions of
polynomial growth. Moreover, it is shown that the resulting algebra
has a Hopf star algebra structure 
furnishing a star product representation of $\kappa$-Minkowski space. 
In Section~\ref{sec:4} we show how to represent
the  action of the  $\kappa$-Poincar{\'e} algebra ${\CP}_\kappa$ on
$\kappa$-Minkowski space in this particular realization as well as on
the subalgebra $\CB$. On the latter we show that the Lebesgue integral
is a twisted trace, invariant under the action of ${\CP}_\kappa$.
Finally, Section~\ref{sec:5} contains some concluding remarks and
a few technical details are collected in an appendix.

\section{Quantizations and star products}\label{sec:2}
\subsection{The right-invariant case}

In the following we restrict attention to $d=2$ in which case
the Lie algebra defined by \eqref{kappa_Mink} is the
unique noncommutative Lie algebra of dimension $2$ and 
$\kappa$-Minkowski space $M_\kappa$ is its universal enveloping algebra.  
We set $x=x_1$ and $t=\kappa x_0$ and consider the real form of the Lie algebra with
generators $it,ix$  fulfilling 

\be [t,x] = ix \label{kappa_1}. \ee
It  has a faithful $2$-dimensional representation $\rho$ given by

\be
\rho(it) = \twobytwo{-1}{0}{0}{0}, \;\;\; \rho(ix) = \twobytwo{0}{1}{0}{0},
\ee

and the corresponding connected and simply connected Lie group is the
group $G$ of $2\times 2$-matrices of the form 

\be\label{ab-param}
 S(a,b) = \twobytwo{e^{-a}}{b}{0}{1},\quad a,b\in \mathbb R, 
\ee
 obtained by exponentiating $\rho$: 
\be 
e^{i\rho(a t +  b' x)} = 
\twobytwo{e^{-a}}{\frac{1-e^{-a}}{a} b'}{0}{1}. 
\ee

The group operations written in the $(a,b)$ coordinates become

\be\label{group1}
S(a_1,b_1) S(a_2,b_2) = S(a_1+a_2, b_1 + e^{-a_1} b_2)\,, 
\qquad
S(a,b)^{-1} = S(-a, -e^a b)\,. 
\ee 

An immediate consequence is  

\begin{lemma}\label{haar}
The Lebesgue measure $da\, db$ is right invariant whereas 
the measure $ e^a da \, db$ is left-invariant on $G$. In particular,
$G$ is not unimodular.
\end{lemma}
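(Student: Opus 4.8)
The plan is to verify both invariance claims by an elementary change of variables based on the explicit group law \eqref{group1}, and then to deduce non-unimodularity at once.

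First I would establish right invariance. Fix $S(a_0,b_0)\in G$ and consider the right translation $R\colon S(a,b)\mapsto S(a,b)S(a_0,b_0)$, which by \eqref{group1} reads, in the $(a,b)$ coordinates of \eqref{ab-param}, as $(a,b)\mapsto(a+a_0,\,b+e^{-a}b_0)$. Its Jacobian matrix is lower triangular with both diagonal entries equal to $1$, hence of determinant $1$; therefore $\int f(S(a,b)S(a_0,b_0))\,da\,db=\int f(S(a,b))\,da\,db$ for suitable $f$, i.e.\ $da\,db$ is right invariant.

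Next I would treat left translation $L\colon S(a,b)\mapsto S(a_0,b_0)S(a,b)$, which in coordinates is $(a,b)\mapsto(a_0+a,\,b_0+e^{-a_0}b)$, with constant Jacobian determinant $e^{-a_0}\neq 1$. So $da\,db$ fails to be left invariant, and I would look for a density $f(a,b)$ making $f(a,b)\,da\,db$ left invariant: performing the substitution $(a,b)=(a'-a_0,\,e^{a_0}(b'-b_0))$ produces a compensating factor $e^{a_0}$ from $da\,db=e^{a_0}\,da'\,db'$, and the choice $f(a,b)=e^{a}$ cancels it exactly because $e^{a'-a_0}\cdot e^{a_0}=e^{a'}$. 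Hence $e^{a}\,da\,db$ is left invariant. Alternatively, one can observe that the inversion $S(a,b)\mapsto S(a,b)^{-1}=S(-a,-e^ab)$ has Jacobian determinant of modulus $e^{a}$, and that pushing the right Haar measure $da\,db$ forward through the inversion automatically yields a left Haar measure, which is precisely $e^{a}\,da\,db$.

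Finally, since $da\,db$ and $e^{a}\,da\,db$ are a right and a left Haar measure respectively, and are not related by multiplication by a positive constant, $G$ is not unimodular; equivalently, its modular function is $S(a,b)\mapsto e^{a}$ up to the usual normalization convention. I do not expect any real obstacle here: the computation is routine, and the only points requiring a little care are keeping track of which coordinate carries the extra exponential weight and the direction in which the substitutions are performed.
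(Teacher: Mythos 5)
Your proof is correct, and it is precisely the routine Jacobian computation from the group law \eqref{group1} that the paper leaves implicit by calling the lemma "an immediate consequence." Both the direct verification and the alternative via pushing the right Haar measure through inversion are sound, and the conclusion of non-unimodularity follows as you state.
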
 

Let $\CA$ denote the convolution algebra of $G$ with respect to
the right invariant measure. Identifying functions on $G$ with
functions on $\mathbb R^2$ by the parametrization \eqref{ab-param} then
 $\CA$ is the involutive Banach algebra consisting of integrable functions on $\mathbb R^2$ with
product $\hat{*}$ and involution $^\dagger$ given by
\bq\label{starhat}
(f\,\hat{*}\, g) (a,b) &=& \int da'db'f(a-a',b-e^{a'-a}b')g(a',b')\,,\label{conv}\\
f^\dagger(a,b)&=& e^a\bar f(-a,-e^ab)\,,\label{dagger}
\eq
where $f,g\in\CA$ and $\bar f$ is the complex conjugate of $f$. 
If $\pi$ is a unitary representation  of $G$ (always assumed to be strongly
continuous in the following) it is well known (see e.g. \cite{GKP}) that $\pi$ gives rise to a
representation, also denoted by $\pi$, of $\CA$ by setting
\be
\pi(f) = \int da db \, f(a,b) \pi(S(a,b)). 
\ee
Thus, we have 
\be\label{starrep} 
\pi(f\,\hat{*}\, g) = \pi(f) \pi(g)\quad\mbox{and}\quad
\pi(f^\dagger) = \pi(f)^*\,. 
\ee

Following the same procedure as described for the Weyl quantization above we 
define the Weyl map $W_\pi$ associated with the representation $\pi$ by 
$$
W_\pi(f) = \pi(\F f)\;\;\mbox{for}\; f\in {\rm L}^1(\mathbb R^2)\cap \F^{-1}({\rm L}^1(\mathbb R^2))\,,
$$
where $\F$ denotes the Fourier transform \eqref{fourier} on
$\mathbb R^2$.
It then follows from \eqref{starrep} that

$$
W_\pi(f\,*\,g) = W_\pi(f)W_\pi(g)\quad\mbox{and}\quad W_\pi(f^*) =
W_\pi(f)^*$$

where the $*$-product and the $^*$-involution are
defined by

\be\label{star1}
 f\,*\,g =  \F^{-1} \left((\F f) \hat{*} (\F g) \right)\,.
\ee 
and

\be\label{inv1}
 f^* = \F^{-1}(\F(f)^\dagger)\,,
\ee
respectively. As in the case of the standard Moyal product, one needs to exercise care about the domain of
definition for the right-hand sides of \eqref{star1} and
\eqref{inv1}. I this section we restrict our attention to the subset $\CB$
of Schwartz functions introduced in the following definition, while an extension 
to a class of polynomially bounded functions will be discussed in subsequent sections.

\begin{definition} Let $\CS_c$ denote the space of Schwartz functions on $\mathbb
R^2$ with compact support in the first variable, i.e., 
$\mbox{supp}(f) \subseteq K\times\mathbb R$, where $K\subseteq \mathbb R$ is
compact. Then we define $\CB= \CF(\CS_c) = \CF^{-1}(\CS_c)$.
\end{definition}
\begin{prop}\label{alg1}
 If $f,g\in \CB$ then $f^*$ and $f*g$ also belong to $\CB$ and are given by
\be\label{star2}
f*g(\alpha,\beta) =  \frac{1}{2\pi} \int  dv \int d\alpha' \, f(\alpha+\alpha',\beta) g(\alpha, e^{-v} \beta) e^{-i \alpha' v},
\ee
and
\be\label{inv2}
f^*(\alpha,\beta) = \frac{1}{2\pi} \int dv\int d\alpha'\, 
\bar{f}(\alpha+\alpha', e^{-v} \beta) e^{-i\alpha'v}. 
\ee
respectively.
\end{prop}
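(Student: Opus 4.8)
The plan is to compute the right-hand sides of \eqref{star1} and \eqref{inv1} explicitly by unwinding the definitions of $\F$, $\hat{*}$, and $^\dagger$, and then to read off from the resulting formulas that the outputs lie in $\CB$. First I would take $f,g\in\CB$, write $f=\F\phi$, $g=\F\psi$ with $\phi,\psi\in\CS_c$, and substitute into \eqref{starhat}; the key structural point is that the convolution in the second variable is an ordinary convolution (up to the $e^{a'-a}$ rescaling), so the Fourier transform in that variable turns it into a product, while the first variable carries the nontrivial $a$-dependence through the factor $e^{a'-a}$ multiplying $b'$. Concretely, after applying $\F^{-1}$ in $a$ and $b$ to $(\F f)\,\hat{*}\,(\F g)$, I expect the $b$-integrations to collapse via the Fourier inversion formula, leaving a single integral over the translation parameter in the first slot and one auxiliary integration; renaming variables (and tracking the $1/2\pi$ normalizations in \eqref{fourier}) should produce exactly \eqref{star2}. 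The same bookkeeping applied to \eqref{dagger} via \eqref{inv1}, using that complex conjugation intertwines with $\F$ and $\F^{-1}$ in the standard way and that the factor $e^a$ combines with the scaling $b\mapsto -e^a b$, should yield \eqref{inv2}.

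The second, more delicate task is to verify that $f*g$ and $f^*$ actually belong to $\CB$, i.e. that they are inverse Fourier transforms of Schwartz functions with compact support in the first variable. For this I would argue on the "group algebra side": $\CA$ is a Banach $*$-algebra under $\hat{*}$ and $^\dagger$, and the relevant observation is that the subspace $\F(\CB)=\CS_c$ is closed under $\hat{*}$ and $^\dagger$. Closure under $^\dagger$ is immediate from \eqref{dagger}, since $f\mapsto e^a\bar f(-a,-e^a b)$ clearly preserves both the Schwartz property and compact support in $a$ (the map $(a,b)\mapsto(-a,-e^ab)$ is a diffeomorphism that is proper in the first variable). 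Closure under $\hat{*}$ is the crux: given $\phi,\psi\in\CS_c$ supported in $K_1\times\R$ and $K_2\times\R$ respectively, the convolution integral in \eqref{conv} forces $a-a'\in K_1$ and $a'\in K_2$, so $a\in K_1+K_2$, which is again compact — this gives the support condition. Schwartz decay then follows because the integrand depends smoothly on all parameters and, on the compact $a$-support, the rescaling $b'\mapsto e^{a'-a}b'$ has derivatives bounded uniformly, so differentiating under the integral sign and estimating with the rapid decay of $\phi,\psi$ in the second variable gives the required seminorm bounds.

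The main obstacle I anticipate is precisely this last estimate: controlling the Schwartz seminorms of $\phi\,\hat{*}\,\psi$ uniformly, because the change of variables $b'\mapsto e^{a'-a}b'$ mixes the first and second variables and its Jacobian $e^{a-a'}$ is unbounded as $a-a'\to+\infty$ — it is only the compact support in the first variable (so that $a-a'$ ranges over a compact set) that saves the argument. I would therefore be careful to perform the substitution $b'\mapsto e^{a-a'}b'$ first, exploiting that $a'\in K_2$ and $a\in K_1+K_2$, before estimating, so that all exponential factors are bounded. Once the explicit formulas \eqref{star2} and \eqref{inv2} are in hand one could alternatively verify membership in $\CB$ directly from them — the inner $\alpha'$-integral against $e^{-i\alpha' v}$ is a Fourier transform in a compactly-supported-in-$\alpha$ variable, hence Schwartz in $v$, and the outer $v$-integral against the rapidly decaying $g(\alpha,e^{-v}\beta)$ converges with all derivatives — but I expect the group-algebra argument to be cleaner for the membership claim and the explicit computation to be the natural route to the stated formulas.
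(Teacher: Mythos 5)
Your proposal is correct and follows essentially the same route as the paper: membership in $\CB$ is obtained from the closure of $\CS_c$ under $\hat{*}$ and $^\dagger$ (where the paper leaves as ``easily seen'' exactly the support and seminorm estimates you spell out, including the point that compactness of the $\alpha$-support tames the factor $e^{a-a'}$), and the explicit formulas come from the same Fourier computation, which the paper merely runs in the opposite direction by verifying that the Fourier transform of the candidate right-hand side equals $\F f\,\hat{*}\,\F g$ via Plancherel in the second variable.
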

\begin{proof}
Invariance of $\F^{-1}(\CS_c)$ under the $*$-product and the
$^*$-involution follows from the fact that $\CS_c$ is an
involutive subalgebra of the convolution algebra $\CA$ as is easily
seen from \eqref{starhat} and \eqref{dagger}. 

In order to establish \eqref{star2}, note that its right-hand side equals
\be\label{star3}
 \frac{1}{\sqrt{2\pi}} \int dv \tilde f(v,\beta)g(\alpha,e^{-v}\beta)e^{i\alpha v}\,,
\ee
where $\tilde f$ denotes the Fourier transform of $f$ w.r.t. the first
variable
$$
\tilde f(a,\beta) = \frac{1}{\sqrt{2\pi}} \int d\alpha \, f(\alpha,\beta)
e^{-ia\alpha}. 
$$
  Note that the
integrand in \eqref{star3} is a Schwartz function of $v,\alpha, \beta$
with compact support in $v$. Thus it suffices to show that the Fourier transform of
\eqref{star3} w.\,r.\,t. $\alpha,\beta$ equals $\F f\hat *\F g$. This follows 
from a straightforward calculation using the Plancherel theorem on 
the $\beta$-integral.

Concerning \eqref{inv2} we note similarly that the right-hand side equals 
\be\label{inv3}
 \frac{1}{\sqrt{2\pi}}\int dv \bar{\tilde f}(-v,e^{-v}\beta)e^{i\alpha v}\,.
\ee
Here, we note that the integrand is a Schwartz function of $v,\beta,$
such that Fourier transforming \eqref{inv3} w\,.r\,.t\,. $\beta$ gives
\be
 \frac{1}{\sqrt{2\pi}}\int dv \overline{\F f}(-v,-e^{v}b) e^v
 e^{i\alpha v} =  \frac{1}{\sqrt{2\pi}}\int dv ({\F f})^\dagger(v,b) e^{i\alpha v}\,.
\ee 
Hence, by Fourier inversion and \eqref{inv1} we conclude that the Fourier transform of the
right-hand side of \eqref{inv2} equals $\CF(f^*)$.  This proves \eqref{inv2}.
\end{proof}

Note that associativity of the above defined star product on
$\CB$  is an immediate consequence of associativity
of the convolution product on $\CA$. Likewise, $f\to f^*$  is an
involution on $\CB$, since $f\to f^\dagger$ is an
involution on $\CA$. Thus we have

\begin{corollary}\label{staralg}
$\CB$ equipped with the
$*$-product and $^*$-involution defined by \eqref{star2} and \eqref{inv2} is an involutive algebra.
\end{corollary}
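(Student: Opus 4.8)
The plan is to obtain every algebra axiom for $(\CB,*,{}^*)$ by transporting the corresponding property of the involutive Banach algebra $(\CA,\hat{*},{}^\dagger)$ through the Fourier transform. Proposition~\ref{alg1} supplies the two inputs that make this work: first, $\CB$ is closed under $*$ and ${}^*$; and second, $\CS_c$ is an involutive subalgebra of $\CA$ (this being precisely what is verified in the proof of Proposition~\ref{alg1}), so that $\F$ restricts to a linear bijection $\F\colon\CB\to\CS_c$ which, by the very definitions \eqref{star1} and \eqref{inv1}, satisfies $\F(f*g)=(\F f)\,\hat{*}\,(\F g)$ and $\F(f^*)=(\F f)^{\dagger}$ for all $f,g\in\CB$.

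Granting this, associativity is immediate: for $f,g,h\in\CB$ one has
\[
(f*g)*h=\F^{-1}\bigl((\F f\,\hat{*}\,\F g)\,\hat{*}\,\F h\bigr)=\F^{-1}\bigl(\F f\,\hat{*}\,(\F g\,\hat{*}\,\F h)\bigr)=f*(g*h),
\]
where the middle equality is associativity of $\hat{*}$ on $\CA$, applicable since $\F g\,\hat{*}\,\F h\in\CS_c\subseteq\CA$. Bilinearity of $*$ and the distributive laws are transported in the same fashion, from bilinearity of $\hat{*}$ together with linearity of $\F$ and $\F^{-1}$.

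For the involution the reasoning is identical. Conjugate-linearity of $f\mapsto f^*$, the relation $(f^*)^*=f$, and $(f*g)^*=g^**f^*$ follow respectively from conjugate-linearity of ${}^\dagger$, from $(\phi^{\dagger})^{\dagger}=\phi$, and from $(\phi\,\hat{*}\,\psi)^{\dagger}=\psi^{\dagger}\,\hat{*}\,\phi^{\dagger}$, all of which hold because $(\CA,\hat{*},{}^\dagger)$ is by construction an involutive algebra; one simply applies $\F^{-1}$ to these identities evaluated at $\phi=\F f$, $\psi=\F g$. Combined with the closure statement of Proposition~\ref{alg1}, this establishes the corollary.

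I do not anticipate any genuine obstacle, since the only non-formal point is the one already handled in proving Proposition~\ref{alg1}: that $\CS_c$ is actually closed under $\hat{*}$ and ${}^\dagger$, so that the convolution-algebra identities — valid a priori only on $\CA={\rm L}^1(\mathbb R^2)$ — may legitimately be invoked for elements of $\CS_c$. Here the support in the first variable of $f\,\hat{*}\,g$ lies in $\supp_1 f+\supp_1 g$ by \eqref{conv} and that of $f^{\dagger}$ equals $-\supp_1 f$ by \eqref{dagger}, both compact, while the Schwartz property in the remaining variables is routine; once this bookkeeping is in place the corollary follows with no further work.
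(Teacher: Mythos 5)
Your proposal is correct and follows essentially the same route as the paper: the authors likewise obtain associativity of $*$ from associativity of the convolution product $\hat{*}$ on $\CA$, and the involution property of $f\mapsto f^*$ from that of $f\mapsto f^\dagger$, with closure of $\CB$ under both operations supplied by Proposition~\ref{alg1} via the fact that $\CS_c$ is an involutive subalgebra of $\CA$. Your explicit support bookkeeping for $\hat{*}$ and $^\dagger$ is exactly the ``easily seen from \eqref{starhat} and \eqref{dagger}'' step the paper leaves to the reader.
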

   
It should be noted that the star product and the involution as defined 
by \eqref{star1} and \eqref{inv1} are independent of the choice of representation $\pi$
of $G$, while the quantization map $W_\pi$, that we proceed to discuss
next, is indeed representation dependent. $G$ being isomorphic to the
identity component of the group of affine transformations on $\mathbb R$, its representation
theory is well known \cite{GelNaj}. In particular, there is a close relationship
to the representation theory of the Heisenberg group \cite{Agostini}. 
The basic result we shall use is the following, the proof of which is included for 
the sake of completeness (see also \cite{DabPia1}).  

\begin{prop}\label{rep1}
$G$ has exactly two non-trivial unitary
representations $\pi_\pm$. Their action on the generators $t,x$ is
given by
\bq\label{Pi+-}
\pi_+(t) &=& -i\frac{d}{ds}\,,\qquad \pi_+(x) = e^{-s}\label{rep+}\,,\\
\pi_-(t) &=& -i\frac{d}{ds}\,,\qquad \pi_-(x) = -e^{-s}\label{rep-}\,,
\eq
as self-adjoint operators on ${\rm L}^2(\mathbb R)$.

All other irreducible unitary representations are one-dimensional of
the form $\pi_c(x)=0$ and $\pi_c(t)=c$ for some $c\in\mathbb R$.
\end{prop}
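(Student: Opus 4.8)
The plan is to classify the irreducible unitary representations of $G$ by combining the structure of $G$ as a semidirect product with the Mackey machine (the imprimitivity theorem / Mackey's theory of induced representations), exactly as one does for the $ax+b$ group. Write $S(a,b)$ as in \eqref{ab-param}; then from \eqref{group1} the subgroup $N=\{S(0,b):b\in\mathbb R\}$ is a normal abelian subgroup isomorphic to $\mathbb R$, and $H=\{S(a,0):a\in\mathbb R\}\cong\mathbb R$ acts on $N$ by $S(a,0)S(0,b)S(a,0)^{-1}=S(0,e^{-a}b)$. Thus $G=N\rtimes H$ with $H$ acting on $N$ by the one-parameter group of dilations. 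Dualizing, $H$ acts on the Pontryagin dual $\widehat N\cong\mathbb R$ (with dual variable $b^{*}$) again by dilation: the character $b\mapsto e^{i\xi b}$ is sent to $b\mapsto e^{i e^{-a}\xi b}$, i.e.\ $\xi\mapsto e^{-a}\xi$. Hence the orbits of $H$ in $\widehat N$ are exactly three: $\{\xi>0\}$, $\{\xi<0\}$, and $\{0\}$.

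Next I would run the Mackey analysis orbit by orbit. For each orbit one picks a base point $\xi_{0}$, computes its stabilizer $H_{\xi_{0}}$ in $H$, and then the irreducibles of $G$ associated to that orbit are induced from $H_{\xi_{0}}$ after choosing an irreducible of the little group. For $\xi_{0}=1$ (orbit $\{\xi>0\}$) the stabilizer in $H$ is trivial, since $e^{-a}=1$ forces $a=0$; the little group is therefore trivial and there is a unique representation induced from the orbit, realized on $L^{2}$ of the orbit, i.e.\ on $L^{2}(\mathbb R_{+},d\xi/\xi)$, equivalently on $L^{2}(\mathbb R)$ after the substitution $\xi=e^{-s}$. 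A direct computation of the induced action then gives that $N$ acts by multiplication by $e^{ie^{-s}b}$ and $H$ by translation in $s$, whose infinitesimal generators are precisely $\pi_{+}(x)=e^{-s}$ and $\pi_{+}(t)=-i\,d/ds$ as in \eqref{rep+}. The orbit $\{\xi<0\}$ is handled identically with base point $\xi_{0}=-1$, yielding $\pi_{-}$ with $\pi_{-}(x)=-e^{-s}$ as in \eqref{rep-}. The fixed-point orbit $\{0\}$ corresponds to representations that are trivial on $N$, hence factor through $G/N\cong H\cong\mathbb R$, whose irreducible unitary representations are the one-dimensional characters $a\mapsto e^{ica}$; these are the $\pi_{c}$ with $\pi_{c}(x)=0$, $\pi_{c}(t)=c$. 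By Mackey's theorem this list is exhaustive and the representations on distinct orbits (and distinct points of the fixed-point set) are inequivalent, so $\pi_{+}$ and $\pi_{-}$ are the only non-trivial ones.

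To keep the argument self-contained, rather than quoting the full Mackey machinery I would instead give the short hands-on version: given an irreducible unitary $\pi$, restrict to $N\cong\mathbb R$ and use the spectral theorem / SNAG theorem to get a projection-valued measure $E$ on $\widehat N=\mathbb R$ with $\pi(S(0,b))=\int e^{i\xi b}\,dE(\xi)$; the covariance relation $\pi(S(a,0))\pi(S(0,b))\pi(S(a,0))^{-1}=\pi(S(0,e^{-a}b))$ forces $\pi(S(a,0))E(\Omega)\pi(S(a,0))^{-1}=E(e^{a}\Omega)$, so the support of $E$ is a union of dilation orbits; irreducibility then pins it to a single orbit, and on $\{\xi>0\}$ or $\{\xi<0\}$ one reconstructs $\pi$ on $L^{2}(\mathbb R)$ as above by diagonalizing $E$, while on $\{0\}$ one gets the characters. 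The main obstacle is the bookkeeping in this reconstruction step: one must check that the unitary $\pi(S(a,0))$ is determined up to an irrelevant multiplier of modulus one (there is no nontrivial cohomology here because the little group is trivial and $\mathbb R$ is simply connected), that the resulting operators are essentially self-adjoint on the natural domain of Schwartz functions so that the formulas \eqref{Pi+-} make sense as self-adjoint operators, and that the commutation relation \eqref{kappa_1} is correctly reproduced — all routine but requiring care. Everything else, including inequivalence of $\pi_{+}$, $\pi_{-}$ and the $\pi_{c}$, is immediate from the disjointness of the supports of the corresponding spectral measures of $N$.
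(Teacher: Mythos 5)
Your proposal is correct, but it takes a genuinely different route from the paper. The paper's proof begins the same way as your ``hands-on'' variant: from the covariance relation derived from \eqref{group1} it shows that the spectral projections of $\pi(x)$ onto $\mathbb R_+$, $\mathbb R_-$ and $\{0\}$ commute with the whole representation, so irreducibility forces one of them to be the identity --- this is exactly your statement that the spectral measure of $N$ must live on a single dilation orbit (and with only three orbits no appeal to regularity of the action is needed). The two arguments then diverge at the reconstruction step: instead of invoking Mackey's imprimitivity theorem and induced representations, the paper sets $Q=-\ln(\pi(x))$ on $H_+$, converts the dilation covariance into the Weyl form of the canonical commutation relations $e^{ia\pi(t)}e^{ibQ}e^{-ia\pi(t)}=e^{iab}e^{ibQ}$, and concludes uniqueness from the Stone--von Neumann theorem (which it has already cited for the Heisenberg group in the introduction). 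Your Mackey-theoretic argument is more conceptual, hands you the realization on $L^2$ of the orbit together with the explicit formulas \eqref{rep+}--\eqref{rep-} directly, and would generalize more readily to higher-dimensional $\kappa$-Minkowski groups; the paper's logarithm trick is more elementary and self-contained, reusing a single classical input already present in the text. The two are of course closely related, since a transitive system of imprimitivity for $\mathbb R$ acting on a free orbit is equivalent to the Weyl relations, so at bottom you are both quoting the same uniqueness theorem in different clothing. Your remarks on essential self-adjointness and the triviality of the multiplier are the right points to flag, and inequivalence of $\pi_+$, $\pi_-$ and the $\pi_c$ does follow from the disjoint spectral supports exactly as you say.
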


\begin{proof} Let $\pi$ be a unitary representation of $G$ on a
  Hilbert space $H$ and let $v\in Dom(\pi(x))$. Differentiating the
  relation
\be\label{expid}
e^{i\a\pi(t)}e^{ib\pi(x)}e^{-ia\pi(t)}v = e^{ib e^{-a}\pi(x)}v\,,
\ee
which follows from \eqref{group1}, w.r.t. $b$ we get that $e^{-ia\pi(t)}v\in Dom(\pi(x))$ and 
$$
e^{ia\pi(t)}\pi(x)e^{-ia\pi(t)}v = e^{-a}\pi(x)v\,,
$$
so the two self-adjoint operators
$e^{ia\pi(t)}\pi(x)e^{-ia\pi(t)}$ and  $e^{-a}\pi(x)$ 
coincide. But since $e^{-a}>0$ the spectral subspaces $H_+, H_-$ and
$H_0$ corresponding to the positive and negative real line and $\{0\}$,
respectively, are identical for $\pi(x)$ and
$e^{ia\pi(t)}\pi(x)e^{-ia\pi(t)}$. It follows that those spaces are invariant
  under $e^{ia\pi(t)}$ and  $e^{ib\pi(x)}$. By irreducibility one
  of them equals $H$ and the other two vanish. 

Assume $H=H_+$ and define the self-adjoint operator $Q$ by
$$
Q = -\ln(\pi(x))\,.
$$
Then $x=e^{-Q}$ and by \eqref{expid} we have
$$
\exp\left(i b e^{ia\pi(t)}e^{-Q}e^{-ia\pi(t)}\right) = \exp\left(i b e^{-a}e^{-Q}\right)\,,
$$
and hence 
$$
 e^{ia\pi(t)}e^{-Q}e^{-ia\pi(t)} = \exp\left(e^{ia\pi(t)}Qe^{-ia\pi(t)}\right) =  e^{-Q-a}\,.
$$
Taking logarithms gives
$$
e^{ia\pi(t)}Qe^{-ia\pi(t)} =  Q+a
$$
and consequently
$$
e^{ia\pi(t)}e^{i b Q}e^{-ia\pi(t)} = e^{iab}e^{i b Q}\,.
$$
which is recognized as the Weyl form of the canonical commutation
relations. Applying the Stone-von Neumann theorem \cite{StoNeu} we conclude that
$\pi=\pi_+$. 
Similarly one shows that $\pi=\pi_-$ if $H=H_-$, and the case $H=H_0$
yields the one-dimensional representations as asserted.
\end{proof}

We will use the notation $W_\pm$ for $W_{\pi_\pm}$.
 From the explicit form \eqref{heisrep} of the action of the Heisenberg group in an
irreducible representation one obtains the action of $G$ in the representations
$\pi_\pm$. Using $S(a,b)=S(0,b)S(a,0)$ the result is
$$
\pi_\pm(S(a,b))\psi(s) = e^{\pm ibe^{-s}}\psi(s+a)\,,\quad \psi\in{\rm L}^2(\mathbb R)\,.
$$
It is now straightforward to determine the action of
$W_\pm(f)$ for arbitrary $f\!\in\!{\rm L}^1(\mathbb R^2)\cap \F^{-1}(
{\rm L}^1(\mathbb R^2))$. If $\langle\varphi ,\psi\rangle$ denotes the
inner product of $\varphi,\psi\in {\rm L}^2(\mathbb R^2)$ we get
\bq
\langle\varphi,W_\pm(f)\psi\rangle &=& \int dadbds\; \F f(a,b)\, \overline\varphi(s)\, e^{\pm
  ibe^{-s}}\psi(s+a)\nonumber\\
&=&  \int dsdudb\; \overline\varphi(s)\F f(u-s,b)\, e^{\pm i b e^{-s}}\psi(u)\nonumber\\
&=& \sqrt{2\pi} \int dsdu\; \overline\varphi(s)\tilde f(u-s,\pm
e^{-s})\psi(u)\,.
\enq
Hence we have shown 
\begin{prop}\label{intop}
 For  $f\in {\rm L}^1(\mathbb R^2)\cap \F^{-1}(
{\rm L}^1(\mathbb R^2))$ the operators $W_\pm(f)$ are integral
operators on ${\rm L}^2(\mathbb R)$  with kernels given by
$$
K^\pm_f(s,u)=  \sqrt{2\pi} \tilde f(u-s,\pm
e^{-s}) = \int dv f(v,\pm e^{-s})\, e^{-iv(u-s)}\,.
$$ 
\end{prop}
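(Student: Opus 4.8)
The plan is to evaluate the sesquilinear form $\langle\varphi,W_\pm(f)\psi\rangle$ directly, starting from $W_\pm(f)=\pi_\pm(\F f)$ and using the explicit action $\pi_\pm(S(a,b))\psi(s)=e^{\pm ibe^{-s}}\psi(s+a)$ obtained above. Since $f\in {\rm L}^1(\mathbb R^2)\cap\F^{-1}({\rm L}^1(\mathbb R^2))$ we have $\F f\in {\rm L}^1(\mathbb R^2)$, so by the standard construction recalled earlier in this section $\pi_\pm(\F f)=\int da\,db\,\F f(a,b)\,\pi_\pm(S(a,b))$ is a bounded operator on ${\rm L}^2(\mathbb R)$, of norm at most $\|\F f\|_1$. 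Taking the inner product inside the integral, for $\varphi,\psi\in {\rm L}^2(\mathbb R)$ one gets
$$
\langle\varphi,W_\pm(f)\psi\rangle=\int da\,db\,ds\;\F f(a,b)\,\overline\varphi(s)\,e^{\pm ibe^{-s}}\,\psi(s+a)\,.
$$

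First I would check that this triple integral is absolutely convergent and apply Fubini: for fixed $a$ the integral over $s$ of $|\overline\varphi(s)\,\psi(s+a)|$ is bounded by $\|\varphi\|_2\|\psi\|_2$ by the Cauchy--Schwarz inequality, and $\F f\in {\rm L}^1(\mathbb R^2)$, so we may freely interchange the order of integration. Substituting $u=s+a$ and carrying out the $b$-integration first, I would observe that for almost every $a$ the function $b\mapsto\F f(a,b)$ is integrable and coincides with the one-variable Fourier transform of $\beta\mapsto\tilde f(a,\beta)$, where $\tilde f$ is the partial Fourier transform in the first variable used in the proof of Proposition~\ref{alg1}; one-variable Fourier inversion then yields $\int db\,\F f(a,b)\,e^{\pm ibe^{-s}}=\sqrt{2\pi}\,\tilde f(a,\pm e^{-s})$. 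Re-inserting this gives
$$
\langle\varphi,W_\pm(f)\psi\rangle=\sqrt{2\pi}\int ds\,du\;\overline\varphi(s)\,\tilde f(u-s,\pm e^{-s})\,\psi(u)\,,
$$
which identifies $W_\pm(f)$ as the integral operator with kernel $K^\pm_f(s,u)=\sqrt{2\pi}\,\tilde f(u-s,\pm e^{-s})$; writing out the definition of $\tilde f$ turns this into $\int dv\,f(v,\pm e^{-s})\,e^{-iv(u-s)}$, which is the second expression claimed.

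The only delicate point is the step in which the $b$-integration is performed ahead of the $a$- and $s$-integrations. Here one uses that $\F f(a,\cdot)\in {\rm L}^1(\mathbb R)$ for almost every $a$ (Fubini applied to $\F f\in {\rm L}^1(\mathbb R^2)$) and that $\|\tilde f(a,\cdot)\|_1\le(2\pi)^{-1/2}\|f\|_1$ for every $a$ (directly from $f\in {\rm L}^1(\mathbb R^2)$), so that the pointwise Fourier inversion identity is valid for almost every $a$ when $\tilde f$ is taken in its continuous representative. Since the exceptional set of $a$ has measure zero and $a=u-s$, the kernel formula holds for almost every $(s,u)$, and $K^\pm_f$ is a measurable function of $(s,u)$ realizing $W_\pm(f)$ as an integral operator in the sense of the sesquilinear form above. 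Everything else is routine substitution.
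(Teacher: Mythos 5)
Your proposal is correct and follows essentially the same route as the paper: expand $\langle\varphi,W_\pm(f)\psi\rangle$ using the explicit action of $\pi_\pm(S(a,b))$, substitute $u=s+a$, and perform the $b$-integration via one-variable Fourier inversion to produce $\sqrt{2\pi}\,\tilde f(u-s,\pm e^{-s})$. The additional care you take with absolute convergence, Fubini, and the almost-everywhere validity of the inversion step is sound and merely makes explicit what the paper leaves implicit.
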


As a consequence we can establish the following basic identities.

\begin{prop}\label{trace1}
\begin{itemize}
\item[a)] $W_\pm(f)$ is of Hilbert-Schmidt type if and only if the
  restriction of  $f$ to $\mathbb R\times\mathbb R_\pm$ is square
  integrable w.r.t. the measure 
$$
d\mu = |\beta|^{-1}d\alpha d\beta \,,
$$
and we have
\be\label{HS}
\Vert W_\pm(f)\Vert_2^2 = 2\pi\int_\mathbb R da \int_{\mathbb R_\pm}db
|f(\alpha,\beta)|^2\frac{d\alpha d\beta}{|\beta|} = 2\pi\int_{\mathbb R^2}dsdv |f(v,\pm e^{-s})|^2\,,
\ee
where $\Vert\cdot\Vert_2$ denotes the Hilbert-Schmidt norm.
\item[b)] If $W_\pm(f)$ is trace class then
\be\label{trace}
tr W_\pm(f) = \int_{\mathbb R^2}dsdv f(v,\pm e^{-s})\,.
\ee
\end{itemize}
\end{prop}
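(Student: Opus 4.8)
The plan is to read off everything from the integral-kernel formula in Proposition~\ref{intop}, namely $K^\pm_f(s,u)=\sqrt{2\pi}\,\tilde f(u-s,\pm e^{-s})$, and apply the standard Hilbert-Schmidt and trace formulas for integral operators. For part a), recall that an integral operator on ${\rm L}^2(\mathbb R)$ with kernel $K(s,u)$ is Hilbert-Schmidt if and only if $K\in{\rm L}^2(\mathbb R^2)$, with $\Vert W_\pm(f)\Vert_2^2=\int ds\,du\,|K^\pm_f(s,u)|^2$. Substituting the kernel gives $2\pi\int ds\,du\,|\tilde f(u-s,\pm e^{-s})|^2$; changing variables $v=u-s$ in the inner integral and using that $s\mapsto u-s$ is a translation yields $2\pi\int ds\,dv\,|\tilde f(v,\pm e^{-s})|^2$. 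This already establishes the second equality in \eqref{HS} after renaming, except that the stated formula is in terms of $f$ rather than its partial Fourier transform $\tilde f$; to bridge this I would apply the Plancherel theorem in the first variable, $\int dv\,|\tilde f(v,\pm e^{-s})|^2=\int d\alpha\,|f(\alpha,\pm e^{-s})|^2$, and then perform the substitution $\beta=\pm e^{-s}$, i.e. $s=-\ln|\beta|$, $ds=|\beta|^{-1}\,d\beta$, which converts $\int_{\mathbb R}ds$ into $\int_{\mathbb R_\pm}|\beta|^{-1}\,d\beta$ and produces exactly $2\pi\int_{\mathbb R}d\alpha\int_{\mathbb R_\pm}|f(\alpha,\beta)|^2\,|\beta|^{-1}\,d\beta$. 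The ``if and only if'' is then immediate: $W_\pm(f)$ is Hilbert-Schmidt precisely when this integral is finite, i.e.\ when $f\restriction_{\mathbb R\times\mathbb R_\pm}\in{\rm L}^2(d\mu)$.

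For part b), I would use the fact that for a trace-class integral operator with continuous kernel, $\operatorname{tr}W_\pm(f)=\int ds\,K^\pm_f(s,s)$. Setting $u=s$ in the kernel gives $K^\pm_f(s,s)=\sqrt{2\pi}\,\tilde f(0,\pm e^{-s})=\int dv\,f(v,\pm e^{-s})$ directly from the second expression for the kernel in Proposition~\ref{intop}. Integrating over $s$ then yields $\operatorname{tr}W_\pm(f)=\int ds\int dv\,f(v,\pm e^{-s})=\int_{\mathbb R^2}ds\,dv\,f(v,\pm e^{-s})$, which is the claimed formula \eqref{trace}.

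The main subtlety — and the place where one must be slightly careful rather than merely invoking textbook statements — is the justification of the pointwise trace formula $\operatorname{tr}=\int K(s,s)\,ds$ in part b), since in general this identity requires some regularity of the kernel (e.g.\ continuity plus the diagonal being integrable), not merely that the operator be trace class. Here, however, $f$ lies in a class guaranteeing that $\tilde f$ is smooth and rapidly decaying in its first argument, so the kernel $K^\pm_f$ is continuous (indeed jointly smooth on the relevant domain), and its restriction to the diagonal is integrable whenever $W_\pm(f)$ is trace class; this puts us squarely in the case where the pointwise formula is valid. I would note this regularity explicitly and otherwise treat the Fubini interchanges and the change of variables $\beta=\pm e^{-s}$ as routine, since all integrands in sight decay fast enough in $v$ (or $\alpha$) and $s$ to justify them. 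A remark that $d\mu=|\beta|^{-1}d\alpha\,d\beta$ is (up to the sign of $\beta$) the push-forward of Lebesgue measure $ds\,dv$ under $(s,v)\mapsto(v,\pm e^{-s})$ makes the geometric content of \eqref{HS} transparent and ties the two displayed expressions together.
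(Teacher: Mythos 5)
Your proposal is correct and follows essentially the same route as the paper: square-integrability of the kernel from Proposition~\ref{intop} plus a change of variables and Plancherel in the first variable for a), and the diagonal formula $\operatorname{tr}=\int K(s,s)\,ds$ for b). Your extra remark justifying the pointwise trace formula via continuity of the kernel is a point the paper leaves implicit, but it does not change the argument.
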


\begin{proof} 
\noindent a)~ The operator  $W_\pm(f)$ is Hilbert-Schmidt if and only
if its kernel is square integrable. From Proposition~\ref{intop} we get
$$
\int dsdu |K^\pm_f(s,u)|^2 = 2\pi \int dsdu |\tilde f(u-s,\pm
e^{-s})|^2 =  2\pi \int dsdu |\tilde f(u,\pm e^{-s})|^2\,.
$$
Applying the Plancherel theorem on the $u$-integral then proves the
first assertion as well as \eqref{HS}.

\noindent b)~ If $W_\pm(f)$ is trace class, then
$$
tr  W_\pm(f) = \int_{\mathbb R^2}ds K^\pm_f(s,s)\,,
$$
and \eqref{trace} follows from Proposition~\ref{intop}.
\end{proof}

We note that although $\CB$ is not contained in ${\rm L}^2(\mathbb
R^2,d\mu)$ we have that $\CB\cap{\rm L}^2(\mathbb R^2,d\mu)$ is
dense in ${\rm L}^2(\mathbb R^2,d\mu)$. Indeed, let $\CB'$ denote
the subspace of $\CB$ consisting of Fourier transforms of
derivatives w.\,r.\,t. the second variable of functions in $\CS_c$.
A function $f(\alpha,\beta)$ in $\CB'$ is then of the form $\beta
g(\alpha,\beta)$ where $g$ is a Schwartz function, hence 
$f\in{\rm L}^2(\mathbb R^2,d\mu)$. Moreover, if $f$ is orthogonal to $\CB'$ in 
${\rm L}^2(\mathbb R^2,d\mu)$ then its Fourier transform,
considered as a tempered distribution, vanishes as a
distribution, hence also as a tempered distribution. Thus $f=0$ and we
conclude that $\CB'$ is dense in ${\rm L}^2(\mathbb R^2,d\mu)$.
 
It follows from this remark and \eqref{HS} that the mappings
$W_\pm$ have unique extensions from $\CB'$ to 
${\rm L}^2(\mathbb R^2,d\mu)$ such that \eqref{HS} still holds. In particular, the map 
$$W: f\to W_+(f)\oplus W_-(f)$$
is injective from ${\rm L}^2(\mathbb R^2,d\mu)$ into $\CH\oplus\CH$, where $\CH$ 
denotes the space of Hilbert-Schmidt operators on ${\rm L}^2(\mathbb R)$.  

On the other hand, it is clear from the proof of Proposition~\ref{trace1} 
that any pair of kernels $K^\pm$ in ${\rm L}^2(\mathbb R^2)$ originate from 
an $f\in {\rm L}^2(\mathbb R^2,d\mu)$, i.e. $W$ is unitary up to a factor $\sqrt{2\pi}$. 
This proves the following extension result.
\begin{theorem}\label{Bbar}
Let $\CB'$ and $W$ be as defined above and set $\bar\CB= {\rm L}^2(\mathbb R^2,d\mu)$. Then the
$*$-product \eqref{star2} and involution \eqref{inv2} have unique extensions
from $\CB'$ to $\bar\CB$, such that $\bar\CB$ becomes a Banach algebra and $W$ 
an isomorphism,
$$
W(f*g) = W(f)W(g)\qquad  W(f^*) = W(f)^*\,.
$$
If we complete the algebra $\bar\CB$ in the operator norm, the 
resulting $C^*$ algebra will be that of compact operators.  
\end{theorem}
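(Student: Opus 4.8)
\emph{Proof strategy.} The plan is to transport the natural Banach $*$-algebra structure of $\CH\oplus\CH$ back to $\bar\CB$ along the map $W=W_+\oplus W_-$, and then to check that the operations so obtained extend those already defined on $\CB'$ and are the unique continuous extensions. First I would record what the discussion preceding the theorem already supplies. By \eqref{HS} one has, for $f\in\CB'$, $\Vert W_+(f)\Vert_2^2+\Vert W_-(f)\Vert_2^2=2\pi\Vert f\Vert_{\bar\CB}^2$, so $W$ extends by continuity to all of $\bar\CB$, isometrically up to the factor $\sqrt{2\pi}$; and, as noted there, $W$ is onto $\CH\oplus\CH$, since a pair of kernels $K^\pm\in{\rm L}^2(\R^2)$ comes, via $\tilde f(u-s,\pm e^{-s})=(2\pi)^{-1/2}K^\pm(s,u)$ and Fourier inversion in the first variable, from a unique $f\in\bar\CB$. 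Thus $W\colon\bar\CB\to\CH\oplus\CH$ is a linear bijection, isometric once $\bar\CB$ is given the norm $\Vert\cdot\Vert':=\sqrt{2\pi}\,\Vert\cdot\Vert_{\bar\CB}$.

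Since a composition of Hilbert--Schmidt operators is Hilbert--Schmidt with $\Vert AB\Vert_2\le\Vert A\Vert_2\Vert B\Vert_2$, the space $\CH\oplus\CH$ with componentwise composition and adjoint and norm $\Vert(A_+,A_-)\Vert_2^2=\Vert A_+\Vert_2^2+\Vert A_-\Vert_2^2$ is a Banach $*$-algebra. I would then simply define, for $f,g\in\bar\CB$,
$$
f*g:=W^{-1}\bigl(W(f)\,W(g)\bigr),\qquad f^*:=W^{-1}\bigl(W(f)^*\bigr),
$$
so that $W$ intertwines these operations with those of $\CH\oplus\CH$ by construction. Hence $\bar\CB$ inherits associativity, the involution identities, and submultiplicativity of $\Vert\cdot\Vert'$ (with $\Vert f^*\Vert'=\Vert f\Vert'$): it becomes a Banach $*$-algebra and $W$ an isometric $*$-isomorphism onto $\CH\oplus\CH$, with $W(f*g)=W(f)W(g)$ and $W(f^*)=W(f)^*$.

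It remains to identify these operations on $\CB'$ with the originally defined star product and involution, and to establish uniqueness. For $f,g\in\CB'\subseteq\CB$, relation \eqref{starrep} applied to $\pi=\pi_\pm$, combined with the definitions \eqref{star1}--\eqref{inv1}, gives $W_\pm(f*g)=W_\pm(f)W_\pm(g)$ and $W_\pm(f^*)=W_\pm(f)^*$ for the operations of Proposition~\ref{alg1}; applying $W^{-1}$ shows these coincide with the ones just constructed. For uniqueness, the estimates above make both operations continuous on $\bar\CB$, and since $\CB'$ is dense in $\bar\CB$, any continuous bilinear (resp. conjugate-linear) extension of the $\CB'$-operations must agree with them.

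Finally, for the last assertion, the finite-rank operators are dense in $\CH$ in the Hilbert--Schmidt norm, hence a fortiori in the operator norm (since $\Vert\cdot\Vert\le\Vert\cdot\Vert_2$), and their operator-norm closure is $\K({\rm L}^2(\R))$; therefore the operator-norm closure of $W(\bar\CB)=\CH\oplus\CH$ is $\K({\rm L}^2(\R))\oplus\K({\rm L}^2(\R))$, and transporting along $W$ identifies the completion of $\bar\CB$ in the operator norm with this $C^*$-algebra of compact operators. The main thing to get right, I expect, is the bookkeeping: checking that the abstractly transported product and involution genuinely restrict to \eqref{star2}--\eqref{inv2} on $\CB'$ (which rests on \eqref{starrep} for $\pi_\pm$) and running the density/continuity argument carefully; surjectivity of $W$ and the identification of the operator-norm closure of the Hilbert--Schmidt class with the compact operators are, respectively, already recorded and standard.
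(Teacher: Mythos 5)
Your proposal is correct and follows essentially the same route as the paper: the authors likewise use the density of $\CB'$ in ${\rm L}^2(\mathbb R^2,d\mu)$, the isometry (up to $\sqrt{2\pi}$) and surjectivity of $W$ onto $\CH\oplus\CH$ established just before the theorem, and transport the Banach $*$-algebra structure of the Hilbert--Schmidt operators back along $W$. Your write-up merely makes explicit the bookkeeping (agreement with \eqref{star1}--\eqref{inv1} on $\CB'$ via \eqref{starrep}, and the density/continuity argument for uniqueness) that the paper leaves implicit.
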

\begin{corollary}
The integral w.\,r.\,t. $d\mu$ over $\mathbb R\times \mathbb R_\pm$ is a
positive trace on $\bar\CB$ in the following sense: for any $f,g\in \bar\CB$,
$$
\int duds\, (f*f^*)(u,\pm e^{-s}) \geq 0\quad\mbox{and}\quad \int duds (f*g)
(u,\pm e^{-s})  = \int duds (g*f) (u,\pm e^{-s})\,.
$$
\end{corollary}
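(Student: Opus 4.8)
The plan is to push the assertion through the isomorphism $W$ of Theorem~\ref{Bbar} and reduce it to elementary properties of the operator trace on Hilbert--Schmidt operators. Fix a sign and write $\phi_\pm(h)=\int du\,ds\,h(u,\pm e^{-s})$ for the functional in question; after the substitution $\beta=\pm e^{-s}$, so that $|ds|=|\beta|^{-1}|d\beta|$, this equals the integral of $h$ against $d\mu$ over $\mathbb R\times\mathbb R_\pm$. The component map $W_\pm\colon\bar\CB\to\CH$ inherits from $W$ the identities $W_\pm(f*g)=W_\pm(f)W_\pm(g)$ and $W_\pm(f^*)=W_\pm(f)^*$, and by Proposition~\ref{trace1}(a) and Theorem~\ref{Bbar} it is, up to the factor $\sqrt{2\pi}$, an isometry onto $\CH$. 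In particular, for $f,g\in\bar\CB$ the operator $W_\pm(f*g)=W_\pm(f)W_\pm(g)$ is a product of two Hilbert--Schmidt operators, hence trace class.

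The heart of the matter is the trace identity
$$
\phi_\pm(f*g)=\operatorname{tr}\big(W_\pm(f)W_\pm(g)\big),\qquad f,g\in\bar\CB .
$$
For $f,g$ in the dense subspace $\CB'$ of Theorem~\ref{Bbar} it follows from Proposition~\ref{trace1}(b) applied to $f*g$, whose kernel is the composition of the two ${\rm L}^2(\mathbb R^2)$ kernels $K^\pm_f$ and $K^\pm_g$: writing the trace of the composition as $\int ds\int dr\,K^\pm_f(s,r)K^\pm_g(r,s)$ and inserting the explicit kernels of Proposition~\ref{intop}, a Plancherel transformation in the frequency variable turns this into $\phi_\pm(f*g)$. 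Since the right-hand side of the trace identity is jointly continuous in $(f,g)$ for the norm of $\bar\CB$ --- indeed $|\operatorname{tr}(W_\pm(f)W_\pm(g))|\le\|W_\pm(f)\|_2\|W_\pm(g)\|_2$ --- and $\CB'$ is dense in $\bar\CB$, this simultaneously fixes the meaning of $\phi_\pm(f*g)$ on all of $\bar\CB$ and establishes the identity there.

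Granting the trace identity, both assertions are immediate. The trace property is the cyclicity $\operatorname{tr}(AB)=\operatorname{tr}(BA)$ for a product of two Hilbert--Schmidt operators --- equivalently the symmetry of $\int\!\int K_A(s,r)K_B(r,s)\,dr\,ds$ under interchanging $A$ and $B$ and relabelling $r\leftrightarrow s$ --- applied to $A=W_\pm(f)$, $B=W_\pm(g)$:
$$
\phi_\pm(f*g)=\operatorname{tr}\big(W_\pm(f)W_\pm(g)\big)=\operatorname{tr}\big(W_\pm(g)W_\pm(f)\big)=\phi_\pm(g*f).
$$
For positivity take $g=f^*$, so that $W_\pm(f*f^*)=W_\pm(f)W_\pm(f)^*\ge 0$ and
$$
\phi_\pm(f*f^*)=\operatorname{tr}\big(W_\pm(f)W_\pm(f)^*\big)=\|W_\pm(f)\|_2^2\ge 0 ,
$$
the middle equality being the definition of the Hilbert--Schmidt norm; compare \eqref{HS}.

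The only genuinely delicate point is the trace identity away from the Schwartz class, i.e.\ checking that the diagonal-kernel formula of Proposition~\ref{trace1}(b) survives the passage to the $\bar\CB$-closure; the density-plus-continuity argument above is the economical way to settle it. Alternatively, the corollary can be read off directly from the final sentence of Theorem~\ref{Bbar}: $\phi_\pm$ is the canonical densely defined trace on the compact operators pulled back along the $*$-homomorphism $f\mapsto W_\pm(f)$, and as such automatically inherits cyclicity and positivity from the operator trace.
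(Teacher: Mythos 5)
Your proof is correct and follows essentially the same route as the paper: transport everything through $W_\pm$, identify $\int du\,ds\,(f*g)(u,\pm e^{-s})$ with $\operatorname{tr}\bigl(W_\pm(f)W_\pm(g)\bigr)$ via \eqref{trace}, and then invoke cyclicity of the trace and positivity of $\operatorname{tr}(AA^*)$. The only difference is that you spell out the density-and-continuity argument needed to extend the trace identity from $\CB'$ to all of $\bar\CB$, a detail the paper leaves implicit in its appeal to Theorem~\ref{Bbar}.
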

\begin{proof}
If $f\in\bar\CB$ then $W(f)$ is Hilbert-Schmidt and the first
inequality follows from \eqref{HS}. If $f,g\in\bar\CB$ then $W(f)W(g)$
is trace class and the second identity follows from Theorem~\ref{Bbar} and \eqref{trace}.
\end{proof}

For later use we note the following identities.

\begin{prop}
\begin{itemize}
\item[a)]~ If  $f,g\in\bar\CB$ then
$$
\int d\alpha d\beta |\beta|^{-1} (f*g^*)(\alpha,\beta) = \int d\alpha
d\beta |\beta|^{-1}f(\alpha,\beta)\,\bar g(\alpha,\beta)\,.
$$
\item[b)]~ If $f,g\in \CB$ then
\bq
 \int d\alpha d\beta \, (f*g^*)(\alpha,\beta) &=& 
\int d\alpha d\beta \, f(\alpha,\beta)\,\bar
g(\alpha,\beta)\,,\label{isom}\\
 \int d\alpha d\beta
f^*(\alpha,\beta) &=& \int d\alpha d\beta \bar f(\alpha,\beta)\,.\label{invint}
\eq
\end{itemize}
\end{prop}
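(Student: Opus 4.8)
The plan is to derive both identities from the trace and Hilbert-Schmidt formulas of Propositions~\ref{intop} and \ref{trace1} together with the extension Theorem~\ref{Bbar}, reducing everything to statements about kernels of operators on ${\rm L}^2(\mathbb R)$. For part a) I would start from the observation that for $f,g\in\bar\CB$ the operator $W(f)W(g)^*$ is trace class (being a product of two Hilbert--Schmidt operators), so by Theorem~\ref{Bbar} we have $W(f*g^*)=W(f)W(g)^*$ and hence $\operatorname{tr}W_\pm(f*g^*)=\operatorname{tr}\big(W_\pm(f)W_\pm(g)^*\big)$. The left side, by \eqref{trace}, equals $\int duds\,(f*g^*)(u,\pm e^{-s})$, and summing the $\pm$ contributions and changing variables $\beta=\pm e^{-s}$ (so $ds=|\beta|^{-1}d\beta$ on each half-line) yields $\int d\alpha d\beta\,|\beta|^{-1}(f*g^*)(\alpha,\beta)$. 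The right side, $\operatorname{tr}\big(W_\pm(f)W_\pm(g)^*\big)$, is the Hilbert--Schmidt inner product $\langle K^\pm_f,K^\pm_g\rangle$ of the kernels, which by Proposition~\ref{intop} and the Plancherel computation already carried out in the proof of Proposition~\ref{trace1}a) equals $2\pi\int dsdv\, f(v,\pm e^{-s})\overline{g(v,\pm e^{-s})}$; reverting the substitution and summing over $\pm$ gives $\int d\alpha d\beta\,|\beta|^{-1}f(\alpha,\beta)\overline{g(\alpha,\beta)}$, which is the claim.

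For part b), equation \eqref{isom}, the strategy is the same but one keeps an extra factor of $|\beta|$ on both sides, i.e.\ one should interpret the plain Lebesgue integral of $f*g^*$ as a trace of a suitably weighted operator. Concretely, writing $h(\alpha,\beta)=\beta f(\alpha,\beta)$ for $f\in\CB$, one has $h\in\CB'$ (indeed $h$ is again of Schwartz class, vanishing for large $\alpha$ support), and $\int d\alpha d\beta\,(f*g^*)(\alpha,\beta)$ can be rewritten after the substitution $\beta=\pm e^{-s}$ as $\int dsdv\,(\text{weighted } f*g^*)$; the cleanest route is to notice that the plain integral of $f*g^*$ over $\mathbb R^2$ corresponds, via Proposition~\ref{intop}, to an integral of $\widetilde{f*g^*}$ against $e^{-s}\,ds$, and then to run the same Plancherel argument. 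Alternatively, and perhaps more transparently, I would verify \eqref{isom} directly from the explicit formulas \eqref{star2} and \eqref{inv2}: insert the formula for $g^*$ into \eqref{star2}, integrate over $\alpha,\beta$, and perform the $v$ and auxiliary integrations; the phase factors $e^{-i\alpha'v}$ produce delta functions that collapse the convolution structure and leave exactly $\int d\alpha d\beta\, f(\alpha,\beta)\bar g(\alpha,\beta)$. This is a finite, if slightly tedious, Fourier calculation.

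For \eqref{invint} the argument is shortest: from \eqref{inv2},
\be
\int d\alpha d\beta\, f^*(\alpha,\beta)=\frac{1}{2\pi}\int d\alpha d\beta\int dv\int d\alpha'\,\bar f(\alpha+\alpha',e^{-v}\beta)\,e^{-i\alpha'v}.
\ee
Doing the $\alpha$-integral first (translation invariance absorbs $\alpha'$), the $\alpha'$-integral then yields $2\pi\delta(v)$, which sets $v=0$ and removes the prefactor, leaving $\int d\alpha d\beta\,\bar f(\alpha,\beta)$ as claimed. One should note here that for $f\in\CB$ all the integrals are absolutely convergent after the first reordering (compact support in the first variable and Schwartz decay handle this), so no distributional subtleties arise beyond the standard Fourier-inversion identity $\frac{1}{2\pi}\int d\alpha'\,e^{-i\alpha'v}=\delta(v)$ used in the sense of tempered distributions paired against the Schwartz function $v\mapsto\int d\beta\,\bar f(\cdot,e^{-v}\beta)$.

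The main obstacle I anticipate is bookkeeping rather than anything conceptual: keeping track of the factor $|\beta|^{-1}$ versus the plain Lebesgue measure, and making sure the change of variables $\beta=\pm e^{-s}$ (which splits $\mathbb R^2$ into the two half-planes $\mathbb R\times\mathbb R_\pm$) is applied consistently on both sides, so that the two representations $\pi_\pm$ are summed correctly. In part a) this is automatic because the weight $|\beta|^{-1}$ is exactly the Jacobian of the substitution, which is why a) comes out directly from the trace formula; in b) the mismatch between $dsdv$ and $d\alpha d\beta$ means one is no longer literally computing a trace of a Hilbert--Schmidt product, so the direct computation from \eqref{star2}--\eqref{inv2} is the safer path and the one I would write up in detail.
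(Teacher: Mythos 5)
Your proposal is correct. For part a) you take exactly the paper's route: the paper's proof is the one-liner ``follows from Proposition~\ref{trace1} and Theorem~\ref{Bbar}'', which is precisely your argument that $W_\pm(f)W_\pm(g)^*$ is trace class, that \eqref{trace} converts its trace into the weighted integral of $f*g^*$, and that the Hilbert--Schmidt pairing of the kernels gives the right-hand side; your remark that the weight $|\beta|^{-1}$ is the Jacobian of $\beta=\pm e^{-s}$ is the correct reason a) is ``automatic''. For part b) you diverge from the paper: after (sensibly) abandoning the weighted-trace idea --- which indeed cannot work, since the paper points out immediately after the proposition that $\int d\alpha\,d\beta$ is \emph{not} a trace on $\CB$ --- you compute \eqref{isom} and \eqref{invint} directly from \eqref{star2} and \eqref{inv2} by producing delta functions from the phase integrals. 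The paper instead observes that $\int d\alpha\,d\beta\,h = 2\pi\,\F h(0)$ and pulls everything back through the Fourier transform to the group-algebra side, where \eqref{isom} becomes the evaluation of $\F f\,\hat*\,(\F g)^\dagger$ at the origin; by \eqref{conv} and \eqref{dagger} this is a single change of variables $(a,b)\mapsto(-a,-e^a b)$ whose Jacobian $e^a$ cancels the factor in $^\dagger$, followed by Plancherel, and \eqref{invint} is the even shorter $\F(f^*)(0)=(\F f)^\dagger(0)=\overline{\F f}(0)$. The two computations are Fourier-dual to each other and equally valid; the paper's version is shorter and avoids distributional bookkeeping entirely, while yours is self-contained given only the explicit formulas \eqref{star2}--\eqref{inv2}. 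The only point to watch in your version is the interchange of the $\alpha$- and $\alpha'$-integrations in \eqref{invint}, which is not absolutely convergent as written; you flag this correctly, and it is cleanest to first collapse the $\alpha'$-integral into $\tilde f$ (i.e.\ use the form \eqref{inv3}), after which the compact $v$-support of $\tilde f$ makes everything absolutely convergent.
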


\begin{proof}
a)~ Follows immediately from Proposition~\ref{trace1} and
Theorem~\ref{Bbar}.

b)~ Using \eqref{star1} and \eqref{inv1} as well as \eqref{conv} and
\eqref{dagger} we have 
\bq\nonumber
 \int d\alpha d\beta \, (f*g^*)(\alpha,\beta) &=& \F(f*g^*)(0) =
 (\F(f)\hat *\F(g^*))(0)\\ &=&  (\F(f)\hat
 *\F(g)^\dagger)(0) = \int da db \, \F f(-a,-e^ab)\overline{\F g}(-a,-e^ab)e^a\nonumber\\ &=&
 \int da db \, \F f(a,b)\overline{\F g}(a,b) =  \int d\alpha d\beta \,
 f(\alpha,\beta)\,\bar g(\alpha,\beta)\,.\nonumber
\eq
Similarly, we have
$$
 \int d\alpha d\beta \, f^*(\alpha,\beta) = \F(f^*)(0) =
 (\F(f)^\dagger)(0) =  \overline{\F(f)}(0) =
 \int d\alpha d\beta \, \bar f(\alpha,\beta)\,.
$$
\end{proof}

In particular, it follows that  
$$
\int d\alpha d\beta \, (f*f^*)(\alpha,\beta) \geq 0\,,\quad f\in\CB\,,
$$
 but in general $\int d\alpha d\beta\, f*g(\alpha,\beta)
\neq \int d\alpha d\beta\, g*f(\alpha,\beta)$, i.e.  $\int d\alpha
d\beta $ is not a trace on $\CB$. However, we shall see in 
Proposition~\ref{invintgen} that $\int d\alpha d\beta$ satisfies a
twisted trace property.

\subsection{The left-invariant case and other star-products}

The above procedure can be also applied to the convolution algebra 
of the left invariant measure on $G$ instead of the right invariant 
one. It is then convenient to use the parametrization
\be\label{R}
R(a,c) = S(a,e^{-a}c)\,,\quad a,c\in\mathbb R\,,
\ee
in which the left invariant measure is $dadc$ by
Lemma~\ref{haar}. Given a unitary representation $\pi$ of $G$, the
corresponding quantization map $\tilde W_\pi$ is defined by
$$
\tilde W_\pi(f) = \int da dc\, \F f(a,c)\pi(R(a,c)) = \int dadc\, \F f(a,c)\pi(S(a,e^{-a}c))\,,
$$ 
for $f\in {\rm L}^1(\mathbb R^2)\cap \F^{-1}({\rm L}^1(\mathbb R^2))$.   
More generally, let us consider the map $W^\varphi_\pi$ given by
\be\label{Wphi1}
W^\varphi_\pi(f) = \int da dc\, \F f(a,c) \pi(S(a,\varphi(a)c))\,,
\ee
where $\varphi$ is a smooth, positive function on $\mathbb R$. Defining  
$$
(Uf)(a,b) = f\left(a,\eta(a)b\right)\eta(a)\,,
$$
for any function $f$ of two variables, where
$$
\eta (a) = \varphi(a)^{-1}\,,
$$
a change of variables in \eqref{Wphi1} gives 
\be\label{Wphi2}
W^\varphi_\pi(f) = \pi(U\F f)\,.
\ee
The corresponding star-product $*_\phi$ and involution
$^{*_\varphi}$ are given by 
\be\label{starphi1}
f*_\varphi g (\alpha,\beta) = 
\frac{1}{2\pi}\F^{-1}U^{-1}((U\F f)\,\hat{*}\,(U\F g))
\ee
and 
\be\label{invphi1}
f^{*_\varphi}(\alpha,\beta) =  \F^{-1} U^{-1}((U\F f)^\dagger)\,,
\ee  
which are easily seen to be well defined for $f,g\in \CB$. 
More explicitly, the following result holds.
\begin{prop}\label{starphi}
If $f,g\in\CB$ and $\varphi$ is positive and smooth then
\be\label{starphi2}
f*_\varphi g (\alpha,\beta) = \frac{1}{2\pi}\int da db \,
\tilde{f}(b,\omega(a,b)e^{a-b}\beta) \tilde{g}(a-b,\omega(a,a-b)\beta) e^{i\alpha a}\,,
\ee
and 
\be\label{invphi2}
f^{*_\varphi}(\alpha,\beta) = \frac{1}{2\pi}\int dv\int d\alpha' \,
\bar{f}(\alpha +\alpha',\omega(a,-a)e^a\beta)e^{-i\alpha'v}\,,
\ee
where
$$
\omega(a,b) = \eta(a)\varphi(b)e^{b-a}\,.
$$
In particular, the star product $\star$ for the left-invariant measure, 
obtained for $\varphi(a)=e^{-a}$, becomes
\be\label{leftstar}
f\star g (\alpha,\beta) = \frac{1}{2\pi}\int dv\int d\alpha' \, 
f(\alpha,e^v\beta) g(\alpha +\alpha',\beta) e^{-i\alpha'v}\,,
\ee
and the involution $^\star$ for the left-invariant product is
\be\label{leftinv}
f^{\star}(\alpha,\beta) = \frac{1}{2\pi}\int dv\int d\alpha' \,
\bar{f}(\alpha+\alpha',e^v\beta)e^{-i\alpha'v}\,.
\ee
\end{prop}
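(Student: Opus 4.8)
The plan is to obtain \eqref{starphi2} and \eqref{invphi2} by a direct computation starting from the definitions \eqref{starphi1} and \eqref{invphi1}, following exactly the pattern of the proof of Proposition~\ref{alg1}, and then to read off \eqref{leftstar} and \eqref{leftinv} as the special case $\varphi(a)=e^{-a}$. As in Proposition~\ref{alg1}, one can equivalently proceed by taking the claimed right-hand sides, rewriting them in the partial-transform form, and verifying the identity by Fourier transforming back; but the forward computation below is the cleanest to describe.

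The first step is to express $U\F f$ through the partial Fourier transform $\tilde f$ in the first variable: writing $(\F f)(a,c)=\frac{1}{\sqrt{2\pi}}\int\tilde f(a,\beta')e^{-ic\beta'}\,d\beta'$ and changing variables $\beta'=\varphi(a)\gamma$ in the definition of $U$, one finds that $U\F f$ is the one-dimensional symmetric Fourier transform in the second variable of $(a,\gamma)\mapsto\tilde f(a,\varphi(a)\gamma)$. One then inserts two such expressions into the convolution formula \eqref{conv}, performs the $b'$-integration --- which yields a $\delta$-function identifying the two inner transform variables, precisely because of the weight $e^{a'-a}$ occurring in the second slot in \eqref{conv} --- applies $U^{-1}$, for which $(U^{-1}h)(a,b)=\varphi(a)\,h(a,\varphi(a)b)$ since $\eta\varphi\equiv1$, and finally applies $\F^{-1}$, which together with another exponential integration collapses one more variable. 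After the change of variables $b=a-a'$ one is left with \eqref{starphi2}: the composite rescalings that then multiply $\beta$ in the arguments of $\tilde f$ and $\tilde g$ are exactly $\varphi(b)\eta(a)=\omega(a,b)e^{a-b}$ and $\varphi(a-b)e^{-b}\eta(a)=\omega(a,a-b)$. All integrals run over compact ranges in the first variables (because $f,g\in\CB$) against rapidly decreasing integrands, so every manipulation is legitimate. The involution \eqref{invphi2} is obtained the same way, starting from $(U\F f)^\dagger$ with $F^\dagger(a,b)=e^a\bar F(-a,-e^ab)$ as in \eqref{dagger}; here the factor $e^a$ combines with the two rescalings to produce the argument $\omega(v,-v)e^v\beta$.

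For the final assertion one simply sets $\varphi(a)=e^{-a}$, so $\eta(a)=e^a$ and $\omega(a,b)=e^a e^{-b}e^{b-a}=1$ identically, and all the $\omega$-factors disappear. The involution formula then reduces directly to \eqref{leftinv}, while \eqref{starphi2} becomes $\frac{1}{2\pi}\int da\,db\;\tilde f(b,e^{a-b}\beta)\,\tilde g(a-b,\beta)\,e^{i\alpha a}$; substituting $v=a-b$ and carrying out the Fourier inversion $\int db\,\tilde f(b,\cdot)\,e^{i\alpha b}=\sqrt{2\pi}\,f(\alpha,\cdot)$ in the remaining integral, and re-expressing $\tilde g$ through an integral over $\alpha'$, brings it to the form \eqref{leftstar}. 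I expect the only real difficulty to be the bookkeeping: tracking the several factors of $2\pi$ through the Fourier normalizations and, above all, keeping straight how the two $\varphi$-rescalings coming from the two copies of $U$ interact with the $e^{a'-a}$ weight in \eqref{conv}, since a sign slip there would attach $\omega(a,b)$ and $\omega(a,a-b)$ to the wrong arguments. Checking that the formulas reduce correctly to \eqref{star2}, \eqref{inv2} for $\varphi\equiv1$ and to \eqref{leftstar}, \eqref{leftinv} for $\varphi(a)=e^{-a}$ serves as a convenient consistency test.
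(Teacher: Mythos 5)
Your proposal is correct and follows essentially the same route as the paper, which simply states that \eqref{starphi2} and \eqref{invphi2} follow by straightforward computation from \eqref{starphi1}, \eqref{invphi1} and Fourier inversion, and that \eqref{leftstar}, \eqref{leftinv} follow by a change of variables plus Fourier inversion; your account supplies exactly those omitted details (including the correct attribution of the rescalings $\omega(a,b)e^{a-b}$ and $\omega(a,a-b)$ and the collapse of $\omega$ to $1$ when $\varphi(a)=e^{-a}$).
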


\begin{proof}
The first two identities follow by straightforward computation using
\eqref{starphi1} and \eqref{invphi1} and Fourier inversion. The last
two identities follow from the first two after a change of variables
combined with Fourier inversion. Details are left to the reader.
\end{proof}

\begin{definition} By $\CB_\varphi$ we shall denote the involutive
  algebra obtained by equipping $\CB$ with the
  product $*_\varphi$ and involution $^{*_\varphi}$. 
\end{definition}

\begin{remark}~ In \cite{Andrea} a star product is obtained by a
  somewhat different approach involving a reducible representation of
  $G$ acting on functions of two variables. Although the explicit
  form of that star product is not given in \cite{Andrea}, it can be 
  verified that it indeed coincides with \eqref{star2}. 

The star product considered in \cite{DabPia1} (and in \cite{DabPia2, DabPia3, DabPiaGod}) corresponds to the 
case $\varphi(a)= \frac{1-e^{-a}}{a}$ above and has the 
property that the involution equals complex conjugation. However, this property
does not determine the star product uniquely among the products
$*_\varphi$, as it holds more generally if $\varphi$ satisfies the relation
$$ \varphi(-a) = e^a\varphi(a)\,,\quad a\in\mathbb R\,. $$
\end{remark}

\bigskip

The form of the Weyl operators $W^\varphi_\pm(f)$ for $\pi=\pi_\pm$ is obtained
from \eqref{Wphi2} and Proposition~\ref{intop} by an easy computation
that we omit. The result is the following.

\begin{prop}\label{intop2} Assume $\varphi$ is positive and smooth.
 For  $f\in {\rm L}^1(\mathbb R^2)\cap \F^{-1}(
{\rm L}^1(\mathbb R^2))$ the operators $W^\varphi_\pm(f)$ are integral
operators on ${\rm L}^2(\mathbb R)$  with kernels given by
$$
K^\pm_f(s,u)=  \sqrt{2\pi} \tilde f(u-s,\pm\varphi(u-s)
e^{-s}) = \int dv f(v,\pm \varphi(u-s)e^{-s}) e^{-iv(u-s)}\,.
$$ 
\end{prop}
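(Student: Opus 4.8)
The plan is to chain the factorization \eqref{Wphi2}, namely $W^\varphi_\pm(f) = \pi_\pm(U\F f)$, together with the explicit action of $\pi_\pm$ on ${\rm L}^2(\mathbb R)$, thereby reducing everything to the computation already carried out for Proposition~\ref{intop} (which is the special case $\varphi\equiv 1$, since then $U=\Id$). Before any computation I would record that $g:=U\F f$ again lies in ${\rm L}^1(\mathbb R^2)$ whenever $f\in {\rm L}^1(\mathbb R^2)\cap\F^{-1}({\rm L}^1(\mathbb R^2))$: performing the substitution $b\mapsto\varphi(a)b$ in each fibre over $a$ gives $\Vert U\F f\Vert_1 = \Vert\F f\Vert_1<\infty$, using positivity of $\varphi$. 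Hence $\pi_\pm(g)$ is a well-defined bounded operator.

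Next I would repeat the computation in the proof of Proposition~\ref{intop}, but with $\F f$ replaced by the general ${\rm L}^1$ function $g$. Inserting the explicit action $\pi_\pm(S(a,b))\psi(s) = e^{\pm ibe^{-s}}\psi(s+a)$ and pairing with $\psi,\chi\in {\rm L}^2(\mathbb R)$ gives
$$
\langle\chi,\pi_\pm(g)\psi\rangle = \int da\,db\,ds\; g(a,b)\,\overline{\chi}(s)\,e^{\pm ibe^{-s}}\,\psi(s+a)\,,
$$
and, after the change of variable $u=s+a$ and Fubini (legitimate since $g\in{\rm L}^1$ and $\psi,\chi\in{\rm L}^2$), this exhibits $\pi_\pm(g)$ as an integral operator with kernel $\int db\,g(u-s,b)\,e^{\pm ibe^{-s}}$. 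Substituting $g(a,b) = \F f(a,\eta(a)b)\,\eta(a)$ and changing variables $b\mapsto\varphi(u-s)\,b$ in the inner $b$-integral absorbs the Jacobian $\eta(u-s)$ and turns the exponent into $\pm ib\,\varphi(u-s)e^{-s}$, so that
$$
K^\pm_f(s,u) = \int db\,\F f(u-s,b)\,e^{\pm ib\,\varphi(u-s)e^{-s}}\,.
$$

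Finally I would observe that this is precisely the kernel appearing in the proof of Proposition~\ref{intop} with $\pm e^{-s}$ replaced throughout by $\pm\varphi(u-s)e^{-s}$; carrying out the $b$-integral against $\F f$ (which collapses the second Fourier variable to the value $\pm\varphi(u-s)e^{-s}$) yields the two claimed expressions $\sqrt{2\pi}\,\tilde f(u-s,\pm\varphi(u-s)e^{-s}) = \int dv\,f(v,\pm\varphi(u-s)e^{-s})\,e^{-iv(u-s)}$. The computation is entirely routine; the only points deserving a word of care are the membership $U\F f\in{\rm L}^1(\mathbb R^2)$ and the justification that the formal $b$-integral genuinely produces the operator kernel rather than merely a distributional identity, and both are handled exactly as in the right-invariant case treated in Proposition~\ref{intop}. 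So I do not anticipate any real obstacle.
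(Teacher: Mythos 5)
Your proposal is correct and follows exactly the route the paper indicates: the paper omits the computation, stating only that the result follows from \eqref{Wphi2} and Proposition~\ref{intop}, and you have supplied precisely that chain (the ${\rm L}^1$-invariance of $U$, the kernel computation for $\pi_\pm(g)$ with $g=U\F f$, and the change of variables $b\mapsto\varphi(u-s)b$). No discrepancies.
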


It can now be seen that the norm and trace formulas \eqref{HS} and
\eqref{trace} hold independently of the choice of $\varphi$.

\begin{prop}\label{trace2}
\begin{itemize}
\item[a)] $W^\varphi_\pm(f)$ is Hilbert-Schmidt if and only if the
  restriction of  $f$ to $\mathbb R\times\mathbb R_\pm$ is square
  integrable w.r.t. the measure $d\mu$ and we have
\be\label{HSphi}
\Vert W^\varphi_\pm(f)\Vert_2^2 =2\pi \int_{\mathbb R^2}dsdv |f(v,\pm e^{-s})|^2\,.
\ee
\item[b)] If $W^\varphi_\pm(f)$ is trace class then
\be\label{tracephi}
tr W^\varphi_\pm(f) = \int_{\mathbb R^2}dsdv f(v,\pm e^{-s})\,.
\ee
\end{itemize}
 \end{prop}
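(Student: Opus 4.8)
The plan is to reduce everything to the kernel formula in Proposition~\ref{intop2} and then repeat verbatim the computation from the proof of Proposition~\ref{trace1}. The key observation is that the change of variables $s\mapsto s$, $u\mapsto u$ only enters through the combination $u-s$ in the kernel $K^\pm_f(s,u)=\sqrt{2\pi}\,\tilde f(u-s,\pm\varphi(u-s)e^{-s})$, so the $\varphi$-dependence is confined to a translation in the first argument together with a rescaling of the second argument.

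\medskip

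For part a), I would compute
\be\nonumber
\int ds\,du\,|K^\pm_f(s,u)|^2 = 2\pi\int ds\,du\,|\tilde f(u-s,\pm\varphi(u-s)e^{-s})|^2
= 2\pi\int ds\,du\,|\tilde f(u,\pm\varphi(u)e^{-s})|^2\,,
\ee
substituting $u\mapsto u+s$ in the inner integral. Now for each fixed $u$ the map $s\mapsto \varphi(u)e^{-s}$ is a diffeomorphism of $\mathbb R$ onto $\mathbb R_+$ (since $\varphi>0$), with $ds = -|\beta|^{-1}d\beta$ where $\beta=\pm\varphi(u)e^{-s}$; hence the $s$-integral equals $\int_{\mathbb R_\pm}|\tilde f(u,\beta)|^2\,|\beta|^{-1}d\beta$, which is manifestly independent of $\varphi$ and equal to the corresponding expression in Proposition~\ref{trace1}. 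Applying the Plancherel theorem to the $u$-integral (exactly as before) converts $\tilde f$ back to $f$ and yields \eqref{HSphi}; in particular the Hilbert-Schmidt condition is equivalent to square-integrability of $f|_{\mathbb R\times\mathbb R_\pm}$ with respect to $d\mu$, independently of $\varphi$.

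\medskip

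For part b), when $W^\varphi_\pm(f)$ is trace class its trace is $\int ds\,K^\pm_f(s,s)$, and since the kernel at coincident points is $K^\pm_f(s,s)=\sqrt{2\pi}\,\tilde f(0,\pm\varphi(0)e^{-s}) = \int dv\,f(v,\pm\varphi(0)e^{-s})$, the diagonal value of the kernel depends on $\varphi$ only through the constant $\varphi(0)$. The substitution $s\mapsto s + \ln\varphi(0)$ absorbs this constant, giving $\operatorname{tr}W^\varphi_\pm(f) = \int ds\int dv\,f(v,\pm e^{-s})$, which is \eqref{tracephi}.

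\medskip

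I do not anticipate a genuine obstacle here; the content is entirely that $\varphi$ enters the kernel only through an $a$-dependent translation and an overall positive rescaling, both of which wash out under the relevant change of variables. The one point requiring a word of care is justifying the change of variables and the Fubini/Plancherel interchange, but for $f\in\CB$ (or more generally in the domain where the kernel is square-integrable, resp. the operator is trace class) the integrands are sufficiently nice that this is routine, and it is handled exactly as in the proof of Proposition~\ref{trace1}.
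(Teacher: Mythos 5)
Your proposal is correct and follows essentially the same route as the paper: both reduce to the kernel formula of Proposition~\ref{intop2}, shift the first argument by $s$, and then absorb the factor $\varphi(u)$ (resp.\ $\varphi(0)$) by a change of variable in $s$ before invoking Plancherel, exactly as in the proof of Proposition~\ref{trace1}. The only difference is cosmetic — you pass directly to the variable $\beta=\pm\varphi(u)e^{-s}$ with measure $|\beta|^{-1}d\beta$ where the paper first renames $s$ to $r$ with $e^{-r}=\varphi(v)e^{-s}$ — and both substitutions are equivalent.
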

\begin{proof}
a)~ Using Proposition~\ref{intop2} we get
\bq\nonumber
\Vert W^\varphi_\pm(f)\Vert_2^2 &=& 2\pi \int duds|\tilde
f(u-s,\pm\varphi(u-s)e^{-s})|^2\\ &=& 2\pi\int dvds  f(v,\pm\varphi(v)
e^{-s})|^2 = 2\pi\int dvdr| f(v,\pm e^{-r})|^2
\enq 
which coincides with \eqref{HS}.

b)~ Similarly, we have
$$
tr W^\varphi_\pm(f) = \sqrt{2\pi} \int ds  \tilde f(0,\pm\varphi(0) e^{-s}) = \int
dvds f(v,\pm e^{-s})
$$
as claimed.
\end{proof}

\begin{definition}
By $\bar\CB_\varphi$ we denote the Banach algebra obtained by
equipping ${\rm L}^2(\mathbb R^2, d\mu)$ with the product and involution
defined by \eqref{starphi2} and \eqref{invphi2} and extended from
$\CB'$ using \eqref{HSphi} in the same manner as for the case $\varphi=1$
treated previously.
\end{definition}

\begin{theorem}\label{iso}
The involutive algebras $\CB_\varphi$, resp. $\bar\CB_\varphi$, where $\varphi$
is positive and smooth, are isomorphic.
\end{theorem}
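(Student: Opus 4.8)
The idea is to transport structure along the explicit maps $U\circ\F$ that appear in \eqref{Wphi2}, producing a representation-independent isomorphism. For a positive smooth $\varphi$ write $U_\varphi$ for the operator $(U_\varphi h)(a,b)=h(a,\eta(a)b)\,\eta(a)$ with $\eta=\varphi^{-1}$, introduced above. A change of variables yields the composition law $U_{\varphi_1}U_{\varphi_2}=U_{\varphi_1\varphi_2}$, so each $U_\varphi$ is a linear bijection of $\CS_c$ onto itself with inverse $U_{1/\varphi}$; hence $V_\varphi:=U_\varphi\circ\F$ is a linear bijection of $\CB$ onto $\CS_c$, and the definitions \eqref{starphi1}--\eqref{invphi1} can be recast as
\[
V_\varphi(f*_\varphi g)=\tfrac{1}{2\pi}\,(V_\varphi f)\,\hat{*}\,(V_\varphi g),\qquad V_\varphi(f^{*_\varphi})=(V_\varphi f)^{\dagger},\qquad f,g\in\CB .
\]

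Given two positive smooth functions $\varphi,\varphi'$, the plan is to set $T:=V_{\varphi'}^{-1}\circ V_\varphi=\F^{-1}\circ U_{\varphi/\varphi'}\circ\F$. This is a linear bijection of $\CB$ onto itself, with inverse $V_\varphi^{-1}V_{\varphi'}$, and by construction $V_{\varphi'}(Tf)=V_\varphi f$. Substituting this into the two displayed identities gives at once, for $f,g\in\CB$,
\[
T(f*_\varphi g)=V_{\varphi'}^{-1}\bigl(\tfrac{1}{2\pi}(V_{\varphi'}Tf)\,\hat{*}\,(V_{\varphi'}Tg)\bigr)=(Tf)*_{\varphi'}(Tg),\qquad T(f^{*_\varphi})=(Tf)^{*_{\varphi'}},
\]
so $T$ is an isomorphism of the involutive algebra $\CB_\varphi$ onto $\CB_{\varphi'}$. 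This proves the statement at the level of the $\CB_\varphi$.

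For the completions the same map extends. First, $W^{\varphi'}_\pi(Tf)=\pi(V_{\varphi'}Tf)=\pi(V_\varphi f)=W^\varphi_\pi(f)$ for $\pi=\pi_\pm$, so with the shorthand $W^\varphi:=W^\varphi_+\oplus W^\varphi_-$ one has $W^{\varphi'}\circ T=W^\varphi$ on $\CB$. If $f\in\CB'$ then $W^\varphi(f)$ is of Hilbert--Schmidt type (as $f\in{\rm L}^2(\mathbb R^2,d\mu)$), hence so is $W^{\varphi'}(Tf)$; and since the Hilbert--Schmidt criterion and norm formula of Proposition~\ref{trace2} do not depend on the choice of $\varphi$, it follows that $Tf\in{\rm L}^2(\mathbb R^2,d\mu)$ with $\Vert Tf\Vert_{{\rm L}^2(\mathbb R^2,d\mu)}=\Vert f\Vert_{{\rm L}^2(\mathbb R^2,d\mu)}$, and likewise for $T^{-1}$. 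Since $\CB'$ is dense in ${\rm L}^2(\mathbb R^2,d\mu)$, the underlying space of every $\bar\CB_\varphi$, the maps $T$ and $T^{-1}$ extend to mutually inverse isometries $\bar\CB_\varphi\to\bar\CB_{\varphi'}$ and back; and as the star products and involutions are continuous in this norm, the homomorphism identities of the previous paragraph pass from the dense subspace $\CB'$ to the completions. Hence $\bar\CB_\varphi\cong\bar\CB_{\varphi'}$ as well.

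There is no deep obstacle here: the result is essentially transport of structure along an explicit map, and the only point needing care is the compatibility of the intertwiner $T$ — manifestly an algebra and $*$-isomorphism on $\CB$ — with the Hilbert-space norm defining $\bar\CB_\varphi$, which is precisely what the $\varphi$-independence of \eqref{HSphi} secures. Equivalently, at the cost of rerunning the surjectivity argument from the proof of Theorem~\ref{Bbar}, one may observe that for every positive smooth $\varphi$ the map $W^\varphi$ identifies $\bar\CB_\varphi$, up to the fixed factor $\sqrt{2\pi}$, with one and the same algebra of pairs of Hilbert--Schmidt operators on ${\rm L}^2(\mathbb R)$, so that $(W^{\varphi'})^{-1}\circ W^\varphi$ is the desired isomorphism and agrees with $T$.
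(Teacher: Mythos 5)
Your proposal is correct and follows essentially the same route as the paper: the isomorphism is transport of structure along $\F^{-1}U\F$, which is a homomorphism by the very definitions \eqref{starphi1}--\eqref{invphi1}, and the extension to the completions rests on the $\varphi$-independence of the Hilbert--Schmidt norm formula \eqref{HSphi}, exactly as in the paper's argument. The only cosmetic difference is that you pass directly between two arbitrary $\varphi,\varphi'$ via $T=\F^{-1}U_{\varphi/\varphi'}\F$, whereas the paper normalizes through $\varphi=1$.
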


\begin{proof}
For the case of $\CB_\varphi$ we note that $\F^{-1}U\F$ maps
$\CB_\varphi$ onto $\CB=\CB_{\varphi=1}$ and is by construction a
homomorphism by
\eqref{star1},\eqref{inv1},\eqref{starphi1} and \eqref{invphi1}. The inverse map is 
obtained by replacing $\varphi$ by $\eta$.

The same argument applies to $\bar\CB_\varphi$ since Theorem~\ref{trace2}
shows that $\F^{-1} U\F$ is an isometry on $\CB'$ and
therefore its extension is an isometry from $\CB_\varphi$ onto $\CB$. 
\end{proof}

\begin{remark} The quantization maps $W_\pm$ were also considered in
  \cite{Agostini} and relations \eqref{HS} and \eqref{trace} were
  likewise derived.

For the particular case $\varphi(a) = \frac{e^a-1}{a}$, relations
\eqref{HSphi} and \eqref{tracephi} also appear in \cite{DabPia1}.
\end{remark}

\section{Hopf algebra structure}\label{sec:3}

The star algebras $\CB_\varphi$ or $\bar\CB_\varphi$ defined in the previous
section obviously do not contain the coordinate functions $\alpha$ and
$\beta$. Hence, to obtain a representation of $M_\kappa$ with $\alpha$ and 
$\beta$ representing the generators $t$ and $x$ in \eqref{kappa_1} we need an 
extension of the domain of definition for the star product and involution. 
It is the purpose of this section to exhibit such an extension. 

As originally mentioned in \cite{kappafirst} and developed in \cite{MajRue}, 
$M_\kappa$ has a natural Hopf algebra structure, which arises by dualization 
of the momentum subalgebra of the $\kappa$-Poincar\'e Hopf algebra. The coalgebra
structure $(\cop,\varepsilon)$ and antipode $S$ are defined by
$$
\begin{aligned} 
&\cop t = t \ts 1 + 1 \ts t\,, \;\;\; \cop x = x \ts 1 + 1 \ts x\,, \\
&\varepsilon(t)=\varepsilon(x)=0\,,\\
&S(t) = -t\,,\quad S(x) = -x\,.
\end{aligned}
$$
As will be seen in Theorem~\ref{iota} below the extension we present allows a realization of the
full Hopf algebra structure of $\kappa$-Minkowski space.
Unless stated explicitly otherwise we work with the star product
associated with the right invariant measure because of its simple 
form (\ref{star2}). Analogous results for the $\star$-product
\eqref{leftstar} are obtained similarly.

\subsection{The algebra $\CC$}

Using standard notation
$\partial_\alpha^n=\prod_{i=1}^k\frac{\partial}{\partial{\alpha_i}}$
for $\alpha=(\alpha_1,\dots,\alpha_k)\in\mathbb R^k\,,
n=(n_1,\dots,n_k)\in\mathbb N_0^k$, $\mathbb N_0= \{0, 1, 2,
3,\dots\}$, and with $|\cdot |$ denoting the Euclidean norm on
$\mathbb R^k$ we introduce the following function spaces.  

\begin{definition}\label{defC}
Let $\CC_k$ be the space of smooth functions $f(\alpha,\beta)$ on $\mathbb R^{2k}$ satisfying
polynomial bounds of the form
\be\label{pol}
|\partial_\alpha^n\partial_\beta^m f(\alpha,\beta)|\;\leq\;
c_{n,m}(1+|\alpha|)^{N_n}(1+|\beta|)^{M_{n,m}}\,,
\ee 
for all $\alpha,\beta\in\mathbb R^k$  and such that the Fourier
transform $\tilde f$ of $f$ (as a tempered distribution) w.r.t. 
$\alpha$ has compact support in $\alpha$. Here, $n,m \in {\mathbb N}^k$ 
are arbitrary and $N_n, M_{n,m}$ are constants of which the former 
is independent of $m$, and $c_{n,m}$ is a positive constant.

Given $f\in \CC_k$ we denote by $K_f$ the smallest compact subset of
$\mathbb R^k$ such that $\hbox{supp}(f)\subseteq K_f\times \mathbb R^k$ and we
call $K_f$ the $\alpha$-support of $f$.

For $k=1$ we set $\CC=\CC_1$ and we have the canonical 
inclusion $\CC \otimes \CC \hookrightarrow \CC_2$ given by:
$$
(f\ts g)(\alpha_1,\alpha_2,\beta_1,\beta_2) = f(\alpha_1,\beta_1)
g(\alpha_2,\beta_2),
$$
for $f,g\in\CC$.
\end{definition}

\begin{remark}
Note that $\CB \subseteq \CC$ and, additionally, if $f\in \CC$ and $p$ is 
a polynomial in $\alpha,\beta$, then $p$ and $pf$ are in $\CC$.
\end{remark}

In order to extend the $*$-product to $\CC$ let $f,g\in\CC$ and define
for fixed $\alpha, \beta\in\mathbb R$,
\be
g_{\alpha,\beta}(v) = g(\alpha, e^{-v}\beta)e^{i\alpha v}\,,\quad v\in
\mathbb R\,.
\ee
Motivated by \eqref{star3} we then set  
\be\label{star4}
(f\,*\,g)(\alpha,\beta) = \frac{1}{\sqrt{2\pi}}\int dv \tilde f(v,\beta) g_{\alpha,\beta}(v)
\ee
which is well-defined since $\tilde f(v,\beta)$ has compact support in
$v$ and $g_{\alpha,\beta}$ is a smooth function.
We show below that $f * g\in \CC$  and that
\be\label{addsupp}
K_{f*g} \subseteq K_f + K_g\,.
\ee
In fact, viewing the product \eqref{star3} as a linear map on $\CC\ts\CC$,
it extends to a linear map $m_*:\CC_2\to\CC$ by setting
\be\label{genprod}
\begin{aligned}
(m_*F)(\alpha,\beta) &= \frac{1}{2\pi}\int d\alpha'\int dv\, \chi^1_F(v)
F(\alpha',\alpha,\beta,e^{-v}\beta)e^{i(\alpha-\alpha')v}\\
& = \frac{1}{2\pi}\int d\alpha'\int dv\, \chi^1_F(v)
F(\alpha+\alpha',\alpha,\beta,e^{-v}\beta)e^{-i\alpha'v}\,,
\quad F \in \CC_2\,,
\end{aligned}
\ee
where $\chi^1_F$ denotes a smooth function on $\mathbb R$ of
compact support such that 
$$\chi_F^1(v_1)\tilde F(v_1,v_2,\beta_1,\beta_2) =
\tilde F(v_1,v_2,\beta_1,\beta_2)$$
 as distributions, that is $\chi_F^1$
equals $1$ on a neighborhood of the projection of $K_F$ on the 
first axis. Note that \eqref{genprod} coincides with \eqref{star4} 
if $F=f\ts g, f,g\in
\CC$. Convergence of the double integral in \eqref{genprod}
is a consequence of the polynomial bounds \eqref{pol} for $F$, as can
be seen as follows. Let $\zeta$ be a smooth function of compact
support on $\mathbb R$ that equals $1$ on a neighborhood of $0$ and
write the integral in \eqref{genprod} as a sum of two terms $F_1(\alpha,\beta)$ 
and $F_2(\alpha,\beta)$ where 
\be\label{F1}
F_1(\alpha,\beta) =  \frac{1}{2\pi}\int d\alpha'\int dv\, \zeta(\alpha-\alpha')\chi_F(v)
F(\alpha',\alpha,\beta,e^{-v}\beta)e^{i(\alpha-\alpha')v}\,.
\ee
Obviously, this latter integral is absolutely convergent and by
repeated differentiation w.\,r.\,t. $\alpha, \beta$ it is seen that $F_1$
is smooth and satisfies polynomial bounds of the form \eqref{pol}. 
For $F_2$, given by formula \eqref{F1} with $\zeta$ replaced by
$1-\zeta$, one obtains after $N$ partial integrations w.\,r.\,t. $v$
the expression
\be
\begin{aligned}
&F_2(\alpha,\beta) =  & \\
&= \frac{i^N}{2\pi}\int \! d\alpha' \! \int \!dv \,
(\alpha \!-\!\alpha')^{-N}(1\!-\!\zeta(\alpha\!-\!\alpha'))\frac{\partial^N}{\partial v^N}\left(\chi_F(v)
F(\alpha',\alpha,\beta,e^{-v}\beta)\right)e^{i(\alpha-\alpha')v}\,.&
\end{aligned}
\ee
Choosing $N$ large enough one obtains an absolutely convergent integral
as a consequence of \eqref{pol}, using that $N_n$ is independent of $m$. Applying 
the same argument to derivatives of the integrand w.\,r.\,t. $\alpha, \beta$ it follows 
easily that $F_2$ is smooth and satisfies the bounds \eqref{pol}. In the Appendix
it is proven that $m_*F$ is independent of the choice of $\chi^1_F$
with the mentioned property and that
\be\label{supprod}
\mbox{supp}({m_*F})\subseteq \{v_1+v_2\mid (v_1, v_2)\in K_F\}\times\mathbb R\,,
\ee
of which \eqref{addsupp} is a special case.
In particular, $m_*(F)$ has compact $\alpha$-support and hence we may conclude
that $m_*(F) \in \CC$ for all $F \in \CC_2$. 

More generally, we can define maps $\CC_{k+1}\to\CC_k$ by
letting $m_*$ act on any two pairs of variables
$(\alpha_i,\beta_i),(\alpha_j,\beta_j)$  while keeping the 
other variables fixed. We shall use the notation $m_*\ts 1$ and $1\ts m_*$ for the
maps $\CC_3\to\CC_2$ where $m_*$ acts on
$(\alpha_1,\beta_1),(\alpha_2,\beta_2)$ and
$(\alpha_2,\beta_2),(\alpha_3,\beta_3)$, respectively. Using arguments
similar to those above one proves associativity of $m_*$, that is
\be\label{genass}
m_*(m_*\ts 1)= m_*(1\ts m_*)\,.
\ee
Details are given in the appendix.

Next, we proceed to define the involution on $\CC$. A convenient form is 
obtained from \eqref{inv3} which, after a simple change of variables, yields 
\be\label{inv4}
\int d\alpha d\beta\, f^*(\alpha,\beta)\,\tilde\phi(\alpha,\beta) = \int dv
d\beta \,\bar{\tilde f}(v,\beta)\,
\chi_f(v)\,\phi(-v,e^{-v}\beta)\,e^{-v}\,,
\ee
for $\phi\in\CS(\mathbb R^2)$, where $\chi_f$ is an  arbitrary smooth function of compact
 support that equals $1$ on a neighborhood of $K_f$. Defining
\be\label{R-op}
(R_f\phi)(v,\beta)  = \chi_f(v)\,\phi(-v,e^{-v}\beta)\,e^{-v}\,,
\ee
for  $\phi\in\CS(\mathbb R^2)$, it is clear that
$R_f$ is a continuous mapping from $\CS(\mathbb R^2)$ into itself. Hence, it follows that
\eqref{inv4} defines $f^*$ as a tempered distribution for any
$f\in\CC$. We refer to the appendix for a proof that $f^*$ is
independent of the choice of function $\chi_f$ with the asserted property. The
Fourier transform of $f^*$ w.\,r.\,t. $\alpha$ is given by
\be\label{inv5}
\widetilde{f^*}(\phi) = \bar{\tilde f}(R_f\phi),
\ee 
from which it is clear that the $\alpha$-support of $f^*$ fulfils
\be\label{suppstar}
K_{f^*} \subseteq - K_f\,.
\ee
Hence we can choose $\chi_{f^*}(v)=\chi_f(-v)$. It is then easily
verified that
$$
R_f R_{f^*}\phi = \chi_f^2\, \phi
$$
which by use of  \eqref{inv5} gives 
$$
\widetilde{f^{**}}(\phi) = \bar{\widetilde{f^*}}(R_{f^*}\phi)
= \tilde f(R_fR_{f^*}\phi) = \tilde f(\phi)\,,
$$
since $\chi_f^2$ equals $1$ on a neighbourhood of $K_f$. This shows that
$$
f^{**} = f\,,\qquad f\in\CC\,.
$$
In order to show that $f^*\in\CC$ for any $f\in\CC$ we first note that
$f^*$ is, in fact, a function given by the following generalization of \eqref{inv2},
\be\label{geninv}
f^*(\alpha,\beta) = \frac{1}{2\pi}\int d\alpha'\int dv\, \chi_f(-v)\, \bar f(\alpha +
\alpha',e^{-v}\beta)\,e^{-i\alpha' v}\,.
\ee
Indeed, convergence of this double integral is a consequence of \eqref{pol},
which can be seen by arguments similar to those for $m_*$ as
follows. Let again $\zeta$ be a smooth function on
$\mathbb R$ of compact support that equals $1$ on a neighborhood of
$0$ and write the integral in \eqref{geninv} as a sum of two terms
$f^*_1(\alpha,\beta)$ and $f^*_2(\alpha,\beta)$, where
\be\label{f1star}
f^*_1(\alpha,\beta) = \frac{1}{2\pi}\int d\alpha'\int dv \, \zeta(\alpha')\chi_f(-v) \bar f(\alpha +
\alpha',e^{-v}\beta)e^{-i\alpha' v}\,.
\ee
Clearly, this latter integral is absolutely convergent and by repeated
differentiation w.\,r.\,t. $\alpha,\beta$ it follows easily that
$f^*_1$ is smooth and satisfies \eqref{pol}. For $f^*_2$ one obtains after 
$N$ partial integrations w.\,r.\,t.  $v$ the expression
\be\label{f2star}
f^*_2(\alpha,\beta) =   \frac{(-i)^N}{2\pi} \int d\alpha' \int dv \,
(1-\zeta(\alpha'))\frac{\partial^N}{\partial v^N}\left(\chi_f(-v) \bar f(\alpha +
\alpha',e^{-v}\beta)\right)e^{-i\alpha' v}\,.
\ee
By choosing $N$ large enough this integral is absolutely convergent by
\eqref{pol}. Applying the same argument to arbitrary derivatives of
the integrand w.\,r.\,t. $\alpha,\beta$ it follows easily that
$f^*_2$ is smooth and satisfies \eqref{pol}. Having proven convergence
of the integral \eqref{geninv}, its
coincidence with $f^*$ follows easily. Hence, we have shown that
$f^*$ belongs to $\CC$ and is given by \eqref{geninv}.

Evidently, the preceding arguments can be extended to show that, more generally, an
involution $^*$ on $\CC_k$ is obtained by setting
\be\label{geninv-k}
\begin{aligned}
F^*(\alpha,\beta) =& \\
&= (2\pi)^{-k}\int d^k\alpha'\int d^kv \chi_F(-v) \bar F(\alpha +
\alpha',e^{-v_1}\beta_1,\dots,e^{-v_k}\beta_k)e^{-i(\alpha'_1 v_1+\cdots+\alpha'_k v_k)}\,,
\end{aligned}
\ee
for $F\in\CC_k$, where $\chi_F$ is a smooth function that equals $1$
on a neighbourhood of $K_F$.  
We then have the following relation, whose proof is given in the appendix,
\be\label{antihom}
(m_*F)^* = m_*(F^*)^\wedge\,,\quad F\in\CC_2\,,
\ee
where $^\wedge$ denotes the flip operation on $\CC_2$,
$$
F^\wedge(\alpha_1,\alpha_2,\beta_1,\beta_2) =
F((\alpha_2,\alpha_1,\beta_2,\beta_1)\,.
$$

As a special case we get
\be\label{invalg2}
(f\,*\,g)^* = g^*\,*\,f^*
\ee
for all $f,g\in\CC$.

To summarize, we have established the following result.

\begin{prop}\label{Calg}
 $\CC$ equipped with the $*$-product \eqref{star4} and
$^*$-involution \eqref{geninv} is an involutive algebra.
\end{prop}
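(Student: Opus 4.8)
The plan is to assemble the proposition from the facts already gathered in the preceding discussion together with the identities whose proofs are relegated to the appendix. Bilinearity of $*$ and linearity of $^*$ are read off directly from the defining formulas \eqref{star4} and \eqref{geninv}, so the real content of the statement is: (i) $\CC$ is closed under $*$ and under $^*$; (ii) $*$ is associative; (iii) $^*$ is an involution, i.e. $f^{**}=f$ and $(f*g)^*=g^**f^*$ for all $f,g\in\CC$.

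Claim (i) is essentially already in hand. Writing $f*g=m_*(f\ts g)$, the analysis of the map $m_*\colon\CC_2\to\CC$ — convergence of the double integral in \eqref{genprod} via the split into $F_1$ and $F_2$ and sufficiently many partial integrations in $v$, crucially using that $N_n$ in \eqref{pol} does not depend on $m$ — shows $f*g\in\CC$, while \eqref{supprod} yields the support bound \eqref{addsupp}. The parallel estimate for \eqref{geninv}, splitting off $f^*_1$ and $f^*_2$, shows $f^*\in\CC$, and \eqref{suppstar} controls its $\alpha$-support. Thus only (ii) and (iii) require a separate argument.

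For (ii) I would observe that, with $f*g=m_*(f\ts g)$ and with $m_*\ts 1,\,1\ts m_*\colon\CC_3\to\CC_2$ as defined above, one has $(f*g)*h=m_*(m_*\ts 1)(f\ts g\ts h)$ and $f*(g*h)=m_*(1\ts m_*)(f\ts g\ts h)$; hence associativity is exactly the identity \eqref{genass}. For (iii), the relation $f^{**}=f$ has already been obtained above from \eqref{inv5} together with the admissible choice $\chi_{f^*}(v)=\chi_f(-v)$. Anti-multiplicativity is the special case \eqref{invalg2} of \eqref{antihom}: for $F=f\ts g$ one has $F^\wedge=g\ts f$ and, comparing \eqref{geninv-k} for $k=2$ with \eqref{geninv} (choosing a product cutoff $\chi_F=\chi_f\otimes\chi_g$, which is permissible since $F^*$ is independent of the cutoff), $F^*=f^*\ts g^*$; so $(m_*F)^*=m_*(F^*)^\wedge$ reads $(f*g)^*=m_*(g^*\ts f^*)=g^**f^*$.

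The genuinely new input is therefore concentrated in the appendix, where one establishes \eqref{genass} and \eqref{antihom} and the independence of $m_*F$ and $F^*$ from the auxiliary cutoffs. I expect the main obstacle to lie there rather than in the assembly above: it is the bookkeeping with the compactly supported functions $\chi$ in the iterated integrals and the repeated use of integration by parts in the $v$-variables needed to justify the Fubini-type interchanges. The structural point that makes all of this go through is once more that the polynomial degree $N_n$ in the $\alpha$-bound of \eqref{pol} is independent of the number of $\beta$-derivatives, which guarantees that arbitrarily many partial integrations in $v$ remain available no matter how many $\beta$-derivatives have already been applied.
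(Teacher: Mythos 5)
Your proposal is correct and follows essentially the same route as the paper: the paper itself gives no separate proof of this proposition but prefaces it with ``To summarize, we have established the following result,'' the content being exactly the closure and convergence arguments (the $F_1/F_2$ splitting with partial integrations in $v$, relying on $N_n$ being independent of $m$), the identity $f^{**}=f$ via \eqref{inv5} and the choice $\chi_{f^*}(v)=\chi_f(-v)$, and the appendix results \eqref{genass} and \eqref{antihom} together with the cutoff-independence statements. Your identification of where the real work lies, and your reduction of \eqref{invalg2} to \eqref{antihom} via $(f\ts g)^*=f^*\ts g^*$ with a product cutoff, match the paper's logic.
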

   
This algebra can be viewed as an involutive subalgebra of the 
multiplier algebra of $\CB$:

\begin{corollary}
If $f \in \CC$ and $g \in \CB$ then both $f * g$ and $g *f$ are in $\CB$.
\end{corollary}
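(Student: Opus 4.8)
**Proof proposal for the Corollary (that $f*g,\;g*f\in\CB$ whenever $f\in\CC$, $g\in\CB$).**

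The plan is to exploit the formula \eqref{star4} together with the already-established fact (Proposition~\ref{Calg} and the surrounding discussion) that $f*g$ is a well-defined smooth function satisfying the polynomial bounds \eqref{pol}, and then to upgrade ``polynomially bounded smooth with compact $\alpha$-support'' to ``Schwartz with compact $\alpha$-support'' by using the extra decay carried by $g\in\CB=\F(\CS_c)$. First I would recall that, by definition of $\CB$, the function $g$ lies in $\CS_c$ after Fourier transform, so $\tilde g(v,\beta)$ — the partial Fourier transform in the first variable — is a Schwartz function with compact support in $v$; moreover $g$ itself, together with all its $\beta$-derivatives, decays faster than any power of $\beta$, uniformly in $\alpha$, and $g$ has compact $\alpha$-support. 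The key structural observation is that in the representation
\[
(f*g)(\alpha,\beta) \;=\; \frac{1}{\sqrt{2\pi}}\int dv\,\tilde f(v,\beta)\,g(\alpha,e^{-v}\beta)\,e^{i\alpha v},
\]
the $v$-integration is over the compact $\alpha$-support of $f$, so uniformity in $v$ is automatic; the $\alpha$-support of $f*g$ is contained in $K_f+K_g$ by \eqref{addsupp}, hence compact; and rapid decay in $\beta$ comes entirely from the factor $g(\alpha,e^{-v}\beta)$.

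The main steps, in order: (1) Note $f*g\in\CC$ by Proposition~\ref{Calg} applied via the inclusion $\CB\subseteq\CC$, so $f*g$ is smooth, satisfies \eqref{pol}, and has compact $\alpha$-support $\subseteq K_f+K_g$. It remains only to improve the $\beta$-growth in \eqref{pol} to rapid decay, and to promote ``polynomial growth in $\alpha$'' to Schwartz-type decay in $\alpha$. (2) For decay in $\beta$: differentiate under the integral sign in \eqref{star4}; each $\partial_\beta$ either hits $\tilde f(v,\beta)$ (which merely multiplies by another compactly-$v$-supported Schwartz factor in $v$ and keeps polynomial control) or hits $g(\alpha,e^{-v}\beta)$, producing $e^{-v}(\partial_\beta g)(\alpha,e^{-v}\beta)$; since $v$ ranges over the fixed compact set $K_f$ and $g\in\CS_c$, the resulting integrand is bounded by $C_N(1+|e^{-v}\beta|)^{-N}$ for every $N$, uniformly in $v\in K_f$, and integrating the compactly-supported-in-$v$ bound gives decay $C_N'(1+|\beta|)^{-N}$ for every $N$. (3) For decay in $\alpha$: the $\alpha$-support of $f*g$ is already compact, so there is nothing to prove — rapid decay in $\alpha$ is vacuous on a compact set, and combined with smoothness this puts $f*g$ in $\CS_c$... wait, more precisely it puts $f*g\in\F^{-1}(\CS_c)$? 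Here one must be slightly careful: $\CB=\F(\CS_c)=\F^{-1}(\CS_c)$, and ``compact $\alpha$-support'' is exactly the defining feature of $\F^{-1}(\CS_c)$; so showing $f*g$ is Schwartz with compact $\alpha$-support shows $f*g\in\F^{-1}(\CS_c)=\CB$. (4) The case $g*f$: by the anti-homomorphism property \eqref{invalg2}, $g*f=(f^**g^*)^*$; since $f^*\in\CC$ (Proposition~\ref{Calg}) and $g^*\in\CB$ (Corollary~\ref{staralg}), the first part gives $f^**g^*\in\CB$, and then $(f^**g^*)^*\in\CB$ again by Corollary~\ref{staralg}. Alternatively one argues directly from \eqref{star2} with the roles adjusted, but the involution route is cleaner.

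The step I expect to be the main obstacle is step (2): making the interchange of differentiation and integration rigorous and, more importantly, checking that the bound on $\partial_\beta^m(f*g)$ is \emph{uniform in $\alpha$} and decays in $\beta$ \emph{faster than every polynomial}. The subtlety is the change of argument $\beta\mapsto e^{-v}\beta$: for $v$ in the compact set $K_f$ the factor $e^{-v}$ is bounded above and below away from zero, so $(1+|e^{-v}\beta|)^{-N}$ and $(1+|\beta|)^{-N}$ are comparable with constants depending only on $\max_{v\in K_f}|v|$ — this is where compactness of $K_f$ is essential and is precisely the point where an $f\in\CC$ with noncompact $\alpha$-support would fail. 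Once this uniform comparison is in hand, the rest is a routine Leibniz-rule bookkeeping over which factor each derivative hits, entirely parallel to the convergence arguments already carried out for $m_*$ and for \eqref{geninv} in the discussion preceding Proposition~\ref{Calg}; so I would present step (2) in detail and merely remark that steps (1), (3), (4) follow from results already established.
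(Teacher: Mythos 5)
Your reduction of $g*f$ to $f*g$ via the involution, and your step (2) deriving rapid decay in $\beta$ from the compactness of $K_f$ and the Schwartz decay of $g$, both match the paper's proof in substance (the paper packages the estimate as $|f*g(\alpha,\beta)|\leq C'(1+|\beta|)^M\Vert\chi_f g_{\alpha,\beta}\Vert'$ with $\Vert\cdot\Vert'$ a Schwartz norm rather than differentiating under the integral, but the mechanism is the same). The problem is step (3). In Definition~\ref{defC} the ``$\alpha$-support'' $K_h$ of $h$ is the support of the \emph{partial Fourier transform} $\tilde h$ in its first variable, not the support of $h$ itself; an element of $\CB=\F^{-1}(\CS_c)$ is a Schwartz function whose Fourier transform is compactly supported in the first variable, and such a function is in general not compactly supported in $\alpha$ at all (compare $e^{i\alpha}$, whose Fourier transform is $\delta(v-1)$). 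Consequently rapid decay of $f*g$ in $\alpha$ is not ``vacuous'': it is a genuine part of the assertion $f*g\in\CS(\mathbb R^2)$, and your argument never establishes it --- step (2) only controls the $\beta$-decay, with at best polynomial control in $\alpha$.

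The gap is fixable by the same mechanism you already use for $\beta$: since $g$ is Schwartz in \emph{both} variables and $v$ ranges over the compact set $\supp\chi_f$, one has $\Vert\chi_f g_{\alpha,\beta}\Vert'\leq C_{N',M'}(1+|\alpha|)^{-N'}(1+|\beta|)^{-M'}$ for every $N',M'$ (the powers of $\alpha$ produced when $v$-derivatives in the Schwartz norm hit the phase $e^{i\alpha v}$ inside $g_{\alpha,\beta}$ are absorbed by the rapid $\alpha$-decay of $g$); pairing with $\tilde f(\cdot,\beta)$ --- written as $\int d\alpha'\, f(\alpha',\beta)\,\CF(\chi_f g_{\alpha,\beta})(\alpha')$ and estimated via the polynomial bound \eqref{pol} on $f$ --- then yields rapid decrease of $f*g$ in $\alpha$ and $\beta$ simultaneously. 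This is exactly the paper's argument; with that correction your proof coincides with it.
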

\begin{proof}
It suffices to show the result only for $f*g$, since both $\CB$ and $\CC$
are involutive algebras. We know that $f*g$ is in $\CC$ so it suffices to 
show that it is a Schwartz function whenever $g$ is. Using (\ref{star4}) 
we have
$$
\begin{aligned}
|f*g(\alpha,\beta)| = &\frac{1}{\sqrt{2\pi}}\left|\int d\alpha'
  f(\alpha',\beta)\CF(\chi_f g_{\alpha,\beta})(\alpha')\right|\\
&\leq C (1+|\beta|)^{M}\Vert \CF(\chi_f g_{\alpha,\beta})\Vert\\
&\leq C'(1+|\beta|)^M \Vert \chi_f g_{\alpha,\beta}\Vert'\,,
\end{aligned}
$$
where $\Vert\cdot\Vert, \Vert\cdot\Vert'$
are appropriate Schwartz norms, $C, C'$ are constants and we have used
\eqref{pol}. If $g$ is a Schwartz function we clearly have 
$$
\Vert \chi_f g_{\alpha,\beta}\Vert' \leq
C_{N',M'}(1+|\alpha|)^{-N'}(1+|\beta|)^{-M'}
$$
for arbitrary $N',M'\geq 0$ and suitable constants $C_{N',M'}$. Hence
$f*g$ is of rapid decrease. Similar arguments apply to derivatives of  \eqref{star4}
w.\,r.\,t. $\alpha,\beta$, thus proving that $f*g$ is a Schwartz
function if $g$ is. 

\end{proof}

\begin{example}\label{Ex1}
It is useful to note the following instances of the $*$-product.                            
\begin{itemize}
\item[a)]~ If $f,g\in \CC$ where $g(\alpha,\beta)=g(\alpha)$ depends only on
  $\alpha$ then $f*g(\alpha,\beta)=f(\alpha,\beta)g(\alpha)$.
\item[b)]~ If $f,g\in\CC$ and $f(\alpha,\beta)=f(\beta)$ depends only on
  $\beta$ then $f*g(\alpha,\beta) =f(\beta)g(\alpha,\beta)$.
\item[c)]~ If $f(\alpha,\beta)=\alpha$ and $g\in\CC$ depends only on
  $\beta$ then
$$ 
(f*g)(\alpha,\beta) = \alpha g(\beta) + i \beta g'(\beta), \;\;\;
(g*f)(\alpha,\beta) = g(\beta) \alpha. 
$$
In particular, the constant function $1$ is a unit of $\CC$ and 
$$ \alpha*g(\beta) - g(\beta)*\alpha  =  i \beta g'(\beta)\,.$$
For $g(\beta)=\beta$ this relation yields
a representation of the defining relation \eqref{kappa_1} in terms of a
$*$-commutator with $t,x$ corresponding to $\alpha,\beta$. Note also that, $\alpha^*= \alpha$
and $\beta^* =\beta$ by \eqref{geninv}. 

As a result we see that the star algebra $\CC$ furnishes a representation 
of $\kappa$-Minkowski space, to be further developed in
Theorem~\ref{iota} below.
\end{itemize}
\end{example}
 
\subsection{The Hopf algebra $\CC$}

We now proceed to discuss the coalgebra structure on $\CC$ using the
same notation for the coproduct, counit and antipode as for $M_\kappa$. 
The coproduct is of the standard cocommutative form 
\be 
(\cop f)(\alpha_1,\alpha_2,\beta_1,\beta_2) = f(\alpha_1+\alpha_2,\beta_1+\beta_2). \label{coproduct} 
\ee
We show below that this defines a map $\cop:\CC \to \CC_2$. The maps 
$\cop\ts 1$ and $1\ts\cop$ have natural extensions to maps from
$\CC_2$ to $\CC_3$, for which we shall use the same notation: 
$$ 
\begin{aligned}
&(\cop\ts 1) f(\alpha_1,\alpha_2,\alpha_3,\beta_1,\beta_2,\beta_3) =
f(\alpha_1+\alpha_2,\alpha_3,\beta_1+\beta_2,\beta_3)\,, \\
&(1\ts\cop)f(\alpha_1,\alpha_2,\alpha_3,\beta_1,\beta_2,\beta_3) =
f(\alpha_1,\alpha_2+\alpha_3,\beta_1,\beta_2+\beta_3)\,.
\end{aligned}
$$

It is evident that $\cop$ is coassociative, that is
$$
(\cop\ts 1) \cop = (1\ts\cop) \cop,
$$
as maps from $\CC_1$ to $\CC_3$.

The counit $\varepsilon: \CC\to\mathbb C$ and antipode $S:\CC\to\CC$ are defined by
\be\label{counit}
\varepsilon(f) = f(0,0)\,,
\ee
and
\be\label{antipode}
(Sf)(\alpha,\beta) = \overline{f^*}(-\alpha,-\beta)\,,
\ee
respectively, for $f\in\CC$. Obviously, $\varepsilon$ and $S$ are
linear maps and it easily verified that 
$$
S^2 = \Id_\CC\,.
$$

We now state the main result of this section.

\begin{theorem}\label{iota} 
$\CC$ equipped with the $*$-product, the $^*$-involution,
the coproduct $\cop$, the counit $\varepsilon$ and the antipode $S$
defined above is a Hopf star algebra. Moreover, the algebra homomorphism 
$\iota$ from $M_\kappa$ to $\CC$ defined by 
$$
\iota(t) = \alpha\,,\qquad \iota(x) = \beta
$$ 
is compatible with the Hopf algebra structure:
\be\label{compat}
 \cop\,\iota = (\iota\ts\iota)\;\cop\,, \qquad \varepsilon\,\iota =
\varepsilon\,,\qquad S\,\iota = \iota\,S\,.
\ee
\end{theorem}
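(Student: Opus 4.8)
The plan is to verify the Hopf star algebra axioms one by one, exploiting the fact that on the Fourier-transform side the $*$-product becomes the convolution $\hat *$ on the group $G$ and the coproduct becomes pointwise multiplication (pullback along the group multiplication), so that all the compatibility identities reduce to elementary statements about functions on $G\times G$. First I would check that $\cop$ as defined by \eqref{coproduct} indeed maps $\CC$ into $\CC_2$: if $f\in\CC$ has $\alpha$-support $K_f$, then $\widetilde{\cop f}$ is supported in the antidiagonal translate of $K_f$, which is compact in $\mathbb R$ when one of the $\alpha$-variables is integrated against a test function, and the polynomial bounds \eqref{pol} are preserved because $(1+|\beta_1+\beta_2|)\leq(1+|\beta_1|)(1+|\beta_2|)$. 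Coassociativity of $\cop$ and the counit axioms $(\varepsilon\otimes 1)\cop=(1\otimes\varepsilon)\cop=\Id$ are then immediate from the explicit formulas. The nontrivial algebra/coalgebra compatibility is that $\cop$ is an algebra homomorphism, i.e. $\cop(f*g)=(\cop f)*(\cop g)$ where on the right $*$ denotes the product $m_*$ on $\CC_2$; I would prove this by transporting everything through the Fourier transform, where $\cop$ becomes $f\mapsto f\circ\mu$ with $\mu$ the group multiplication on $G$, and where $m_*$ becomes $\hat*$, so the identity becomes $(F\hat *G)\circ\mu=(F\circ\mu)\,\hat*_{G\times G}\,(G\circ\mu)$, which is just the statement that convolution on the product group $G\times G$ restricts correctly — a routine change of variables using \eqref{conv}.

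**The antipode and the star structure.**
Next I would verify the antipode axiom $m_*(S\otimes 1)\cop=m_*(1\otimes S)\cop=\eta\,\varepsilon$, where $\eta$ sends a scalar to the corresponding multiple of the unit $1$ (which by Example~\ref{Ex1}(c) is a genuine unit of $\CC$). Using \eqref{antipode}, $(Sf)(\alpha,\beta)=\overline{f^*}(-\alpha,-\beta)$ and the explicit form \eqref{geninv} of the involution, one computes $Sf$ in closed form; on the Fourier side $S$ becomes the antipode of the convolution algebra of $G$, namely $(SF)(s)=\Delta_G(s^{-1})F(s^{-1})$ with $\Delta_G$ the modular function — consistently with the $e^a$ factors appearing in \eqref{dagger} and \eqref{R-op} — and then the antipode identity is the standard fact that $\int F(sh^{-1})(SG)(h)\,dh$ collapses to $\varepsilon(F)\varepsilon(G)$ times the identity. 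I would also record $S^2=\Id_\CC$ (already observed above) and the compatibility of $S$ with $\cop$ and $\varepsilon$, i.e. $\cop S=(S\otimes S)\cop^{op}=(S\otimes S)\cop$ since $\cop$ is cocommutative. Finally, for the \emph{star} Hopf structure I would check that $\cop$ is a $*$-homomorphism, $\cop(f^*)=(\cop f)^*$ (with the involution \eqref{geninv-k} on $\CC_2$), which follows from the same Fourier-side computation together with \eqref{inv5}, and that $S(f^*)=(Sf)^*$, equivalently $(S\circ *)^2=\Id$; this last identity is immediate from $S^2=\Id$ and $f^{**}=f$ once one knows $S$ and $*$ commute, which is visible from \eqref{antipode}.

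**The morphism $\iota$.**
For the statement about $\iota\colon M_\kappa\to\CC$, I would first confirm that $\iota$ is a well-defined algebra homomorphism: by Example~\ref{Ex1}(c) the elements $\alpha,\beta\in\CC$ satisfy $\alpha*\beta-\beta*\alpha=i\beta$, which is exactly the defining relation \eqref{kappa_1} (with $t\leftrightarrow\alpha$, $x\leftrightarrow\beta$), so by the universal property of the enveloping algebra $M_\kappa$ the assignment $t\mapsto\alpha$, $x\mapsto\beta$ extends uniquely to an algebra homomorphism. The three compatibility identities in \eqref{compat} then only need to be checked on the generators $t,x$, since all the maps involved are algebra homomorphisms (for $\cop$ and $\varepsilon$) or anti-homomorphisms (for $S$, using \eqref{invalg2} and $S|_{M_\kappa}$ being an anti-homomorphism): $\cop\iota(t)=\cop\alpha=\alpha_1+\alpha_2=(\iota\otimes\iota)(t\otimes1+1\otimes t)$ and likewise for $x$ by \eqref{coproduct}; $\varepsilon\iota(t)=\alpha|_{(0,0)}=0=\varepsilon(t)$ and similarly for $x$ by \eqref{counit}; and $S\iota(t)=S\alpha=\overline{\alpha^*}(-\alpha,-\beta)=-\alpha=\iota(-t)=\iota S(t)$ using $\alpha^*=\alpha$ from Example~\ref{Ex1}(c), and analogously $S\alpha$ and $S\beta$ via \eqref{antipode}.

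**Main obstacle.**
The bulk of the work — and the step I expect to be the real obstacle — is establishing that $\cop$ is an algebra homomorphism $\CC\to\CC_2$ with the product $m_*$, together with the analogous $*$-homomorphism property, at the level of the polynomially bounded functions in $\CC$ rather than merely on $\CB$. On $\CB$ everything is literally the convolution algebra of $G$ and these facts are automatic, but on $\CC$ the products $m_*$ and the involution are defined by the regularized oscillatory integrals \eqref{genprod}, \eqref{geninv}, \eqref{geninv-k} involving cutoff functions $\chi_F$, and one must check that the formal manipulations (change of variables in the convolution, Fubini, moving the Fourier transform past $\mu$) are justified given only the bounds \eqref{pol} and compact $\alpha$-support — in particular that the result is independent of the chosen cutoffs, which the excerpt defers to the appendix. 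I would handle this exactly as the paper handles convergence of \eqref{genprod}: split each integral with a cutoff $\zeta$ near the origin into an absolutely convergent piece and a piece made absolutely convergent by $N$-fold integration by parts in the $v$-variables, using crucially that the exponent $N_n$ in \eqref{pol} is independent of $m$; then all Fubini swaps and changes of variable are legitimate term by term, and the cutoff-independence follows because any two admissible cutoffs agree on a neighbourhood of the relevant $\alpha$-supports. Everything else in the theorem is bookkeeping with the explicit formulas.
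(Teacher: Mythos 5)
Your proposal is correct and follows essentially the same route as the paper: explicit verification of the Hopf $*$-axioms via the Fourier/convolution picture on $G$, with the oscillatory integrals controlled by cutoffs and repeated integration by parts in the $v$-variables (using that $N_n$ in \eqref{pol} is independent of $m$), and the compatibility relations \eqref{compat} checked only on the generators via Example~\ref{Ex1}. The one point of divergence is emphasis: the coproduct homomorphism property, which you flag as the main obstacle, is in fact the easy step (the factor $\delta(v_1-v_2)$ in $\widetilde{\cop f}$ collapses the double integral \eqref{cop-2} to the one-variable product), whereas the genuinely delicate step is the antipode identity \eqref{antirel}, which you describe as a ``standard fact'' about group algebras but which the paper must justify by a two-scale regularization $\zeta_{R_1}(\alpha'')\zeta_{R_2}(\alpha'+\alpha'')$ followed by two successive limits --- exactly the technique you propose, just applied where it is actually needed.
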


\begin{proof}
To show that $\CC$ is a Hopf algebra we need to check that the
coproduct and counit are well defined, that they are algebra homomorphisms 
and that $S$ satisfies the conditions of an antipode, where the $*$-product on $\CC_2$
is defined by a straightforward generalisation of \eqref{star4} to
functions of $4$ variables, that is
\be
\begin{aligned}\label{cop-2}
&(F*G)(\alpha_1,\alpha_2,\beta_1,\beta_2) = \frac{1}{(2\pi)^2} 
\int dv_1\int dv_2\int d\alpha'_1\int d\alpha'_2\,\\
&~~~~~~~~~~~~~F(\alpha_1 \,+\,\alpha'_1,\alpha_2 \,+\,\alpha'_2,\beta_1,\beta_2)\,G(\alpha_1,\alpha_2,e^{-v_1}\beta_1,e^{-v_2}\beta_2)\,e^{-i(\alpha'_1v_1+\alpha'_2v_2)}\,,
\end{aligned}
\ee
for $F,G\in\CC_2$.

That $\varepsilon$ is a well-defined homomorphism follows by inserting 
$\alpha=\beta=0$ in \eqref{star4}, which gives
$$
\varepsilon(f*g) = \frac{1}{\sqrt{2\pi}} \int dv \, {\tilde f}(v,0) g(0,0) 
                 = f(0,0)\, g(0,0) = \varepsilon(f)\varepsilon(g)\,.
$$
Concerning the coproduct it is evident that $\cop f$ is a smooth
function satisfying the polynomial bounds \eqref{pol}. Noting that
\be
\widetilde{\cop f}(v_1,v_2,\beta_1,\beta_2) = \sqrt{2\pi}\delta(v_1-v_2) \tilde
f(v_1,\beta_1+\beta_2)\,,
\label{copc2}
\ee
where $\delta$ is the Dirac delta function, it follows that
$\widetilde{\cop f}$ has compact $\alpha$-support with 
$$
K_{\cop f} = \{(v,v)\mid v\in K_f\}\,.
$$
Hence $\cop f \in \CC_2$.

To show that $\cop$ is a homomorphism we write \eqref{cop-2} as
$$
\begin{aligned}
&(\triangle f)*(\triangle g)(\alpha_1,\alpha_2,\beta_1,\beta_2) = \\
&= \frac{1}{2\pi} \int  dv_1 dv_2 \, 
{\widetilde{\cop f}}(v_1,v_2,\beta_1,\beta_2)\cop g(\alpha_1,\alpha_2,
e^{v_1}\beta_1, e^{v_2} \beta_2)) e^{i (\alpha_1 v_1 + \alpha_2 v_2)}\,. 
\end{aligned}
$$
Using (\ref{copc2}) this yields
$$
(\cop f)*(\cop g)(\alpha_1,\alpha_2,\beta_1,\beta_2) =
\frac{1}{\sqrt{2\pi}} \int  dv\, \tilde
f(v,\beta_1+\beta_2)g(\alpha_1+\alpha_2, e^{-v}
(\beta_1+\beta_2))e^{i(\alpha_1+\alpha_2)v}\,, 
$$
which is seen to be identical to
$\cop(f*g)(\alpha_1,\alpha_2,\beta_1,\beta_2)$ by \eqref{star4} and
\eqref{coproduct}, as desired.

For the antipode we need to demonstrate the relation
\be
\label{antirel}
m_*\,(S\ts 1)\cop = \varepsilon \Id_\CC = m_*\,(1\ts S)\cop\,,
\ee
where $S\ts 1$ and $1\ts S$ denote the 
natural extensions to $\CC_2$ of the corresponding operators on $\CC\ts\CC$.
First, we use \eqref{geninv} and \eqref{coproduct} to write
$$
(S\ts 1)\cop f(\alpha_1,\alpha_2,\beta_1,\beta_2) =\\
\frac{1}{2\pi}\int d\alpha'\int dv'\chi_f(v')f(\alpha',-e^{v'}\beta_1
+\beta_2)e^{i(\alpha_2-\alpha_1-\alpha')v'}\,.
$$
Since this expression depends on $(\alpha_1,\alpha_2)$ only through $\alpha_1-\alpha_2$
it follows immediately from \eqref{genprod} that $m_*(S\ts 1)\cop f$ is independent of $\alpha$. 
Hence we may set $\alpha=0$ and obtain
\be
\begin{aligned}\label{int0}
&m_*(S\ts 1)\cop f(\alpha,\beta) =\\ &\frac{1}{(2\pi)^2}
\int \! d\alpha'' \! \int \! dv'' \! \int \! d\alpha' \! \int dv' \,
\chi_f(-v'')\chi_f(v')f(\alpha',-e^{v'}\beta \!+\! e^{-v''}\beta)e^{-i\alpha''(v'+v'')}
e^{-i\alpha'v'}.
\end{aligned}
\ee
Let now $\zeta_1$ be a smooth function on $\mathbb R$ of compact
support which equals $1$ on a neighborhood of $0$, and define
the functions $\zeta_R,\, R>0,$ by
$
\zeta_R(v) = \zeta_1(\frac{v}{R})\,, v\in\mathbb R\,.
$
Then insert
$$
1=
(\zeta_{R_1}(\alpha'')+(1-\zeta_{R_1}(\alpha''))(\zeta_{R_2}(\alpha'+\alpha'')+(1-\zeta_{R_2}(\alpha'+\alpha''))
$$

into the integrand and accordingly write the integral as a sum of four
integrals by expanding the product on the right-hand side. By performing an adequate
number of partial integrations w.\,r.\,t. $v', v''$ 
in the three terms containing at least
one factor $(1-\zeta_{R_1}(\alpha''))$ or $(1-\zeta_{R_2}(\alpha'+\alpha''))$,
we obtain absolutely convergent integrals that vanish in
the limit $R_1,R_2\to\infty$. In other words, the integral  \eqref{int0}
 can be obtained as the limit for $R_1, R_2\to\infty$ of the
 absolutely convergent integrals defined by inserting an extra
 convergence factor $\zeta_{R_1}(\alpha'')\zeta_{R_2}(\alpha'+\alpha'')$.
 For the regularized 
integrals we then obtain, after integrating over $\alpha''$, the expression
$$
\frac{1}{(2\pi)^{3/2}}\int dv' dv'' d\alpha'
\chi_f(-v'')\chi_f(v')R_1\CF(\zeta_1)(R_1(v'+v''))\zeta_1(\frac{\alpha'}{R_2})f(\alpha',-e^{v'}\beta +e^{-v''}\beta)e^{-i\alpha'v'}\,. 
$$
It is now easy to verify that the limit for $R_1\to\infty$ equals
$$
\frac{1}{2\pi}\int dv' d\alpha'
\chi_f(v')^2\,\zeta_1(\frac{\alpha'}{R_2})f(\alpha',0)\,e^{-i\alpha'v'}\,.
$$

Finally, letting $R_2\to\infty$ gives the result
$$
\frac{1}{\sqrt{2\pi}}\int d\alpha'\CF((\chi_f)^2)(\alpha')\,f(\alpha',0) = f(0,0)\,,
$$
where we have used that $(\chi_f)^2$ equals $1$ on a neighborhood of $K_f$. This proves the
first equality in \eqref{antirel}. The second one follows similarly. 

So far, we have demonstrated that $\CC$ is a Hopf algebra. To show that it is an
involutive Hopf algebra it remains to verify that $\cop$ and
$\varepsilon$ fulfil
\be
\label{invHopf}
(\cop f)^* = \cop(f^*)\qquad \mbox{and} 
\qquad \varepsilon(f^*) = \overline{\varepsilon(f)}\,,
\ee
where the involution on $\CC_2$ is given by \eqref{geninv-k}. The last relation in
(\ref{invHopf}) is obvious. To establish the former we note that for
$f\in\CC$ and $\tau\in\CS(\mathbb R^4)$ we get from \eqref{coproduct} by a simple change 
of variables that the action of $\widetilde{\cop f}$ as a distribution
on $\tau$ is given by 
$$
\widetilde{\cop f}(\tau) = \int d\alpha d\alpha' d\beta
d\beta'\,f(\alpha,\beta)\tilde{\tau}(\alpha-\alpha',\alpha',\beta-\beta',\beta')\,.
$$
Clearly, $\int d\alpha'd\beta'\tilde{\tau}(\alpha-\alpha',\alpha',\beta-\beta',\beta')$ is a
Schwartz function of $(\alpha,\beta)$ whose Fourier transform w.\,r.\,t. 
$\alpha$ at $(v,\beta)$ equals $\int d\beta'
\tau(-v,-v,\beta-\beta',\beta')$. Thus, setting
$$
(T\tau)(v,\beta) = \int d\beta' \tau(v,v,\beta-\beta',\beta'),
$$
we have
$$
\widetilde{\cop f}(\tau) = \tilde f(T\tau)\,,
$$
and hence by \eqref{inv5}
$$
\widetilde{\cop (f^*)}(\tau) = \widetilde{f^*}(T\tau) = \bar{\tilde f}(R_f T\tau)\,.
$$

Using \eqref{R-op} we have 

\begin{eqnarray*}
\chi_f(v)(R_f T\tau)(v,\beta) &=& \chi_f(v)^2\int d\beta'
\,\tau(-v,-v,e^{-v}\beta-\beta',\beta')\,e^{-v}\\
&=& \chi_f(v)^2\int d\beta'\,\tau(-v,-v,e^{-v}(\beta-\beta'),e^{-v}\beta')\,e^{-2v}\\
&=& T(R_f\ts R_f)\tau (v,\beta)\,.
\end{eqnarray*}

Inserting this into the previous equation yields 
$$
\widetilde{\cop (f^*)}(\tau) = \bar{\tilde f}(T(R_f\ts R_f)\tau) =
\overline{\widetilde{\cop f}}((R_f\ts R_f) \tau) = \widetilde{(\cop f)^*}(\tau)
\,,
$$
which proves the claim. 

Finally, knowing that $\cop, \varepsilon$ are homomorphisms and $S$
an antihomomorphism, the compatibility relations \eqref{compat} follow
by verifying their validity for the generators $t,x$. For $\cop$ and
$\varepsilon$ this is trivial to verify. As noted in Example~\ref{Ex1}
we have $\alpha^* = \alpha$ and $\beta^*=\beta$ such that 
$$
S(\alpha) = -\alpha\qquad\mbox{and}\qquad S(\beta) = -\beta\,.
$$
On the other hand, the antipode on $M_\kappa$ fulfils $St=-t\,,
Sx=-x$, which shows that the relation for $S$ in \eqref{compat} also 
holds for the generators. 
\end{proof}
\begin{remark}
The homomorphism property of $\cop$ proven above holds more generally
in the form
\be
M_*(\cop\ts\cop)F = \cop (m_* F)\,,\quad F\in\CC_2\,,
\ee
where $M_*:\CC_4\to \CC_2$ denotes the canonical extension of
\eqref{cop-2} given by
\be
\begin{aligned}\label{cop-3}
&M_*H(\alpha_1,\alpha_2,\beta_1,\beta_2) = \frac{1}{(2\pi)^2} 
\int \!dv_1 \! \int \! dv_2 \! \int \! d\alpha'_1 \! \int \! d\alpha'_2 \,\\
&~~~~\chi^1_H(v_1)\chi^2_G(v_2)\,H(\alpha_1 \!+\!\alpha'_1,\alpha_2 
\!+\!\alpha'_2,\alpha_1,\alpha_2,\beta_1,\beta_2,e^{-v_1}\beta_1,e^{-v_2}\beta_2)\,
e^{-i(\alpha'_1 v_1+\alpha'_2v_2)}\,,
\end{aligned}
\ee
for $H\in\CC_4$, where $\chi^1_H, \chi^2_H$ denote smooth functions of compact support
that equal $1$ on a neighborhood of the projection of $K_H$ onto the first and second 
axis, respectively.
The verification of \eqref{cop-3} is left to the reader. 
\end{remark}

\begin{remark} We have above used
  the $*$-product and $^*$-involution associated with the right
  invariant Haar measure on $G$ to equip $\CC$ with a Hopf algebra
  structure compatible with that of $\kappa$-Minkowski space. The
  reader may easily check that by similar arguments one obtains
  an alternative realization of $M_\kappa$ on the basis of the
  $\star$-product and $^\star$-involution associated with the left
  invariant Haar measure. 
\end{remark}

\section{Lorentz covariance}\label{sec:4}

A salient  feature of $\kappa$-Minkowski space is the existence of an action 
on it of a deformation of the Poincar\'e Lie algebra \cite{kappafirst,MajRue}, 
the so-called $\kappa$-Poincar{\'e} algebra. In two dimensions, the latter 
is usually presented as the Hopf algebra with generators $E,P$ and $N$, 
the energy, momentum and Lorentz boost, respectively, fulfilling the relations
\be
\begin{aligned}
&[P,E]=0, \;\;\;\;  [N,E] = P, \\
& \cop E = E \ts 1 + 1\ts E\,,\;\; \cop P = P \ts 1 + e^{-E} \ts P\,,\\
&[N,P]=  \frac{1}{2}(1-e^{ -2E}) -\frac{1}{2}P^2,\;\;
 \cop N = N \ts 1 + e^{-E} \ts N\,,
\end{aligned}
\label{poin1}
\ee
and with counit annihilating the generators whereas the antipode acts
according to 

$$ S(E) = -E, \;\;\; S(P) = - e^E P, \;\;\; S(N) = - e^E N. $$

As mentioned previously, the algebraic
$\kappa$-Minkowski space $M_\kappa$ can be defined as the dual of the Hopf
subalgebra generated by $E$ and $P$ \cite{MajRue}. It is the purpose of this section
to exhibit explicitly the action of the $\kappa$-Poincar{\'e} algebra
in terms of linear operators
on the realization $\CC$ of $M_\kappa$ developed in the previous
section. Moreover, we shall find that by
restriction  we obtain an action of the $\kappa$-Poincar{\'e}
algebra on the smaller algebra $\CB$.

To avoid the appearance of the exponential of $E$ in \eqref{poin1}
we prefer to introduce it as an invertible generator $\CE$ and define
the $\kappa$-Poincar{\'e} algebra $\CP_\kappa$ accordingly as the Hopf algebra generated
by $E, P, \CE, N$ fulfilling
\be\label{poin2}
\begin{aligned}
&[P,E] = [P,\CE] = [E, {\cal E}] = 0, \;\;  \\
&[N,E]= P\,,\;\;[N,{\cal E}] =  -{\cal E} P\,, \;\; 
[N, P] = \frac{1}{2}(1-{\cal E}^2) - \frac{1}{2}P^2\,,  \\
& \cop E =E\ts 1 +1\ts E\,, \;\;\;\;
\cop P = P \ts 1 + {\cal E} \ts P\,, \\
& \cop {\cal E} = {\cal E} \ts {\cal E}\,,\;\;\;\;
\cop N = N \ts 1 + {\cal E} \ts N\,,
\end{aligned}
\ee
and with counit and antipode given by
\bq
&\;&\varepsilon(E) = \varepsilon(P)=\varepsilon(N)=0\,,\;\;\;\varepsilon(\CE)
=1\;,\label{co2}\\
&\;&S(E) = -E\,,\;\;\;S(\CE) = \CE^{-1}\,,\;\;\; S(P) = - \CE^{-1} P, \;\;\; S(N) = - \CE^{-1} N\,.\label{anti2}
\eq

We also observe that, although the $\kappa$-Poincar\'e algebra was originally introduced 
without involution, it is easy to verify that 
\be\label{poininv}
E^* = E, \;\; P^* = P, \;\; N^* = -N, \;\;\CE^* =\CE, 
\ee
defines an involution on $\CP_\kappa$ making it a Hopf star
algebra. Note, however, that the involution does not commute with $S$.

\subsection{Action of the momentum subalgebra on $\CC$}

In order to define the action of $P,E,{\cal E}$ on $\CC$ we first
make a slight digression on imaginary translations of elements in
$\CC$.
 
Let $f \in \CC$. Since $\tilde f$ has compact $\alpha$-support it
follows (see e.g. \cite{RS}) that $f$ can be analytically continued to
an entire function of $\alpha$. The analytic continuation will
likewise be denoted  by $f$ and is given by
\be\label{ancont}
f(\alpha +i\gamma, \beta) = \frac{1}{\sqrt{2\pi}}\tilde f(\chi_f(v) e^{i(\alpha
  +i\gamma)v},\beta) =  \frac{1}{\sqrt{2\pi}}\int d\alpha' f(\alpha+\alpha',\beta)
\CF(e^{-\gamma v}\chi_f(v))(\alpha')\,.
\ee

For fixed $\gamma\in\mathbb R$ we claim that the function $T_\gamma f$
defined by
\be\label{Tdef}
(T_\gamma f)(\alpha,\beta) = f(\alpha+i\gamma,\beta)
\ee
belongs to $\CC$. Indeed, since $\CF(e^{-\gamma v}\chi_f(v))(\alpha')$
is a Schwartz function of $\alpha'$ we get immediately from
\eqref{ancont} that the derivatives of $T_\gamma f$ are obtained by
differentiating the integrand and, combining this with \eqref{pol},  it
follows easily that $T_\gamma f$ fulfils polynomial bounds of the
form \eqref{pol}. That $\widetilde{T_\gamma f}$ has compact
$\alpha$-support follows from
\be\label{Tfourier}
\widetilde{T_\gamma f}(v,\beta) = e^{-\gamma v}\tilde f(v,\beta)\,.
\ee
From this relation or, alternatively, from the uniqueness of the
analytic continuation of $f$ in $\alpha$ we conclude that 
the \emph{imaginary translation operators} $T_\gamma:\CC\to\CC$ form a
one-parameter group, 
$$
T_{\gamma+\eta} = T_\gamma T_\eta\,,\quad \gamma,\eta\in\mathbb
R\,,\qquad T_0 = \Id_\CC\,.
$$
Similarly, $n$-parameter groups of imaginary translation operators $T_{\underline{\gamma}}$ are
defined on $\CC_n$ for any $n\in\mathbb N$ and
$\underline{\gamma}\in\mathbb R^n$. We shall write $T_\gamma$  for $T_{(\gamma,\dots,\gamma)}$,
  independently of $n$, for $\gamma\in\mathbb R$.

We next note the following two properties of these maps. 

\begin{prop}
For fixed $\gamma\in\mathbb R$ the map $T_\gamma: \CC\to\CC$ is an
algebra automorphism, that is
\be\label{Tauto}
T_\gamma (m_* F) = m_* (T_\gamma F) \,,\quad F\in\CC_2\,.
\ee
Moreover,
\be\label{Tstar}
 T_\gamma(f^*) = (T_{-\gamma}f)^*\,,\qquad f\in\CC. 
\ee
\end{prop}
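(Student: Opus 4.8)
The plan is to prove both identities by working on the Fourier-transform side with respect to $\alpha$, where the imaginary translation $T_\gamma$ becomes simple multiplication by $e^{-\gamma v}$ according to \eqref{Tfourier}, and then comparing with the explicit formulas \eqref{genprod} for $m_*$ and \eqref{geninv} for the involution.

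For \eqref{Tauto}, first I would recall from \eqref{genprod} that
$$
(m_* F)(\alpha,\beta) = \frac{1}{2\pi}\int d\alpha'\int dv\,\chi_F^1(v)\,F(\alpha+\alpha',\alpha,\beta,e^{-v}\beta)\,e^{-i\alpha' v}\,.
$$
Analytically continuing in $\alpha$ — which is legitimate because $\widetilde{m_* F}$ has compact $\alpha$-support by \eqref{supprod}, so the continuation is given by the analogue of \eqref{ancont} — one replaces $\alpha$ by $\alpha + i\gamma$ throughout. The key observation is that the first slot $\alpha + \alpha'$ picks up a shift to $\alpha + i\gamma + \alpha'$ while the second slot also becomes $\alpha + i\gamma$; shifting the $\alpha'$-contour (or, cleaner, using the Fourier-side description) one sees that $(T_\gamma m_* F)(\alpha,\beta)$ equals the integral with $F(\alpha+\alpha',\alpha,\beta,e^{-v}\beta)$ replaced by $F(\alpha+\alpha'+i\gamma,\alpha+i\gamma,\beta,e^{-v}\beta) = (T_\gamma F)(\alpha+\alpha',\alpha,\beta,e^{-v}\beta)$, where on $\CC_2$ we use $T_\gamma = T_{(\gamma,\gamma)}$ as declared in the text. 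That is precisely $m_*(T_\gamma F)$. The cleanest way to make the contour shift rigorous is to pass to $\widetilde{F}$ in the first two variables: by \eqref{Tfourier} applied in each variable, $\widetilde{T_\gamma F}(v_1,v_2,\beta_1,\beta_2) = e^{-\gamma(v_1+v_2)}\widetilde F(v_1,v_2,\beta_1,\beta_2)$, and by the support statement \eqref{supprod} together with the Fourier-side form of \eqref{genprod}, $\widetilde{m_* F}(v,\beta) = \tfrac{1}{\sqrt{2\pi}}\int dv_2\,\widetilde F(v-v_2, v_2,\beta, e^{-v_2}\beta)$-type expression, so multiplying by $e^{-\gamma v}$ and distributing $e^{-\gamma v} = e^{-\gamma(v-v_2)}e^{-\gamma v_2}$ shows $\widetilde{T_\gamma m_* F} = \widetilde{m_*(T_\gamma F)}$, whence the identity by Fourier inversion in $\alpha$.

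For \eqref{Tstar}, I would use the explicit formula \eqref{geninv},
$$
f^*(\alpha,\beta) = \frac{1}{2\pi}\int d\alpha'\int dv\,\chi_f(-v)\,\bar f(\alpha+\alpha', e^{-v}\beta)\,e^{-i\alpha' v}\,,
$$
analytically continue in $\alpha$ to form $(T_\gamma f^*)(\alpha,\beta) = f^*(\alpha+i\gamma,\beta)$, and observe that the shift lands inside the argument of $\bar f$ as $\bar f(\alpha+\alpha'+i\gamma, e^{-v}\beta)$. Since $f$ is entire in its first variable, $\overline{f(\alpha+\alpha'+i\gamma,\zeta)} = \overline{f\big(\overline{(\alpha+\alpha'-i\gamma)}+2i\gamma,\zeta\big)}$ — more directly, $\overline{f(z,\zeta)}$ as a function of $z$ is the analytic continuation (in $\bar z$) of $\bar f$, so $\bar f(\alpha+\alpha'+i\gamma, e^{-v}\beta) = \overline{f(\alpha+\alpha'-i\gamma, e^{-v}\beta)} = \overline{(T_{-\gamma}f)(\alpha+\alpha', e^{-v}\beta)}$. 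Substituting this back into \eqref{geninv} with $f$ replaced by $T_{-\gamma}f$ gives exactly $(T_{-\gamma}f)^*(\alpha,\beta)$. Again the Fourier-side check is the safe route: by \eqref{inv5} and \eqref{Tfourier}, $\widetilde{f^*}$ involves $\bar{\tilde f}$ composed with the reflection $v\mapsto -v$ built into $R_f$, and the factor $e^{-\gamma v}$ from applying $T_\gamma$ to $f^*$ matches the factor $e^{-(-\gamma)(-v)} = e^{-\gamma v}$ produced by $\widetilde{(T_{-\gamma}f)^*}$, since the sign of the exponent flips once from $-\gamma$ to $+\gamma$ and once from $v$ to $-v$.

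The routine parts here are the contour-shift justifications and keeping track of signs; the one genuine subtlety, and the step I expect to need the most care, is the interplay between complex conjugation and analytic continuation in \eqref{Tstar} — one must be careful that $\bar f(\cdot+i\gamma,\cdot)$ really equals $\overline{(T_{-\gamma}f)(\cdot,\cdot)}$ and not $\overline{(T_{\gamma}f)(\cdot,\cdot)}$, which is exactly the origin of the sign reversal $\gamma\mapsto -\gamma$ in the statement. Doing the verification on the Fourier side, where $T_\gamma$ is multiplication by $e^{-\gamma v}$ and the involution involves $\tilde f\mapsto \bar{\tilde f}(-v,\cdot)$, makes this bookkeeping transparent and also sidesteps any convergence worries, since all the compact-$\alpha$-support and polynomial-bound properties needed have already been established for $\CC$ and $\CC_2$.
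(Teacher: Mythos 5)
Your proposal is correct and follows essentially the same route as the paper: both identities are obtained by analytically continuing the explicit integral formulas \eqref{genprod} and \eqref{geninv} in $\alpha$ (legitimate because of compact $\alpha$-support) and invoking uniqueness of the continuation, with the sign flip $\gamma\mapsto-\gamma$ in \eqref{Tstar} arising exactly as you identify, from the fact that the entire extension of $\bar f$ is $z\mapsto\overline{f(\bar z,\cdot)}$. The paper spends its effort on justifying analyticity of the parameter integrals via $\zeta_R$ regularization and uniform convergence on compacts --- the step you label ``routine'' --- while your Fourier-side check, where $T_\gamma$ is multiplication by $e^{-\gamma v}$, is an equivalent reformulation of the same argument (your schematic formula for $\widetilde{m_*F}$ has the wrong exponent in the last $\beta$-slot, but this does not affect the distribution of $e^{-\gamma(v_1+v_2)}$ over the two factors).
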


\begin{proof}
Since $T_\gamma^{-1} = T_{-\gamma}$, it is sufficient to verify \eqref{Tauto} and \eqref{Tstar}.
By \eqref{genprod} we have
\be\label{RHS2}
 m_* T_\gamma F(\alpha,\beta) = \int d\alpha'\int dv \chi^1_F(v)F(\alpha'
+ \alpha+i\gamma,\alpha +i\gamma,\beta,e^{-v}\beta)\,e^{-i\alpha'v}\,.
\ee
That this is an entire function of $z=\alpha+i\gamma$ for fixed
$\beta$ is seen as
follows. By inserting a convergence factor $\zeta_R(\alpha')$ into the
integrand we have, as seen previously, that the regularized integrals
converge to the integral \eqref{RHS2} as $R\to\infty$ for fixed $z$. It is easy to
see that the convergence is uniform in $z$ on compact subsets of
$\mathbb C$. Since
the regularized integrals are obviously analytic in $z$ it follows that
the same holds for \eqref{RHS2}. Hence this is the unique
entire function whose restriction to $\mathbb R$ coincides with
$m_*F(\alpha,\beta)$ for fixed $\beta$. But this function is by
definition equal to the lefthand side of \eqref{Tauto} for
$z=\alpha+i\gamma$. This concludes the proof of \eqref{Tauto}. 

Concerning \eqref{Tstar} we note that by \eqref{geninv}
\be
(T_{-\gamma} f)^*(\alpha,\beta) = \frac{1}{2\pi}\int d\alpha' \int dv\, \chi_f(-v)
\bar{f}(\alpha-i\gamma +\alpha',e^{-v}\beta)e^{-i\alpha' v},
\label{tgst}
\ee
which by similar arguments as those above is seen to be an entire
function of $z=\alpha +i\gamma$ for fixed $\beta$. Since it coincides with $f^*(\alpha,\beta)$
for $\gamma=0$ we conclude that it equals the lefthand side of 
\eqref{Tstar} for all $z\in \mathbb C$. This proves \eqref{Tstar}.

\end{proof}

By the preceding analyticity argument we obtain 
\be\label{Tfstar}
T_\gamma (f^*)(\alpha,\beta) =  \frac{1}{2\pi}\int d\alpha' \int dv\, \chi_f(-v)
\bar{f}(\alpha', e^{-v}\beta)e^{-\gamma v} e^{i(\alpha-\alpha') v}\,,
\ee
for $f\in\CC$, since the right hand side is seen to be an analytic function of
$z=\alpha+i\beta$ that coincides with the right hand side of
\eqref{tgst} for $\gamma=0$.

\smallskip

Now we can state the main result of this subsection on the action of the
 Hopf subalgebra generated by $E,P,\CE$, called the \emph{extended
momentum algebra}, on $\CC$.

\begin{theorem}\label{Momact}
The algebra $\CC$ is an involutive Hopf module algebra with respect to the
following linear action of the extended momentum algebra on $\CC$:
\be\label{momact}
 E \acts f = -i\frac{\partial f}{\partial \alpha}, \;\;\;
 P \acts f = -i\frac{\partial f}{\partial \beta}, \;\;\;
   {\cal E} \acts f = T_{1}f\,.
\ee
\end{theorem}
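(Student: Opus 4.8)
The claim has two parts: first, that \eqref{momact} defines an action of the extended momentum algebra (i.e.\ the three operators satisfy the defining relations $[P,E]=[P,\CE]=[E,\CE]=0$ and $\cop P=P\ts 1+\CE\ts P$, $\cop\CE=\CE\ts\CE$, $\cop E=E\ts 1+1\ts E$, with the stated counit and antipode); second, that $\CC$ is an \emph{involutive Hopf module algebra} under this action, which requires checking the module-algebra compatibility with $m_*$ and the coproduct of $\CP_\kappa$, compatibility with the counit, and compatibility with the involutions on $\CP_\kappa$ and $\CC$. I would organize the proof around these two headings.

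\textbf{Step 1: the algebra relations.} The operators $E\acts=-i\partial_\alpha$ and $P\acts=-i\partial_\beta$ obviously commute, and both commute with $\CE\acts=T_1$ because $T_\gamma$ acts only on the $\alpha$-dependence by an entire translation, so it commutes with $\partial_\beta$ trivially and with $\partial_\alpha$ since translation in $\alpha$ commutes with differentiation in $\alpha$ (visible directly from \eqref{Tfourier}: on the Fourier side $T_\gamma$ is multiplication by $e^{-\gamma v}$ and $\partial_\alpha$ is multiplication by $iv$). Invertibility of $\CE\acts$ holds because $T_1^{-1}=T_{-1}$. This disposes of all the commutation relations.

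\textbf{Step 2: the module-algebra axioms.} For a Hopf algebra $H$ acting on an algebra $A$, the module-algebra condition is $h\acts(ab)=\sum (h_{(1)}\acts a)(h_{(2)}\acts b)$ and $h\acts 1=\varepsilon(h)1$. I would verify this on the generators $E,P,\CE$ and then invoke the standard fact that the set of elements for which it holds is a subalgebra. For $E$: since $\cop E=E\ts1+1\ts E$, I need $-i\partial_\alpha(f*g)=(-i\partial_\alpha f)*g+f*(-i\partial_\alpha g)$, i.e.\ $E\acts$ is a derivation of $m_*$; this follows by differentiating \eqref{star4} under the integral sign (legitimate by the polynomial bounds and the compact $v$-support), noting $\partial_\alpha$ hits $\tilde f(v,\beta)$ trivially and hits $g_{\alpha,\beta}(v)=g(\alpha,e^{-v}\beta)e^{i\alpha v}$ producing exactly the two Leibniz terms after reassembling. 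For $P$: similarly $-i\partial_\beta$ is a derivation of $m_*$, again by differentiating \eqref{star4}, where $\partial_\beta$ acts on $\tilde f(v,\beta)$ and on $g(\alpha,e^{-v}\beta)$; the chain-rule factor $e^{-v}$ is absorbed correctly — this is the computation that reproduces the twist, so I would present it carefully. For $\CE$: since $\cop\CE=\CE\ts\CE$, I need $T_1(f*g)=(T_1 f)*(T_1 g)$, which is exactly \eqref{Tauto} with $\gamma=1$ applied to $F=f\ts g$, already proven. The unit condition $h\acts 1=\varepsilon(h)1$ is immediate: $\partial_\alpha 1=\partial_\beta 1=0$ and $\varepsilon(E)=\varepsilon(P)=0$, while $T_1 1=1=\varepsilon(\CE)1$. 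Coassociativity/counit compatibility on the module side ($\varepsilon(h\acts f)=\varepsilon(h)\varepsilon(f)$ with $\varepsilon(f)=f(0,0)$) is checked on generators: $\varepsilon(E\acts f)=-i(\partial_\alpha f)(0,0)$ need not vanish pointwise, so the correct statement to verify is the module axiom $h\acts 1_A=\varepsilon(h)1_A$ together with the action being well-defined on $\CC$ (which follows since $\partial_\alpha,\partial_\beta$ preserve \eqref{pol} and the compact $\alpha$-support of $\tilde f$, and $T_1$ preserves $\CC$ by the Proposition preceding the theorem). I would not separately need a counit-intertwining for a module algebra.

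\textbf{Step 3: the involutive structure.} ``Involutive Hopf module algebra'' requires $(h\acts f)^*=S(h)^*\acts f^*$ for all $h\in\CP_\kappa$, $f\in\CC$, with the involutions \eqref{poininv} on $\CP_\kappa$ and \eqref{geninv} on $\CC$. Check on generators: $S(E)^*=(-E)^*=-E$, so I need $(-i\partial_\alpha f)^*=-(-i\partial_\alpha)(f^*)$, i.e.\ $(\partial_\alpha f)^*=\partial_\alpha(f^*)$ — differentiate \eqref{inv5} or \eqref{geninv} in $\alpha$ and note $\bar f$ and the integral kernel are real-analytic in the relevant sense; the sign works out because conjugation turns $e^{-i\alpha'v}$ into $e^{i\alpha'v}$ but the outer $\alpha$-dependence is only through $\bar f(\alpha+\alpha',\cdots)$. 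For $P$: $S(P)^*=(-\CE^{-1}P)^*=-P^*(\CE^{-1})^*=-P\CE^{-1}$, and using the module-algebra property this acts as $f\mapsto -T_{-1}(-i\partial_\beta f)=iT_{-1}\partial_\beta f$; I must match this against $(-i\partial_\beta f)^*=i(\partial_\beta f)^*$, so the identity needed is $(\partial_\beta f)^*=T_{-1}\partial_\beta(f^*)$ — this is where the $\CE^{-1}$ in $\cop P$ and $S(P)$ manifests, and it follows from \eqref{geninv} by differentiating in $\beta$ (the $e^{-v}\beta$ inside $\bar f$ produces a factor $e^{-v}$, which is precisely the $\alpha$-translation $T_{-1}$ on the Fourier side via \eqref{Tfourier}). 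For $\CE$: $S(\CE)^*=(\CE^{-1})^*=\CE^{-1}$, so I need $(T_1 f)^*=T_{-1}(f^*)$, which is exactly \eqref{Tstar}. Finally I would remark that these generator checks propagate to all of $\CP_\kappa$ by the usual argument (the set of $h$ satisfying the twisted compatibility is closed under products, using that $*$ is an antihomomorphism of $\CC$ and $S$ an antihomomorphism with $*\circ S\circ *\circ S=\Id$).

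\textbf{Main obstacle.} The delicate points are bookkeeping, not conceptual: (i) justifying differentiation under the integral sign in \eqref{star4} and \eqref{geninv} — handled by the same splitting into a compactly-supported-in-$\alpha'$ piece plus a piece made absolutely convergent by partial integration in $v$, exactly as in the construction of $m_*$ and $f^*$ in Section~3, so I would simply cite that technique; and (ii) getting the twist factors $e^{-v}$ / $T_{\pm1}$ and their signs exactly right so that $\cop P=P\ts 1+\CE\ts P$ (not $1\ts\CE$) and $S(P)=-\CE^{-1}P$ come out correctly — this is where I would slow down and display the one- or two-line computation for $P\acts(f*g)$ and for $(\partial_\beta f)^*$, since an off-by-one in the exponent would break the module-algebra axiom. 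Everything else is routine once these are in place.
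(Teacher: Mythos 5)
Your proposal is correct and follows essentially the same route as the paper: verify the commutation relations, check the module-algebra compatibility on the generators (the $\CE$ case being \eqref{Tauto}, the $E$ case differentiation under the integral, and the $P$ case the identification of the chain-rule factor $e^{-v}$ with $\CE$ acting on the first tensor factor via \eqref{Tfourier}), and then check $(h\acts f)^*=(Sh)^*\acts f^*$ on generators using \eqref{geninv}, \eqref{Tstar} and \eqref{Tfstar}. The only difference is cosmetic: the paper establishes $P\acts(m_*F)=m_*(P\ts 1)F+m_*(\CE\ts P)F$ for general $F\in\CC_2$ via a regularization with convergence factors, whereas your computation at the level of $f*g$ through \eqref{star4} is slightly more elementary and suffices for the theorem as stated.
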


\begin{proof}
It is clear that the actions of $E, P, \CE$ defined by \eqref{momact}
are linear on $\CC$ and are mutually commuting. Therefore, it only
remains to verify the compatibility of the action with the
$*$-product and involution. That
$$
E\acts (m_* F) = m_*(E\ts 1)\acts F + m_*(1\ts E)\acts F
$$
is obvious from \eqref{genprod} since differentiation w.r.t. $\alpha$
in the integrand is permitted by a standard convergence argument. 
For the action of $\CE$ we have that 
$$
\CE \acts (m_* F) = m_*(\CE\ts\CE)\acts F,
$$
which is  a special case of \eqref{Tauto}.
Finally, for the action of $P$ we have
\be\label{Pmst}
\begin{aligned}
& \left(P \acts (m_* F)\right)(\alpha,\beta) = \\
& \frac{-i}{(2\pi)^2}\int d\alpha'\int dv \, \chi^1_F(v)
\left(\frac{\partial F}{\partial\beta_1}(\alpha',\alpha,\beta,e^{-v}\beta)+e^{-v}\frac{\partial F}{\partial\beta_2}(\alpha',\alpha,\beta,e^{-v}\beta)\right)e^{i(\alpha-\alpha')v}\,, 
\end{aligned}
\ee
where it is seen that the contribution from the first term in parenthesis 
evidently equals $m_*(P\ts 1)F(\alpha,\beta)$. 

On the other hand, from \eqref{ancont} we get
$$
\left((\CE\ts 1)\acts F\right)(\alpha_1,\alpha_2,\beta_1,\beta_2) = 
\frac{1}{2\pi}\int d\alpha'_1\int dv_1\, \chi^1_F(v_1)
F(\alpha_1+\alpha'_1,\alpha_2,\beta_1,\beta_2)e^{-v_1} e^{-i\alpha'_1v_1} 
$$
and hence 
$$
\begin{aligned}
&m_*\left((\CE\ts 1)\acts F\right)(\alpha,\beta) = 
\frac{1}{2\pi}\int d\alpha'_2\int dv_2\int d\alpha'_1\int dv_1\\
&~~~~~~~~~~~~~~~~~~~~~~~~~~~\chi^1_F(v_1)\chi_F^1(v_2)\,F(\alpha+\alpha'_1+\alpha'_2,\alpha,\beta,e^{-v_2}\beta)e^{-v_1}
e^{-i\alpha'_1v_1+\alpha'_2 v_2}\,.
\end{aligned}
$$
By introducing convergence factors
$\zeta_{R_1}(\alpha'')\zeta_{R_2}(\alpha'+\alpha'')$ as in the proof
of \eqref{antirel} above we obtain after integrating over $\alpha''$
and taking the limit $R_1, R_2\to\infty$ that
$$
m_*\left((\CE\ts 1)\acts F\right)(\alpha,\beta) = \int d\alpha'\int
dv'\chi_F^1(v')^2\,F(\alpha+\alpha',\alpha,\beta,e^{-v'}\beta)e^{-v}e^{-i\alpha'v'}\,.
$$
Using that $(\chi_F^1)^2$ equals $1$ on a neighborhood of the
projection of $K_F$ onto the first axis we see that the second term in
parenthesis in \eqref{Pmst} yields the contribution 
$m_*(\CE\ts P)F(\alpha,\beta)$. Hence, we have shown that
\be\label{Pcompat}
P\acts (m_* F) = m_*(P\ts 1)\acts F + m_*(\CE\ts P)\acts F\,,\quad F\in
\CC_2\,,
\ee
which concludes the argument that the action of the extended momentum
algebra is compatible with multiplication on $\CC$.

Compatibility of the action with the involution is the statement that 
\be\label{compinv}
 (h \triangleright f)^* = (Sh)^* \triangleright f^*\,, 
\ee 
for $f\in \CC$ and $h$ in the extended momentum algebra. It suffices
to verify this for the generators $E,P,\CE$. For $\CE$ it follows directly from \eqref{Tstar}, 
whereas for $E$ it is a consequence of \eqref{geninv} by differentiating both sides with respect to $\alpha$.

Differentiating \eqref{geninv} with respect to $\beta$ and 
using \eqref{Tfstar} we obtain
$$ P \triangleright f^* = - \CE (P \triangleright f)^*\,,$$
which gives
$$ (P \triangleright f)^*= -(\CE^{-1} P) \triangleright f^*\,.$$
Since $S(P)^* = (-\CE^{-1} P)^* = - \CE^{-1} P$ it follows that
\eqref{compinv} is satisfied for $h=P$ . This completes the proof of the theorem.
\end{proof}

\subsection{Action of ${\cal P}_\kappa$ on $\CC$}

To represent the boost operator $N$ by a linear action on $\CC$ we introduce the operators 
of multiplication by $\alpha$ and $\beta$ as

$$ (L_\alpha f)(\alpha,\beta) = \alpha f(\alpha,\beta), \;\;\;\;\;
   (L_\beta f)(\alpha,\beta) =  \beta f(\alpha,\beta),
$$
for $f\in\CC$.

\begin{lemma}\label{ab-rules}
$L_\alpha$ and $L_\beta$ are linear operators on $\CC$ which satisfy the following rules
with respect to the product and involution on $\CC$:
\bq
&\;& L_\alpha (m_*F) = m_*(1\ts L_\alpha) F = m_*(L_\alpha\ts 1) F 
+\,m_*(1\ts L_\beta P) F\,,\label{Laid} \\
&\;& L_\beta (m_* F) = m_*(L_\beta \ts 1) F = m_*({\cal E}^{-1} \ts L_\beta) F\,,
\label{Lbid}\\
&\;& (L_\alpha f)^* =L_\alpha f^* - L_\beta P
f^*\quad\mbox{and}\quad (L_\beta f)^* = \CE L_\beta f^*.
\label{Lstarid}
\eq
\end{lemma}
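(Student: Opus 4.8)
\emph{Proof plan.} The plan is to prove the three relations by direct substitution into the explicit integral formulas \eqref{genprod} for $m_*$ and \eqref{geninv} for the involution, together with the Fourier-side description \eqref{Tfourier} of the imaginary translation $\CE=T_1$. That $L_\alpha$ and $L_\beta$ map $\CC$ into itself is immediate from the remark following Definition~\ref{defC}, since $L_\alpha f=\alpha f$ and $L_\beta f=\beta f$; moreover, multiplying any $F\in\CC_2$ by one of the coordinates $\alpha_i,\beta_i$ does not enlarge the first-axis projection of its $\alpha$-support, because the Fourier transform of $F$ in the first variable is thereby at worst differentiated, which cannot enlarge support. Hence the cut-off functions $\chi^1_F$ in \eqref{genprod} and $\chi_f$ in \eqref{geninv} may be kept unchanged under all the operations below.

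For \eqref{Lbid} the argument is shortest. In \eqref{genprod} the variable $\beta_1$ is evaluated at $\beta$, so multiplying $F$ by $\beta_1$ simply pulls the factor $\beta$ out of the integral and gives $L_\beta(m_*F)=m_*(L_\beta\ts 1)F$. For the second equality one uses that $\CE^{-1}=T_{-1}$ acting on the first slot multiplies the first-variable Fourier transform of $F$ by $e^{v}$ (by \eqref{Tfourier}), whereas the extra factor $\beta_2$ in $(\CE^{-1}\ts L_\beta)F$ is evaluated at $\beta_2=e^{-v}\beta$ and thus contributes $e^{-v}\beta$; the two exponentials cancel and only the factor $\beta$ survives, so $m_*(\CE^{-1}\ts L_\beta)F=\beta\,(m_*F)$.

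For \eqref{Laid}, the first equality holds because $\alpha$ occurs in \eqref{genprod} as the second argument of $F$, so $\alpha\,(m_*F)=m_*(1\ts L_\alpha)F$. For the second equality, I would split the first argument of $F$ in $m_*(L_\alpha\ts 1)F$ — which by \eqref{genprod} equals $\alpha+\alpha'$ — into the summands $\alpha$ and $\alpha'$: the $\alpha$-term reproduces $m_*(1\ts L_\alpha)F$, and in the $\alpha'$-term one rewrites $\alpha'e^{-i\alpha'v}=i\,\partial_v e^{-i\alpha'v}$ and integrates by parts in $v$. The contribution in which $\partial_v$ hits $\chi^1_F$ vanishes, since $(\chi^1_F)'$ is supported off the first-axis projection of $K_F$ while the $\alpha'$-integration, carried out first, produces precisely the first-variable Fourier transform of $F$ there; the contribution in which $\partial_v$ hits the last argument $e^{-v}\beta$ of $F$ equals $-\,m_*(1\ts L_\beta P)F$ after collecting the factors $e^{-v}\beta$ and $\partial_{\beta_2}F$ and using $P\acts g=-i\partial g/\partial\beta$. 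Rearranging yields \eqref{Laid}. The first relation in \eqref{Lstarid} is proved by the same device applied to \eqref{geninv}: since $\overline{L_\alpha f}=\alpha\bar f$, the integrand for $(L_\alpha f)^*$ differs from that for $f^*$ only by the factor $\alpha+\alpha'$; its $\alpha$-term gives $L_\alpha f^*$, and its $\alpha'$-term, after the same integration by parts in $v$, is identified with $-\,L_\beta P f^*$ by comparison with the formula for $\partial_\beta f^*$ obtained by differentiating \eqref{geninv} directly. For the second relation in \eqref{Lstarid}, $\overline{L_\beta f}$ evaluated at the point $(\alpha+\alpha',e^{-v}\beta)$ equals $e^{-v}\beta\,\bar f(\alpha+\alpha',e^{-v}\beta)$, so $(L_\beta f)^*$ is $\beta$ times the integral \eqref{geninv} carrying the extra weight $e^{-v}$; since by \eqref{Tfourier} $\CE=T_1$ multiplies the first-variable Fourier transform of $f^*$ by $e^{-v}$, that same $\beta$-weighted integral is exactly $\CE L_\beta f^*=\beta\,T_1 f^*$, whence $(L_\beta f)^*=\CE L_\beta f^*$.

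The one genuinely non-routine point is the justification of interchanging the $\alpha'$- and $v$-integrations and of the integration by parts in $v$ used for \eqref{Laid} and the first identity of \eqref{Lstarid}. I would handle this exactly as in the proof of \eqref{antirel} and in the Appendix: insert a convergence factor $\zeta_R(\alpha')$ to render the iterated integral absolutely convergent, perform the manipulation, and let $R\to\infty$; the terms carrying a derivative of a cut-off function vanish identically because $(\chi^1_F)'$, resp.\ $(\chi_f)'$, is supported away from the pertinent $\alpha$-support.
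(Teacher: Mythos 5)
Your proposal is correct and follows essentially the same route as the paper: direct substitution into \eqref{genprod} and \eqref{geninv}, the split $\alpha=(\alpha+\alpha')-\alpha'$ followed by a partial integration in $v$ (with $\zeta_R$-regularization to justify it) to produce the $L_\beta P$ terms, and the cancellation of $e^{v}$ against the $e^{-v}\beta$ from the second slot to handle the $\CE^{\pm1}\ts L_\beta$ identities. The only cosmetic difference is that for the first identity in \eqref{Lstarid} the paper recognizes the integrated-by-parts term as $-(L_\beta Pf)^*$ and then invokes \eqref{compinv} for $h=P$ together with the second identity of \eqref{Lstarid}, whereas you identify it directly with $-L_\beta Pf^*$ by comparison with $\partial_\beta f^*$ computed from \eqref{geninv}; both are valid.
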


\begin{proof}
The two left identities in \eqref{Laid} and \eqref{Lbid} follow immediately from \eqref{genprod}. From
the last expression in \eqref{genprod} we obtain 
$$
\begin{aligned}
L_\alpha (m_*F)(\alpha,\beta) &= m_*(L_\alpha\ts 1)F(\alpha,\beta) - \int
d\alpha'\int dv \chi^1_F(v)\alpha'F(\alpha+\alpha',\alpha,\beta,e^{-v}\beta)e^{-i\alpha v}\\
& = m_*(L_\alpha\ts 1)F(\alpha,\beta) + \,m_*(1\ts L_\beta P)F(\alpha,\beta)\,, 
\end{aligned}
$$
where the last step follows by a partial integration
w.\,r.\,t. $v$. This proves the second identity in \eqref{Laid}. Similarly,  the second identity in \eqref{Lbid}
is obtained from
$$
\begin{aligned}
L_\beta (m_*F)(\alpha,\beta) &= \int d\alpha'\int dv
\chi^1_F(v)e^vF(\alpha+\alpha',\alpha,\beta,e^{-v}\beta)e^{-v}\beta
e^{-i\alpha'v}\\& = m_*(\CE^{-1}\ts L_\beta)F(\alpha,\beta)\,,
\end{aligned}
$$
where the last step follows by the same argument as in the proof of 
\eqref{Pcompat} above.

The second identity of \eqref{Lstarid} follows immediately from
\eqref{geninv} and \eqref{Tfstar}. For the first one we multiply both
sides of \eqref{geninv} by $\alpha$ and obtain after a partial
integration 
$$
L_\alpha f^*(\alpha,\beta) = (L_\alpha f)^*(\alpha,\beta) -(L_\beta
P f)^*(\alpha,\beta)\,.
$$
Using the the second identity of \eqref{Lstarid} and \eqref{compinv}
for $h=P$ the first identity of \eqref{Lstarid} follows.
\end{proof}
We are now in a position to extend Theorem~\ref{Momact} as follows.

\begin{theorem}\label{Cmodule}
Defining the linear action of $N$ on $\CC$ by
\be\label{boost}
 N = -i L_\alpha P - \frac{i}{2} (1 - {\cal E}^2) L_\beta +
 \frac{i}{2} L_\beta P^2 \,, 
\ee
and the action of $E,P,\CE$ as in \eqref{momact} then $\CC$
becomes an involutive Hopf module algebra of $\CP_\kappa$.
\end{theorem}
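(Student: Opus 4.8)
The plan is to verify that $N$ as defined in \eqref{boost}, together with the already-established action of the extended momentum algebra from Theorem~\ref{Momact}, satisfies all the structural requirements: that it intertwines the algebra relations \eqref{poin2}, that it acts compatibly with the $*$-product via the coproduct $\cop N = N\ts 1 + \CE\ts N$, and that it respects the involution in the sense $(N\acts f)^* = (SN)^*\acts f^* = -N\acts f^*$ (using \eqref{poininv} and the fact that $SN = -\CE^{-1}N$, so $(SN)^* = -N$). Since $E,P,\CE$ already act as an involutive Hopf module algebra, the work is entirely in checking the assertions involving $N$.

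First I would verify the commutation relations. The relations $[N,E] = P$ and $[N,\CE] = -\CE P$ follow from a direct computation: write $N = -iL_\alpha P - \frac{i}{2}(1-\CE^2)L_\beta + \frac{i}{2}L_\beta P^2$ and use that $E = -i\partial_\alpha$ commutes with $P = -i\partial_\beta$ and with $\CE = T_1$, while $[E, L_\alpha] = -i\,\Id$ and $[E, L_\beta] = 0$; similarly $\CE L_\alpha = (L_\alpha - iL_\beta P)\CE$ should be read off from \eqref{Tfourier} (translation in $\alpha$ does not see multiplication by $\beta$ directly, but on Fourier side $\alpha$ becomes a derivative, producing the $L_\beta P$ correction), and $\CE L_\beta = L_\beta\CE$. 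The relation $[N,P] = \frac12(1-\CE^2) - \frac12 P^2$ is the most computational of the three and I would check it by expanding the commutator term by term, using $[L_\alpha, P] = 0$ and $[L_\beta, P] = -i\,\Id$.

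Next I would check compatibility with the product, i.e.
\[
N\acts(m_*F) = m_*(N\ts 1)\acts F + m_*(\CE\ts N)\acts F\,,\qquad F\in\CC_2\,.
\]
This is where Lemma~\ref{ab-rules} does the heavy lifting: substitute the definition \eqref{boost} of $N$ into the left-hand side, push $L_\alpha$ and $L_\beta$ through $m_*$ using \eqref{Laid} and \eqref{Lbid}, and push the $E,P,\CE$ factors through using Theorem~\ref{Momact} and the fact that $\CE^2 = T_2$ still satisfies \eqref{Tauto}. After collecting terms one must recognize the result as $m_*(N\ts 1 + \CE\ts N)\acts F$. I expect the bookkeeping here — keeping track of which leg of $\CC_2$ each operator acts on, and correctly combining the $L_\beta P$ correction term from \eqref{Laid} with the $\frac i2 L_\beta P^2$ piece of $N$ — to be the main obstacle; it is purely algebraic but the number of terms is large, so I would organize it by first treating the three summands of $N$ separately and then showing the cross-terms cancel so as to reproduce $\cop N$. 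The homomorphism property of the coproduct and the compatibility relation $(h\acts f)^* = (Sh)^*\acts f^*$ for $h = N$ then follow: the former from coassociativity and the fact that $\cop$ is an algebra homomorphism (already in Theorem~\ref{iota}), the latter by applying the involution to \eqref{boost} and invoking the three identities \eqref{Lstarid} together with \eqref{compinv} for $h = P$, which after simplification should yield exactly $-N\acts f^* = (SN)^*\acts f^*$.
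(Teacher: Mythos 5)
Your overall strategy is the same as the paper's: check the relations \eqref{poin2} directly, use Lemma~\ref{ab-rules} together with \eqref{Pcompat} and \eqref{Tauto} to push the three summands of $N$ through $m_*$ and reassemble $m_*\bigl((N\ts 1)\acts F+(\CE\ts N)\acts F\bigr)$, and then handle the involution via \eqref{Lstarid} and \eqref{compinv}. However, the involution step as you state it is aimed at a false identity. You claim $(SN)^*=-N$, hence that one must prove $(N\acts f)^*=-N\acts f^*$. Since $*$ is an antihomomorphism on $\CP_\kappa$ and $N^*=-N$, $\CE^*=\CE$ by \eqref{poininv}, one has
\begin{equation*}
(SN)^*=\bigl(-\CE^{-1}N\bigr)^*=-N^*\,(\CE^{-1})^*=N\,\CE^{-1}\,,
\end{equation*}
which is \emph{not} $-N$ because $N$ and $\CE$ do not commute ($[N,\CE]=-\CE P$). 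The identity actually to be verified is $(N\acts f)^*=N\CE^{-1}\acts f^*$, and indeed the computation you outline (applying $*$ to \eqref{boost} using \eqref{Lstarid} and \eqref{compinv} for $h=P,\CE$) produces $-iL_\alpha\CE^{-1}P+iL_\beta P\CE^{-1}P+\tfrac{i}{2}(1-\CE^{-2})L_\beta\CE-\tfrac{i}{2}L_\beta\CE^{-1}P^2 = N\CE^{-1}$ acting on $f^*$; had you carried it out you would have landed on $N\CE^{-1}\acts f^*$ and found it incompatible with your stated target. So the gap is not in the method but in the computation of $(SN)^*$, which must be corrected before the final step closes.

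A secondary problem is your commutator $\CE L_\alpha=(L_\alpha-iL_\beta P)\CE$. Since $(\CE f)(\alpha,\beta)=f(\alpha+i,\beta)$, one has $\CE L_\alpha=(L_\alpha+i)\CE$, i.e.\ $[L_\alpha,\CE]=-i\CE$; there is no $L_\beta P$ term (you have conflated this with the correction term in \eqref{Laid}, which arises when pushing $L_\alpha$ through $m_*$, not through $\CE$). With your version the check of $[N,\CE]=-\CE P$ fails: the only contribution comes from $[-iL_\alpha P,\CE]=-i(-i\CE)P=-\CE P$, which requires the correct commutator. Likewise $[L_\beta,P]=+i\,\Id$, not $-i\,\Id$; with your sign the verification of $[N,P]=\tfrac12(1-\CE^2)-\tfrac12 P^2$ comes out with the wrong overall sign. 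The product-compatibility portion of your outline is sound and matches the paper's argument.
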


\begin{proof}
That $N, P, E$ and $\CE$ satisfy the commutation relations of
\eqref{poin2} is easily seen by inspection. It remains to check that 
the action of $N$ on $\CC$ is compatible with the product and
involution on $\CC$ using
the coproduct of (\ref{poin2}). By \eqref{Pcompat} and Lemma~\ref{ab-rules} 
one gets, for $F\in\CC$,
$$
\begin{aligned}
N \acts m_* F &= \left( - iL_\alpha P - \frac{i}{2} (1 - {\cal E}^2) L_\beta +
                 \frac{i}{2} L_\beta P^2 \right) m_*F \\
=& -i L_\alpha\, m_*(P\ts 1) F -i L_\alpha m_*({\cal E}\ts P) F \\
 & -\frac{i}{2} L_\beta\, m_*F + \frac{i}{2} {\cal E}^2 m_*(L_\beta \ts 1) F \\
 & + \frac{i}{2}  L_\beta\, m_*\left( (P^2\ts 1) F + 2({\cal E} P \ts P) F 
    + ({\cal E}^2 \ts P^2) F \right)\,. 
\end{aligned}
$$
Making further use of Lemma~\ref{ab-rules} and \eqref{Tauto} this expression equals
$$
\begin{aligned}
& -im_*(L_\alpha P \ts 1) F - i m_*(P \ts L_\beta P) F) 
          -im_*({\cal E} \ts L_\alpha P) F  \\
        &  - \frac{i}{2} m_*(L_\beta\ts 1) F 
           + \frac{i}{2} m_*({\cal E}^2 L_\beta \ts{\cal E}^2) F 
           + \frac{i}{2} m_*(L_\beta P^2 \ts 1) F \\
        &  +i m_*(P \ts L_\beta P) F + \frac{i}{2} m_*({\cal E} \ts L_\beta P^2) F \,.
\end{aligned}
$$ 
Here, two terms are seen to cancel, and using the relation 
$$ m_*({\cal E}^2 L_\beta \ts 1) F - m_*({\cal E} \ts L_\beta) F =0\,,
$$ 
which follows from \eqref{Lbid}, we 
can rewrite the last expression in the form  
$$
\begin{aligned}
& m_*\left( -i(L_\alpha P \ts 1) F - \frac{i}{2} (L_\beta\ts 1) F 
                 +\frac{i}{2} ({\cal E}^2 L_\beta \ts 1) F +
                 \frac{i}{2} (L_\beta P^2\ts 1) F) \right) \\
        & + m_*\left( -i(\CE\ts L_\alpha P) F  -\frac{i}{2} (\CE\ts L_\beta) F
          + \frac{i}{2}(\CE\ts {\cal E}^2 L_\beta) F + \frac{i}{2}
          (\CE\ts L_\beta P^2) F \right))   \\
       &= m_*((N \ts 1)\acts F + ({\cal E} \ts N)\acts F)\,. 
\end{aligned}
$$ 
This proves compatibility of the action of $N$ with the product on $\CC$..

 Using \eqref{compinv} for $h=P$ and $h=\CE$ and \eqref{Lstarid} we get
$$
\begin{aligned}
(N \triangleright f)^* &= \left(\left( -i L_\alpha P - \frac{i}{2} (1 - {\cal E}^2) L_\beta +
 \frac{i}{2} L_\beta P^2 \right) f \right)^* \\
 & = \left( -i L_\alpha \CE^{-1} P + iL_\beta P \CE^{-1} P + \frac{i}{2}(1 - \CE^{-2}) L_\beta \CE
 - \frac{i}{2} L_\beta \CE \CE^{-2} P^2 \right)  f^* \\
 & = \left( -i L_\alpha P - \frac{i}{2}(1-\CE^2)L_\beta
     + \frac{i}{2} L_\beta P^2 \right) \CE^{-1} f^* \\
 &  =  N \CE^{-1} \triangleright f^*,   
\end{aligned}
$$
Noting that $(S(N))^* = - (\CE^{-1} N)^* = N \CE^{-1},$
this proves compatibility of the action of $N$ with involution.
\end{proof}

In view of the obvious fact that $\frac{\partial}{\partial\alpha},
\frac{\partial}{\partial\beta}, L_\alpha, L_\beta$ and $T_1$ all map
$\CB$ into itself, the following is a consequence of Theorem~\ref{Cmodule}.

\begin{corollary}\label{Bmodule}
The subalgebra $\CB$ of $\CC$ is an involutive Hopf module algebra for
$\CP_\kappa$ with action defined by \eqref{momact} and \eqref{boost}.
\end{corollary}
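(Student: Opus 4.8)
The plan is to deduce the statement from Theorem~\ref{Cmodule} by the elementary principle that a module‑algebra structure restricts to any invariant subalgebra, so the only thing that really needs checking is that the $\CP_\kappa$‑action preserves $\CB$. First I would recall that $\CB$ is an involutive subalgebra of $\CC$: by Proposition~\ref{alg1} (equivalently Corollary~\ref{staralg}) it is stable under the $*$‑product and the $^*$‑involution, these being the restrictions of the corresponding operations on $\CC$. Thus the inclusion $\CB\hookrightarrow\CC$ is a morphism of involutive algebras, and it remains to see that $\CB$ is an invariant subspace for the action of $\CP_\kappa$ given by \eqref{momact} and \eqref{boost}.

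For this it suffices to verify that each of the five basic operators $\partial_\alpha$, $\partial_\beta$, $L_\alpha$, $L_\beta$ and $T_1$ maps $\CB$ into itself, since $E,P,\CE$ act directly through them by \eqref{momact} and $N$ acts by the expression \eqref{boost}, which is a polynomial in $L_\alpha,L_\beta,P$ together with $\CE^{\pm1}=T_{\pm1}$ (and $\CE^2=T_2=T_1T_1$). Now $\partial_\alpha,\partial_\beta,L_\alpha,L_\beta$ all preserve the Schwartz class, while in the Fourier variable of the first argument one has $\widetilde{\partial_\alpha f}(v,\beta)=iv\,\tilde f(v,\beta)$, $\widetilde{\partial_\beta f}(v,\beta)=\partial_\beta\tilde f(v,\beta)$, $\widetilde{L_\alpha f}(v,\beta)=i\,\partial_v\tilde f(v,\beta)$ and $\widetilde{L_\beta f}(v,\beta)=\beta\,\tilde f(v,\beta)$, so the compact $\alpha$‑support is preserved in each case; hence these four operators map $\CB$ into $\CB$. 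For $T_1$ (and similarly $T_{-1}$), formula \eqref{Tfourier} gives $\widetilde{T_1 f}(v,\beta)=e^{-v}\tilde f(v,\beta)$, which has the same compact $v$‑support as $\tilde f$, and by \eqref{ancont} $T_1 f$ is obtained by convolving $f$ in its first variable with the Schwartz function $\CF(e^{-v}\chi_f(v))$, hence is again Schwartz; so $T_1$ preserves $\CB$ as well.

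Granting this invariance, all the compatibility identities proved for $\CC$ in Theorem~\ref{Momact} and Theorem~\ref{Cmodule}—the Hopf module‑algebra relations $h\acts(m_*F)=m_*\big((\cop h)\acts F\big)$ and the involution rule \eqref{compinv} for $h$ a generator, and hence for all $h\in\CP_\kappa$—continue to hold when $f,g$ and every intermediate expression lie in $\CB$, because both sides then take values in $\CB\subseteq\CC$ and agree there. Consequently $\CB$ inherits the structure of an involutive Hopf module algebra over $\CP_\kappa$. I expect no real obstacle: the entire content is the stability of $\CB$ under the five operators above, which is the routine check just indicated. The only point worth flagging is that $\CB$ is non‑unital (the constant function $1$ is not Schwartz), so "Hopf module algebra" is understood here, exactly as for the algebra structure of $\CB$ treated earlier, as compatibility of the action with the product, coproduct, counit and involution only.
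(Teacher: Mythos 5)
Your proposal is correct and follows essentially the same route as the paper, which likewise derives the corollary from Theorem~\ref{Cmodule} by observing that $\partial_\alpha$, $\partial_\beta$, $L_\alpha$, $L_\beta$ and $T_1$ all map $\CB$ into itself. Your Fourier-side verification of this invariance (preservation of the Schwartz class and of the compact $\alpha$-support) simply fills in what the paper calls an ``obvious fact,'' so there is no substantive difference.
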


\begin{remark} By inspection of \eqref{poin2}, \eqref{co2},
  \eqref{anti2} it is seen that setting
$$
\Lambda_q(E)=E\,,\;\;\Lambda_q(P)=P\,,\;\;\Lambda_q(\CE)=\CE\,,\;\;\Lambda_q(N)=
N+qP\,,
$$
defines a Hopf algebra automorphism $\Lambda_q$ of $\CP_\kappa$ for each
$q\in\mathbb C$. As a consequence, one obtains an involution on
$\CP_\kappa$ for any $q\in\mathbb R$ by replacing $N^*=-N$ in
\eqref{poininv} by 
\be\label{qinv}
N^*= -N + qP\,.
\ee
For this involution Theorem~\ref{Cmodule} is
still valid if $N$ as given by \eqref{boost} is replaced by
$$N'=N+\frac{q}{2}P\,.$$ 
The  particular choice $q=1$ ensures that the operator $N'$ is
antisymmetric w.\,r.\,t. the $L^2$-inner product on $\CB$, as is easily
verified. More generally, it follows that the action of
$h^*$ on $\CB$ in this case coincides with that of the adjoint of $h$
w.\,r.\,t. the $L^2$-inner product on $\CB$, for any $h\in\CP_\kappa,$
see Proposition~\ref{invintgen} below.
\end{remark}

On $\CB$ the integral w.\,r.\,t. $d\alpha d\beta$ is well defined as a
linear form that we shall denote by $\int$. In the following
proposition we collect some basic properties of $\int$ in relation to
the module algebra structure on $\CB$.

\begin{prop}\label{invintgen}

a)\;\;The integral w.\,r.\,t. the uniform measure on $\mathbb R^2$ is
invariant under the action of $\CP_\kappa$ on $\CB$ defined above in
the sense that, for any $h \in {\cal P}_\kappa$ and $f \in \B$,
\be\label{invtrace}
 \int \, h \acts f = \varepsilon(h) \int \, f\,.
\ee

b)\;\; $\int$ is a left and right invariant integral on the Hopf algebra $\CB$
in the sense that
\be\label{lrinv} 
\left( \int \otimes \Id \right) \cop f =  \int f = 
\left( \Id \otimes \int \right)\cop f\,,\quad f\in\CB\,.
\ee

c)\;\; \be\label{Sinv} \int Sf = \int f\quad\mbox{and}\quad \int f*(Sg) = \int
g*(Sf)\,.\ee

d)\;\; For any $f,g\in \CB$ and $h\in\CC$ we have
\be\label{invadj}\int (h\acts f)*g^* = \int f*(h^*\acts g)^*\,,\ee
if the involution on $\CP_\kappa$ is defined by \eqref{qinv} for $q=1$
and the action of $E, P, \CE, N$ on $\CB$ are given by \eqref{momact} and
\be\label{boost1}
 N\acts f = (-i L_\alpha P - \frac{i}{2} (1 - {\cal E}^2) L_\beta +
 \frac{i}{2} P L_\beta P)f \,,\quad f\in \CB\,. 
\ee

e) \;\; For any $f,g\in \CB$ we have
\be
\label{twitra} \int f*g = \int (\CE \acts g)*f \,.
\ee
which means that $\int$ is a twisted trace.
\end{prop}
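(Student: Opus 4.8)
The plan is to handle the five parts separately: a) and d) reduce to checks on the generators $E,P,\CE,N$, b) and e) to changes of variables in Fourier space, and c) to a symmetry of $S$. For a), I would first note that \eqref{invtrace} is multiplicative in $h$: if it holds for $h_1$ and $h_2$, then by the module property $(h_1h_2)\acts f=h_1\acts(h_2\acts f)$ together with $\varepsilon(h_1h_2)=\varepsilon(h_1)\varepsilon(h_2)$ it holds for $h_1h_2$, so by linearity it suffices to verify it on the generators. For $E\acts f=-i\partial_\alpha f$ and $P\acts f=-i\partial_\beta f$ the integral of a derivative of a Schwartz function vanishes, matching $\varepsilon(E)=\varepsilon(P)=0$. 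For $\CE\acts f=T_1 f$ I would use \eqref{Tfourier}, $\widetilde{T_1 f}(v,\beta)=e^{-v}\tilde f(v,\beta)$; since $\int f$ is, up to a constant, $\int\tilde f(0,\beta)\,d\beta$, the weight $e^{-v}$ at $v=0$ gives $\int T_1 f=\int f=\varepsilon(\CE)\int f$. For $N$ acting by \eqref{boost} I would integrate the three terms separately: the $L_\alpha P$ and $L_\beta P^2$ contributions vanish after integrating by parts in $\beta$, while the $(1-\CE^2)L_\beta$ term vanishes by the $\CE$-invariance just proved applied to $\beta f\in\CB$; thus $\int N\acts f=0=\varepsilon(N)\int f$. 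Part b) then follows directly from the explicit coproduct \eqref{coproduct}: $(\int\otimes\Id)\cop f$ at $(\alpha_2,\beta_2)$ equals $\int d\alpha_1\,d\beta_1\,f(\alpha_1+\alpha_2,\beta_1+\beta_2)$, which is the constant $\int f$ by translation invariance of Lebesgue measure, and symmetrically for the right-invariance in \eqref{lrinv}.

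For c), the first identity in \eqref{Sinv} uses \eqref{antipode}: writing $(Sf)(\alpha,\beta)=\overline{f^*}(-\alpha,-\beta)$ and changing variables $(\alpha,\beta)\mapsto(-\alpha,-\beta)$ gives $\int Sf=\overline{\int f^*}$, and \eqref{invint}, $\int f^*=\int\bar f=\overline{\int f}$, yields $\int Sf=\int f$. For the second identity I would exploit that $\CC$ is a Hopf algebra (Theorem~\ref{iota}), so $S$ is an antihomomorphism with $S^2=\Id$; hence $S(f*(Sg))=S(Sg)*Sf=g*(Sf)$, and applying the first identity gives $\int f*(Sg)=\int S(f*(Sg))=\int g*(Sf)$.

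The substantive part is d). The key observation is that by \eqref{isom} the form $\int f*g^*$ equals the $L^2$ inner product $\langle f,g\rangle=\int f\bar g$, so \eqref{invadj} is precisely the adjointness statement $\langle h\acts f,g\rangle=\langle f,h^*\acts g\rangle$. As in a) this is multiplicative, now using $(h_1h_2)^*=h_2^*h_1^*$, so it suffices to check the generators. For $E$ and $P$ self-adjointness of $-i\partial_\alpha,-i\partial_\beta$ follows by integration by parts, matching $E^*=E$, $P^*=P$; for $\CE$, Parseval in $\alpha$ combined with the real weight $e^{-v}$ from \eqref{Tfourier} gives $\langle T_1 f,g\rangle=\langle f,T_1 g\rangle$, matching $\CE^*=\CE$. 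The boost is the main obstacle: I would write the operator \eqref{boost1} in terms of $L_\alpha,L_\beta,P,\CE$, all of which are $L^2$-self-adjoint, compute its adjoint using the commutators $[P,L_\alpha]=0$, $[P,L_\beta]=-i$ and $[L_\beta,\CE]=0$, and check that the result is the operator representing $N^*$ under the $q=1$ involution \eqref{qinv}. The noncommutativity of $P$ and $L_\beta$, which is exactly what distinguishes \eqref{boost1} from \eqref{boost}, is where care is needed, and I expect this computation to be the technical heart of the proposition.

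Finally, for e) I would establish \eqref{twitra} by a direct Fourier computation. Integrating the $\alpha$-variable first in \eqref{star4} reduces $\int f*g$ to an integral over $(v,\beta)$ of $\tilde f(v,\beta)\tilde g(-v,e^{-v}\beta)$, which after the substitution $\beta\mapsto e^{v}\beta$ becomes $\int dv\,d\beta\,e^{v}\tilde f(v,e^{v}\beta)\tilde g(-v,\beta)$. Performing the same reduction on $\int(\CE\acts g)*f=\int(T_1 g)*f$, using $\widetilde{T_1 g}(v,\beta)=e^{-v}\tilde g(v,\beta)$ from \eqref{Tfourier} and relabelling $v\mapsto -v$, produces the identical expression, which establishes the twisted trace property.
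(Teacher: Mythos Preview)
Your proposal is correct and follows essentially the same strategy as the paper: reduce a) and d) to the generators, handle b) by translation invariance, derive c) from \eqref{invint} together with $S$ being an anti-involution, and obtain e) by rewriting both sides in Fourier variables. The only notable stylistic differences are that for $\CE$ you argue via the Fourier weight $e^{-v}$ (Parseval) while the paper invokes Cauchy's theorem in a) and the module-algebra identities \eqref{Tauto}, \eqref{Tstar} together with part a) in d); and in a) for $N$ you show each of the three summands integrates to zero directly, whereas the paper first rewrites $L_\beta P^2=P^2L_\beta-2P$ and $L_\alpha P=PL_\alpha$ before invoking the $E,P,\CE$ cases --- the content is the same.
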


\begin{proof}
a)\;\; It suffices to verify \eqref{invtrace} for the generators 
$E,P,\CE$ and $N$. First, since both $E$ and $P$ act on $f$ as 
partial derivatives
\be
\label{hoint}
\int d\alpha d\beta \, (P \acts f)(\alpha,\beta) = 0 = \int d\alpha d\beta \, (E \acts f)(\alpha,\beta)\,.
\ee
For ${\cal E}$ we have
$$ 
\int d\alpha d\beta \, ({\cal E} \acts f)(\alpha,\beta) = 
 \int d\alpha d\beta  f(\alpha +i,\beta) = \int d\alpha d\beta f(\alpha,\beta)
$$
as a consequence of Cauchy's theorem.
Finally, for the action of $N$, one  uses the identities
$$ L_\beta P^2 = P^2 L_\beta - 2 P, \;\;\;\; L_\alpha P = P L_\alpha, $$
to deduce from the preceding results that
$$ \int d\alpha d\beta \, (N \acts f)(\alpha,\beta) = 0. $$
This finishes the proof of a). 

\smallskip

b)\;\; Identities \eqref{lrinv} follow trivially from the translation invariance
of the measure $d\alpha d\beta$.

\smallskip

c)\;\; The first identity of \eqref{Sinv} follows from \eqref{invint}
and \eqref{antipode}:
$$
\int d\alpha d\beta\, (Sf)(\alpha,\beta) = \int d\alpha d\beta\,
(Sf)(-\alpha,-\beta) = \int d\alpha d\beta\,
\overline{f^*}(\alpha,\beta) = \int d\alpha d\beta\, f(\alpha,\beta)\,.
$$
The second identity follows from the former by using that $S$ is an
antihomomorphism and $S^2=\Id$ on $\CB$.

\smallskip

d)\;\; By \eqref{isom} we see that \eqref{invadj} is equivalent to the
statement that the action of $h^*\in\CP_\kappa$ on $\CB$ as a linear
operator on $\CB\subset L^2(\mathbb R^2)$ equals the action of the adjoint of $h$
w.\,r.\,t. the standard inner product on  $L^2(\mathbb R^2)$. That this
holds for $E$ and $P$ is clear form \eqref{momact}. For $\CE$ we have 
$$
\int (\CE\acts f)*g^* = \int \CE\acts (f*(\CE^{-1}\acts g^*) = 
\int f*(\CE \acts g)^*\,,
$$
by \eqref{Tauto}, \eqref{Tstar} and  \eqref{invtrace}. This proves
\eqref{invadj} for $h=\CE$. Since $L_\alpha$ and $L_\beta$ are
symmetric operators on $\CB\subset L^2(\mathbb R^2)$ one can now check
by direct computation that $N$ as given by \eqref{boost} is antisymmetric. 

\smallskip

e) \;\; Using Cauchy's theorem and a change of variables we get from \eqref{star2} that
$$
\begin{aligned}
\int (\CE \acts g)*f 
& = \frac{1}{2\pi} \int d\alpha d\beta \int dv\int d\alpha' \,
    g(\alpha+\alpha'+i,\beta) f(\alpha, e^{-v}\beta) e^{-i\alpha'v}  \\
& = \frac{1}{2\pi} \int d\alpha d\beta \int dv\int d\alpha' \,
    g(\alpha',\beta) f(\alpha, e^{-v}\beta)\,e^{-v} e^{i(\alpha-\alpha')v} 
    \\
& = \int d\beta dv \,
    \tilde g(v,\beta) \tilde f(-v, e^{-v}\beta)\,e^{-v} \,.
\end{aligned}
$$
A change of variables shows that the last expression equals
$
\int d\beta dv \tilde f(v,\beta)\tilde g(-v,e^{-v}\beta)
$
which by reversing the steps above yields $\int f*g$. 
This completes the proof. 
\end{proof} 

\subsection{Explicit dependence on the kappa parameter}\label{parameter}

For the sake of completeness we end this section by reintroducing the $\kappa$-parameter 
 which we eliminated at the outset by rescaling the $t$ generator of
 $M_\kappa$. The correct dependence on $\kappa$ for both
 $M_\kappa$ and $\CP_\kappa$ is obtained by simply rescaling the
 variables $\alpha,\beta$ by $\kappa$, i.\,e. set
 $(\alpha,\beta)=(\kappa\hat\alpha,\kappa\hat\beta)$ and express the
 (co)algebra operations in terms of the dimensionful variables $\hat\alpha,\hat\beta$, and then rename the latter $(\alpha,\beta)$.   
Explicitly, the $*$-product on $\CB$ is replaced by 
 \be
f \ks g(\alpha,\beta) = \frac{1}{2\pi}\int d\alpha' dv \, f(\alpha+\alpha', \beta) 
g(\alpha, e^{-\frac{v}{\kappa}} \beta) e^{-i\alpha v}\,,
\label{kapsta4}
\ee
and the involution is changed to
\be
f^*(\alpha,\beta) = \frac{1}{2\pi}\int d\alpha' dv \, \bar{f}(\alpha+\alpha',
e^{-\frac{v}{\kappa}} \beta) e^{-i\alpha' v}\,,
\ee
whereas the coproduct and counit are unchanged.
Furthermore, the action of the operators $E, P, \CE, N$ on $\CB$ are redefined as
$$
\begin{aligned}
& E \acts f = -i\frac{\partial f}{\partial \alpha}, \;\;\;
  P \acts f = -i\frac{\partial f}{\partial \beta}, \;\;\;
   {\cal E} \acts f = T_{\frac{1}{\kappa}} f\,,\\
& N = -i  L_\alpha P - \frac{i\kappa}{2} \left(1-\CE^2\right) L_\beta +
\frac{i}{2\kappa} L_\beta P^2\,,
\end{aligned}
$$
where $L_\alpha,L_\beta$ denote multiplication by $\alpha,\beta$,
respectively, as before. With these definitions we obtain a function 
algebra realization $\CB$ of $M_\kappa$ and a representation of the 
involutive Hopf algebra $\CP_\kappa$ on $\CB$, as displayed in 
e.\,g. \cite{MajRue}.

\smallskip

Finally, we note the following series representation of the $\ks$-product 
for sufficiently regular functions. For simplicity we consider a rather 
restricted class of functions but the proof can be adapted to more 
general situations. 
 
\begin{prop} If  $f,g\in\CB$ and $g(\alpha,\beta)$ is an entire
  function of $\beta$ then 
$$
 (f \ks g)(\alpha,\beta) = \sum_{n=0}^\infty \frac{i^n}{\kappa^n n!}
\partial_\alpha^n\, f(\alpha,\beta) \left(\beta\partial_\beta\right)^ng (\alpha,\beta), 
$$
for all $(\alpha,\beta)\in\mathbb R^2$.
\end{prop}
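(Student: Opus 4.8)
The plan is to begin from the integral formula~\eqref{kapsta4} for the $\ks$-product and to rewrite it, exactly as~\eqref{star2} was recast into~\eqref{star3}, in the half-transformed form
\[
(f\ks g)(\alpha,\beta)=\frac{1}{\sqrt{2\pi}}\int dv\,\tilde f(v,\beta)\,g(\alpha,e^{-v/\kappa}\beta)\,e^{i\alpha v},
\]
where $\tilde f(v,\beta)=\frac{1}{\sqrt{2\pi}}\int d\alpha\,f(\alpha,\beta)e^{-iv\alpha}$ is the Fourier transform of $f$ in its first variable. Fix $(\alpha,\beta)\in\mathbb R^{2}$ once and for all. Since $f\in\CB$, the function $v\mapsto\tilde f(v,\beta)$ is smooth with compact support, so all $v$-integrals below run over a fixed compact set $K$.

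Next I would exploit that $g(\alpha,\cdot)$ is entire. Put $G(u)=g(\alpha,e^{u}\beta)$; as a composition of entire functions $G$ is entire on $\mathbb C$, and the chain rule gives $G'(u)=\big((\beta\partial_\beta)g\big)(\alpha,e^{u}\beta)$, whence inductively $G^{(n)}(u)=\big((\beta\partial_\beta)^{n}g\big)(\alpha,e^{u}\beta)$ and in particular $G^{(n)}(0)=\big((\beta\partial_\beta)^{n}g\big)(\alpha,\beta)$. Being entire, $G$ equals its Taylor series about $0$, with convergence uniform on compact subsets of $\mathbb C$; hence
\[
g(\alpha,e^{-v/\kappa}\beta)=\sum_{n\ge 0}\frac{(-1)^{n}v^{n}}{\kappa^{n}n!}\big((\beta\partial_\beta)^{n}g\big)(\alpha,\beta),
\]
uniformly for $v$ in the compact set $K$.

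Substituting this series into the half-transformed formula and interchanging the sum with the $v$-integral — which is legitimate precisely because the integration is over the compact set $K$, on which the series converges uniformly and $\tilde f(v,\beta)e^{i\alpha v}$ is bounded — reduces the claim to the elementary identity
\[
\frac{1}{\sqrt{2\pi}}\int dv\,v^{n}\,\tilde f(v,\beta)\,e^{i\alpha v}=(-i)^{n}\,\partial_\alpha^{n}f(\alpha,\beta),
\]
which follows by differentiating the inversion formula $f(\alpha,\beta)=\frac{1}{\sqrt{2\pi}}\int dv\,\tilde f(v,\beta)e^{iv\alpha}$ $n$ times under the integral sign. Collecting prefactors, $(-1)^{n}(-i)^{n}=i^{n}$, and one obtains exactly $\sum_{n\ge 0}\frac{i^{n}}{\kappa^{n}n!}\,\partial_\alpha^{n}f(\alpha,\beta)\,(\beta\partial_\beta)^{n}g(\alpha,\beta)$, as asserted.

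The recasting in half-transformed form and the Fourier-inversion identity are entirely routine; the one place that needs genuine care is the interchange of summation and integration, and this is exactly where the hypothesis that $g$ is entire in $\beta$ (rather than merely smooth) is used, via the local uniform convergence of the Taylor series of $G(u)=g(\alpha,e^{u}\beta)$ together with the compactness of the $v$-support of $\tilde f(\cdot,\beta)$ coming from $f\in\CB$. For more general $f,g$ one would instead need quantitative bounds on $\partial_\alpha^{n}f$ and $(\beta\partial_\beta)^{n}g$ in order to control the tail of the series, and it is in this sense that the restricted hypotheses shorten the argument; adapting the proof amounts to supplying such estimates.
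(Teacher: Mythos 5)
Your proof is correct and follows essentially the same route as the paper's: rewrite the product in the half-transformed form $\frac{1}{\sqrt{2\pi}}\int dv\,\tilde f(v,\beta)g(\alpha,e^{-v/\kappa}\beta)e^{i\alpha v}$, expand $g(\alpha,e^{-v/\kappa}\beta)$ in its Taylor series in $v$ using entirety, interchange sum and integral via uniform convergence on the compact $v$-support of $\tilde f$, and finish with the Fourier-inversion identity $\frac{1}{\sqrt{2\pi}}\int dv\,v^{n}\tilde f(v,\beta)e^{i\alpha v}=(-i)^{n}\partial_\alpha^{n}f(\alpha,\beta)$. The only difference is that you spell out the justification of the termwise integration slightly more explicitly than the paper does.
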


\begin{proof}
First rewrite \eqref{kapsta4} as
$$
(f\ks g)(\alpha,\beta) = \frac{1}{\sqrt{2\pi}} \int dv\tilde
f(v,\beta)g(\alpha,e^{-\frac{v}{\kappa}}\beta)\,e^{i\alpha v}\,.
$$
By analyticity of $g(\alpha,e^{-v}\beta)$ in $v$ we have
$$
g(\alpha,e^{-\frac{v}{\kappa}}\beta) = \sum_{n=0}^\infty
\frac{(-1)^n}{\kappa^n n!}v^n (\beta\partial_\beta)^n g(\alpha,\beta)\,.
$$
Inserting this into the previous equation and using that the series is
uniformly convergent on the compact set $K_f$ we get
$$
(f\ks g)(\alpha,\beta) = \frac{1}{\sqrt{2\pi}} \sum_{n=0}^\infty\frac{(-1)^n}{\kappa^n n!} \int dv\tilde
f(v,\beta)v^n (\beta\partial_\beta)^n g(\alpha,\beta)\,e^{i\alpha v}\,.
$$ 
Now, use
$$\frac{1}{\sqrt{2\pi}}\int \tilde f(v,\beta)v^n e^{i\alpha v} =
(-i \partial_\alpha)^n f(\alpha,\beta),$$
to conclude the proof. 
\end{proof}

\section{Conclusions}\label{sec:5}

The star product formulation of the $\kappa$-Minkowski algebra
presented in this paper has potential advantages with regard to future
developments. It is a  basis-independent  
construction realized as a function space with a richer structure than
the algebraic version, and with a
simpler analytic form of the product than in previous approaches. 

We consider it as first step towards the construction of a geometry on $\kappa$-Minkowski
space in the sense of spectral triples. A primary goal will be to study the
equivariant representations of the algebra $\CB$ and to look for equivariant Dirac operators. 
The existence of the invariant twisted trace on $\CB$ suggests that the geometry of 
$\kappa$-Minkowski space might be closer to the case of quantum groups ($q$-deformations)
than originally believed. In particular, the failure of the spectral triple construction
for the compactified version of $\kappa$-Minkowski space is possibly related to 
this fact, and the remedy might be to look  for twisted spectral geometries.

Furthermore, there are interesting relations between the star product formulation of 
$\kappa$-Minkowski space and the deformations of Rieffel
\cite{Rie} determined by actions of $\mathbb R^d$. We postpone the discussion of these
issues, as well as extensions to higher dimensions, to a future publication.

\section{Appendix} \label{App}

\smallskip

The purpose of this appendix is to show that the definitions
\eqref{genprod} and \eqref{geninv} of multiplication and inversion on
$\CC$ are independent of the choice of the functions functions
$\chi_f$ and $\chi_F^1$ satisfying the stated properties and to prove
\eqref{supprod}, \eqref{genass} and \eqref{antihom}.

\bigskip

{\sl\large The support of $m_*F\,.$}

\smallskip

Let $F\in\CC\ts \CC$.
First, observe that by the definition \eqref{genprod} of $m_*$ and the
ensuing convergence arguments we have, for $\varphi\in\CS(\mathbb R^2)$, 
\be\label{A1}
m_*F(\varphi) = \int d\alpha'\int dv'\int d\beta d\alpha
\chi^1_F(v')F(\alpha+\alpha',\alpha,\beta,e^{-v'}\beta)\varphi(\alpha,\beta)e^{-i\alpha'v'}\,.
\ee
For fixed $v',\beta\in\mathbb R$ and $\xi, \eta\in\CS(\mathbb R)$ we have
$$
\int d\alpha' d\alpha F(\alpha+\alpha',\alpha,\beta, e^{-v'}\beta)\CF\xi(\alpha')\CF\eta(\alpha)
= \int du du'\tilde F(u,u',\beta,e^{-v'}\beta)\xi(u)\eta(u+u')\,.
$$
This vanishes if $\eta(u+u')=0$ for all $(u,u')\in K_F$. Since this
holds for arbitrary $\xi\in\CS(\mathbb R)$ it follows that 
$$
\int d\alpha
F(\alpha+\alpha',\alpha,\beta,e^{-v'}\beta)\tilde\varphi(\alpha,\beta) =0\,,
$$
if $\varphi(u+u',\beta)=0$ for all $(u,u')\in K_F$.
Hence we get from \eqref{A1} that $\widetilde{m_*F}(\varphi) =
m_*F(\tilde\varphi)=0$ if $\varphi(u+u',\beta)$ vanishes for
$(u,u')\in K_F$ for arbitrary $\beta$. This proves \eqref{supprod}.  

\bigskip

{\sl\large Independence of the $\chi$-functions.}

\smallskip

Let $f\in\CC$ and write 
$$
f^* = f^*_{1R} + f^*_{2R}\,,
$$ 
where $f^*_{1R}$ and $f^*_{2R}$ are given by \eqref{f1star} and
\eqref{f2star}, respectively, with $\zeta(\alpha')$  replaced by
$\zeta_R(\alpha')=\zeta_1(\frac{\alpha'}{R})$, and where $\zeta_1$ is a
smooth function of compact support that equals $1$ on a neighborhood
of $0$. Choosing $N$ in \eqref{f2star} sufficiently large, it follows
from \eqref{pol} that $f^*_{2R}$ converges to $0$ uniformly on compact
subsets of $\mathbb R^2$ as $R\to\infty$. Hence, $f^*_{1R}$ converges uniformly to
$f^*$ on compact subsets of $\mathbb R^2$. As the reader may easily verify, this also holds if we set 
$\zeta_1 = \CF(\zeta)$, where $\zeta$ is a smooth function with 
support contained in $[-1,1]$ such that $\int_{-\infty}^\infty \zeta(v)dv=1\,,$ since in
this case 
$$
\zeta_R(v)= R\CF(\zeta(Rv))
$$
converges uniformly to $1$ on compact subsets of $\mathbb R$ as
$R\to\infty$. With this choice of $\zeta_R$ we have
$$
f_{1R}^*(\alpha,\beta) =\frac{1}{\sqrt{2\pi}}\int dv \chi_f(-v)\int du\,
R\zeta(R(v-u))\bar{\tilde f}(-u,e^{-v}\beta)\,e^{i\alpha u}\,.
$$
Since the support of $u\to\zeta(Ru)$ is contained in $[-\frac{1}{R},\frac{1}{R}]$
it follows that the last integral vanishes for all $v$ outside 
any given distance $\delta>0$ from $-K_f$ if $R>\frac{1}{\delta}$. 
This proves that the integral defining $f^*$ only depends on the values of
$\chi_f$ in any neighborhood of $K_f$ as desired.

The proof that $m_*F,\,F\in\CC_2,$ only depends on the values of $\chi_F^1$ in any
neighborhood of the projection of $K_F$ onto the first axis is
essentially identical to the preceding argument and we skip further details.

\bigskip

{\sl \large Associativity of the product.}

\smallskip

We consider $m_*$ given by \eqref{genprod} and want to verify the
relation \eqref{genass}.
For $G\in\CC_3$ we have by \eqref{genprod}
$$
(m_*\ts 1)G(\alpha_1,\alpha_2,\beta_1,\beta_2) =
\frac{1}{2\pi}\int d\alpha'_1\int dv_1
\,\chi_G^1(v_1)G(\alpha_1+\alpha_1',\alpha_1,\alpha_2,\beta_1,e^{-v_1}\beta_1,\beta_2)e^{-i\alpha'_1
  v_1}
$$ 
and
\be\label{int1} 
\begin{aligned}
&m_*(m_*\ts 1)G(\alpha,\beta) = \frac{1}{(2\pi)^2}\int d\alpha'_2\int dv_2\int d\alpha'_1\int dv_1\\
&~~~~~~~~~~~\chi_G^1(v_1)\chi_G^{++}(v_2)G(\alpha+
\alpha_1'+\alpha'_2,\alpha+\alpha'_2,\alpha,\beta,e^{-v_1}\beta,e^{-v_2}\beta)e^{-i(\alpha'_1v_1+\alpha'_2
  v_2)}\,,
\end{aligned}  
\ee
where $\chi^{++}_G$ is a smooth function of compact support that
equals $1$ on a neighborhood of the set
$\{v_1+v_2\mid\;(v_1,v_2,v_3)\in K_G\; \mbox{for some $v_3\in\mathbb R$}\}$.
Similarly, we get
\be\label{int2}
\begin{aligned}
&m_*(1 \ts m_*)G(\alpha,\beta) = \frac{1}{(2\pi)^2}\int d\alpha'_1\int dv_1\int d\alpha'_2\int dv_2\\
&~~~~~~~~~~~\chi_G^1(v_1)\chi_G^2(v_2)G(\alpha+\alpha'_1,\alpha+\alpha'_2,\alpha,\beta,e^{-v_1}\beta,e^{-(v_1+v_2)}\beta)e^{-i(\alpha'_1v_1+\alpha'_2 v_2)}
\end{aligned} 
\ee
Now, rewrite \eqref{int1} as
$$ 
\begin{aligned}
&m_*(m_*\ts 1)G(\alpha,\beta) = \frac{1}{(2\pi)^2}\int d\alpha'_2\int
dv_2\int d\alpha'_1\int dv_1\\
&~~~~~~~~~~~\chi_G^1(v_1)\chi_G^{++}(v_2)G(\alpha+
\alpha_1',\alpha+\alpha'_2,\alpha,\beta,e^{-v_1}\beta,e^{-v_2}\beta)e^{-i(\alpha'_1v_1+\alpha'_2
  (v_2-v_1))}
\end{aligned}  
$$
and insert convergence factors $\zeta_R(\alpha_1')\zeta_R(\alpha'_2)$ to justify interchange 
of integrations to obtain
$$ 
\begin{aligned}
&m_*(m_*\ts 1)G(\alpha,\beta) = \frac{1}{(2\pi)^2}\int d\alpha'_1\int dv_1\int d\alpha'_2\int dv_2\\
&~~~~~~~~~~~\chi_G^1(v_1)\chi_G^{++}(v_1+v_2)G(\alpha+
\alpha_1',\alpha+\alpha'_2,\alpha,\beta,e^{-v_1}\beta,e^{-(v_1+v_2)}\beta)e^{-i(\alpha'_1v_1+\alpha'_2
  v_2)}\,.
\end{aligned}  
$$
By an argument similar to the one proving independence of $f^*$ on the
choice of $\chi_f$ above, we may in this integral replace the function
$\chi_G^1(v_1)\chi_G^{++}(v_1+v_2)$ by any smooth function of compact
support that equals $1$ on a neighborhood of the set
$\{(v_1,v_2)\mid\;(v_1,v_2,v_3)\in K_G\; \mbox{for some $v_3\in\mathbb
  R$}\}$. Since this holds for the function
$\chi_G^1(v_1)\chi_G^2(v_2)$ we conclude that the integrals
\eqref{int1} and \eqref{int2} are equal as desired.
\newpage

{\sl\large The $^*$-operation is an antihomomorphisn.}

\smallskip

Let $F\in\CC_2$ and let  $\chi^+_F, \chi_F^1, \chi_F^2$  denote smooth functions of compact support
that equal $1$ on the $\alpha$-support of $m_*F$ and on the
projections of $K_F$ onto the first and second coordinate axis,
respectively.  Using definitions
\eqref{genprod} and \eqref{geninv} we then have 
\be\label{LHS}
\begin{aligned}
(m_*F)^*(\alpha,\beta) = &\frac{1}{(2\pi)^2}\int d\alpha'_2\int dv_2\int
d\alpha'_1\int dv_1\\&\chi^1_F(v_1)\chi^+_F(-v_2)\bar F(\alpha+\alpha'_1,\alpha
+\alpha'_2,e^{-v_2}\beta,e^{-(v_1+v_2)}\beta)
  e^{i\alpha'_1v_1-i\alpha'_2(v_1+v_2)}
\end{aligned}
\ee

For the right-hand side of \eqref{antihom}, on the other hand, we get
\be\label{RHS}
\begin{aligned}
&m_*((F^*)^\wedge)(\alpha,\beta) = \\&\frac{1}{(2\pi)^3}\int d\alpha'\int dv\int
d\alpha'_2\int dv_2\int d\alpha'_1\int dv_1\,\chi^2_F(-v)\chi^2_F(-v_2)\chi^1_F(-v_1)\\
&\qquad~~~~~~~\bar F(\alpha
+\alpha'_1,\alpha'_2,e^{-(v_1+v)}\beta,e^{-v_2}\beta)e^{-i(\alpha'_1v_1+\alpha'_2v_2)+i\alpha
    v_2}e^{i\alpha'(v_2-v)}\,.
\end{aligned}
\ee
Inserting convergence factors
$\zeta_{R_1}(\alpha'_1)\zeta_{R_2}(\alpha'_2)\zeta_{R}(\alpha')$ into
the last integral we recover its value in the limit $R,R_1,R_2\to
\infty$ by the same arguments as above. By performing the $\alpha'$-integration first in the
regularized integral we obtain
$$
\begin{aligned}
&~~~ \frac{1}{(2\pi)^{5/2}}\int dv\int
d\alpha'_2\int dv_2\int d\alpha'_1\int dv_1\chi^2_F(-v)\chi^2_F(-v_2)\chi^1_F(-v_1)\,\\
&\zeta_{R_2}(\alpha'_2)\zeta_{R_1}(\alpha'_1)\bar F(\alpha
+\alpha'_1,\alpha'_2,e^{-(v_1+v)}\beta,e^{-v}\beta)e^{-i(\alpha'_1v_1+\alpha'_2v_2)+i\alpha
    v}\CF(\zeta_{R})(v-v_2)\,,
\end{aligned}
$$ 
and in the limit $R\to\infty$ this gives
$$
\begin{aligned}
&~~~ \frac{1}{(2\pi)^{2}}\int
d\alpha'_2\int dv_2\int d\alpha'_1\int dv_1\,\chi^2_F(-v_2)^2\chi^1_F(-v_1)\\
&\zeta_{R_2}(\alpha'_2)\zeta_{R_1}(\alpha'_1)\bar F(\alpha
+\alpha'_1,\alpha'_2,e^{-(v_1+v_2)}\beta,e^{-v_2}\beta)e^{-i(\alpha'_1v_1+\alpha'_2v_2)+i\alpha
    v_2}\,.
\end{aligned}
$$
A simple change of variables now yields 
$$
\begin{aligned}
&m_*((F^*)^\wedge)(\alpha,\beta) = \lim_{R_1,R_2\to\infty}\frac{1}{(2\pi)^{2}}\int
d\alpha'_2\int dv_2\int d\alpha'_1\int dv_1\,\chi^2_F(-v_1-v_2)^2\chi^1_F(v_1)\\
&\zeta_{R_2}(\alpha'_2)\zeta_{R_1}(\alpha'_1)\bar F(\alpha
+\alpha'_1,\alpha+\alpha'_2,e^{-v_2}\beta,e^{-(v_1+v_2)}\beta)e^{i\alpha'_1v_1-i\alpha'_2(v_1+v_2)}\,,
\end{aligned}
$$
Repeating previous arguments we see by choosing $\zeta_R$ such that $\CF(\zeta_R)$ has
support in $[\frac 1R, \frac 1R]$ that in the limit above the function
$\chi_F^2(-v_1-v_2)^2\chi_F^1(v_1)$ can be replaced by any smooth function
of compact support that equals $1$ on a neighborhood of the set $\{(v_1,v_2)\mid (v_1,-v_1-v_2)\in
K_F\}$ without changing the value of the limit. Since this holds, in particular, for the function
$\chi^1_F(v_1)\chi^+_F(-v_2)$ we conclude that the limit equals
\eqref{LHS}. This proves \eqref{antihom}.
 
\bigskip

{\bf Acknowledgement} ~  This work was supported in part by a Marie Curie Transfer 
of Knowledge project MTKD-CT-42360 and the Polish Government grant 1261/7.PRUE/2009/7.


\end{document}